\newcommand{\G}{\mathcal{G}}
\newcommand{\TVG}{\ensuremath{\G=(V,E, \mathbb{T},\rho)}}
\newcommand{\Explore}{\textsc{Explore}\xspace}
\newcommand{\CExplore}{\textsc{CautiousExplore}\xspace}
\newcommand{\sBounce}{\textsf{Bounce}\xspace}
\newcommand{\sReturn}{\textsf{Return}\xspace}
\newcommand{\pMeeting}{\textsl{meeting}}
\newcommand{\pSees}{\textsl{sees}}
\newcommand{\dLeft}{\textit{left}\xspace}
\newcommand{\dRight}{\textit{right}\xspace}
\newcommand{\dNil}{\textit{nil}\xspace}
\newcommand{\vTtime}{\ensuremath{Ttime}\xspace}
\newcommand{\vTsteps}{\ensuremath{Tnodes}\xspace}
\newcommand{\vEtime}{\ensuremath{Etime}\xspace}
\newcommand{\vEsteps}{\ensuremath{Enodes}\xspace}
\newcommand{\vMtime}{\ensuremath{EMtime}\xspace}
\newcommand{\BH}{{\sc Bh}\xspace}
\newcommand{\BHS}{{\sc Bhs}\xspace}
\newtheorem{observation}{Observation}
\newcommand{\ic}{\ic}
\newcommand{\tc}{\tc}
\newcommand{\BHP}{{\sc BHS-Problem}\xspace}
\newcommand{\R}{{\sc Retroguard}\xspace}
\newcommand{\A}{{\sc Avanguard}\xspace}
\newcommand{\Leader}{{\sc Leader}\xspace}
\newcommand{\RT}{{\sf CautiousPendulum}\xspace}
\newcommand{\DP}{{\sf CautiousDoubleOscillation}\xspace}
\newcommand{\GL}{{\sf Gather\&Locate}\xspace}
\newtheorem{theorem}{Theorem}
\newtheorem{lemma}[theorem]{Lemma}
\newtheorem{definition}{Definition}
\title{Tight Bounds for Black Hole Search in Dynamic Rings} 
\author{Giuseppe Antonio Di Luna$^{\dag}$, Paola Flocchini$^{*}$, Giuseppe Prencipe$^{\ddag}$, Nicola Santoro$^{\&}$ \\
\small
$\dag$:Sapienza University, $*$: University of Ottawa, $\ddag$: University of Pisa, $\&$: University of Carleton. 
} 
\date{}
\begin{document}

\maketitle
\begin{abstract}
In this paper, we start the investigation of distributed computing by mobile agents in dangerous dynamic networks.
The danger is posed by the presence in the network of a {\em black hole} (\BH), a harmful site that destroys all incoming agents without leaving any trace. The problem of determining the location of the black hole in a network,
known as {\em black hole search} (\BHS), has been extensively studied in the literature,
 but always and only assuming
that the network is static. 
At the same time, the existing results on mobile agents computing in dynamic networks
never consider the presence of harmful sites. 
In this paper we start filling this research gap by studying black hole search in temporal rings, specifically
focusing on {\em 1-interval connectivity} adversarial dynamics.

Clearly the task is dangerous for the agents, as any agent entering the \BH\ will be destroyed; the problem
is solved if within finite time at least one agent survives and knows the location of \BH.
The main complexity parameter of \BHS\ is the number of agents (called {\em size}) needed to solve
the problem; another important parameter is the number of moves (called {\em cost}) performed by the agents; in synchronous
systems, such as temporal rings, an additional complexity measure is the amount of {\em time}
until termination occurs.

Feasibility and complexity depend on many parameters; in particular: whether the agents 
start from the same safe node or from possibly distinct safe locations,
the size $n$ of the ring, whether or not $n$ is known, and the type of inter-agent communication
(whiteboards, tokens, face-to-face, visual).
In this paper, we provide a {\em complete} feasibility characterization
for all instances of those parameters; all our algorithms are size optimal.
Furthermore, we establish lower bounds on the cost (i.e., the number of moves) and time of  
size-optimal solutions for all instances of those parameters and show that our algorithms achieve those bound.\\

{\bf keywords}: Mobile agents, black hole search, dynamic ring. 
\end{abstract}

\bigskip


\section{Introduction}

\subsection{Background}

%
%
%
When computing in networked environments, 
{\em mobile agents} are used both as a theoretical computational
paradigm and as a system-supported programming platform.
These distributed mobile computing environments
are subject to
several security threats, including those posed by 
 {\em host attacks}; that is, the presence in a site of processes 
 that harm incoming agents.
A particularly dangerous host, called  {\em black
hole} (\BH), is a network site hosting a stationary
process that
disposes of visiting agents upon their arrival, leaving  no
observable trace of such a destruction.  
Notice that the existence of a 
 black hole  is not uncommon in networked systems supporting code mobility;  
 for example, both  the presence  of a  virus that trashes any incoming message (e.g., by classifying it as spam) 
and the undetectable crash failure of a site  render that site a black hole. 
 Clearly, in presence of such a harmful host, the first step must be to determine  its location.
Black Hole Search (\BHS) is the distributed problem of determining the location of the black hole by a team of system agents. The problem, also called dangerous graph exploration, is solved if, within finite
time,  at least one agent survives   knowing the location of the black hole  \cite{DobrevFPS07}.

The task to identify the \BH\ is clearly
dangerous for the searching agents and might be impossible to perform.
The research concern has been to determine under what conditions a team
of mobile agents can successfully accomplish this task.
 \BHS has been extensively investigated  in a variety of settings,  depending on the types of communication mechanisms employed by the agents, their level of synchronicity, the topology of the network, etc., and under a variety of assumptions on the agents' knowledge and capabilities
  (e.g., \cite{ChDLM11b,ChDLM13,CzKMP06,Dobrev2002,DobrevFPS07,FIS12,KMRS07,KMRS08}; for a recent survey, see \cite{MaS19}).
All these investigations share a common trait: the dangerous network on which the agents operate is {\em static}  
 i.e., the link structure does not change in time.

Recently, research within  distributed computing  has  started to focus on 
mobile-agent computing in  {\em time-varying  graphs} (a.k.a., highly dynamic graphs),  
 i.e., graphs where the topological changes 
  are not limited to  sporadic and disruptive events (such as process failures, links congestion, etc), 
  but are rather inherent in the nature of the network (e.g., see {\cite{CaFQS12}).
 A large body of literature exists on time-varying graphs as they model a wide range of 
 modern networked systems whose dynamic nature  is the natural product of  innovations 
 in communication technology (e.g., wireless networks), in software layer (e.g., a controller in a software defined network), and in society (e.g., the pervasive nature of smart mobile devices). 
The vast majority of the research on dynamic networks 
has considered  time-varying graphs with {\em discrete} temporal dynamics; that is,
the network is seen as an infinite sequence of static graphs with the same vertex set, 
and it is usually called {\em evolving}  or {\em temporal} graph.
The study of mobile agents in temporal graphs 
includes both centralized and distributed investigations
(e.g., \cite{IlcinkasW18,ErHK15,AaKM14,IlKW14,Michail2014}). 
Notice that, in such temporal graphs, the distributed computation is by definition {\em synchronous};
extensive investigations have been carried out under specific  assumptions on the discrete temporal dynamics,
including
the minimal assumption of {\em temporal connectivity} (e.g., \cite{BournatDD16,CaFMS14,GoFMS19}), the popular assumption 
of {\em 1-interval connectivity} (and its generalization of {\em T-interval connectivity}) (e.g., \cite{AbsM14,Agarwalla2018,briefdiluna,DiLFPPSV20,HaeK12,KuhLyO10,KuhLoO11,OdW05}), 
and  {\em periodicity} (e.g., \cite{AaKM14,CaFMS14,ErS20,FlMS13,JatYG14}).
 While the study of mobile agents 
on static networks is really mature, and generated a copious literature (e.g., see \cite{FlPS19}  and chapters therein),
the research on mobile agents in temporal graphs 
is still in its infancy, especially from a purely distributed perspective.
Its  focus has been on   classical
problems, such as graph exploration \cite{GoFMS19,BournatDP17,DiLDFS20}, gathering \cite{DiLFPPSV20,BournatDP18}, scattering \cite{Agarwalla2018}, and  {grouping} \cite{DaDPP19} 
under a variety of settings,  depending on the types of communication mechanisms employed by the agents, 
 the topology of the network, etc., and under a variety of assumptions on the agents' knowledge and 
 capabilities   (for a recent detailed survey  see \cite{DiL19}). 
In spite of the different settings, these investigations  share a common trait: they assume that the 
 dynamic network on which the agents operate is {\em safe}; 
 i.e., there is no  \BH.

Summarizing, practically nothing is known on distributed computing by mobile agents in 
{\em dangerous dynamic networks}.  In this paper, we start filling this research gap.

\subsection{Problem}

 In this paper, we  study \BHS\  in  a  {\em temporal  ring}
  under 
 the {\em 1-interval connectivity} adversarial dynamics.
 In other words, the network is a  synchronous ring where one of the nodes is 
 a \BH\ and, at any time unit,  one edge 
 (chosen by an adversary) is possibly missing.
 The problem to be solved  is to identify the location  of the \BH.
 The problem is solved by
a  team of mobile agents, executing the same protocol and initially deployed at safe node(s) in the network,
 if within finite time at least one agent survives and unambiguosly knows
 the location  of the \BH. 
 
 The main research questions is to determine the minimum number of
 agents needed to solve \BHS; this parameter is
  called {\em team size} or simply {\em size}.
 Another important complexity measures is the number of moves, 
 called {\em cost},  performed by the agents; 
 in synchronous
systems, such as temporal rings, an additional complexity measure is the amount of {\em time}
until termination occurs.

Feasibility and complexity depend on many parameters; in primis,  whether the agents 
start from the same safe node ({\em colocated}) or from possibly distinct safe locations ({\em scattered}),
on the size $n$ of the ring, on whether or not $n$ is known, and whether the agents have distinct ids or are
anonymous. 
A factor that is particularly important is the mechanism provided to the
agents to communicate and interact. 
In the  literature on distributed computing by mobile agents
different  models of interaction and communication  with different
capabilities  have been considered. Listed in increasing computational power, 
these models are: {\em Whiteboard}, whereas each node provides all
visiting nodes with a shared memory, called whiteboard, that can be
accessed, in fair mutual exclusion, to exchange information;
{\em Pebble} (or {\em Token}), whereas each agent has available
 a pebble that can be carried and,  when at a node,
 can be placed there or taken from there,
 the last two operations performed in fair mutual exclusion;
 {\em FaceToFace} (F2F),  where the agents can exchange information 
 only when they are in the same node at the same time;
 and  {\em Vision}, where an agent can only sense the  other agents 
 in the same node at the same time but cannot explicitly communicate
 with them.
 These models can be conveniently grouped into two classes:
   {\em endogenous} (or internal), where the agents rely only on their own internal capabilities to communicate and can do so only when   present on the same node (F2F and   Vision models);  
and   {\em exogenous} (or external), where the  agents make use of external tools (pebbles
and whiteboards) that allow them to leave traces or messages in the nodes of the network. 
Not surprisingly, 
 neither the solutions  for exploration of safe 
synchronous rings nor  the ones for \BHS\ devised for static synchronous rings can be applied
for the  exploration of a dangerous temporal ring.
Furthermore,  the existing trivial lower-bound for  \BHS\  in static synchronous rings,
that more than one agent is needed, does not provide any insight into the
complexity of the problem under investigation nor of the 
computational impact of the different parameters.

In other words, prior to this work, the feasibility and complexity of exploring
a dangerous 1-interval connected ring is an unexplored problem.

\subsection{Contributions}

In this paper, we investigate the black hole search problem  in an oriented 1-interval connected rings 
of size $n$ by a team of $k$ agents. 
We consider  both
  colocated and scattered agents, the communication mechanism that they employ,
   whether or not $n$ is known, and whether or not the agents  are
anonymous.  For each possible setting,
we   provide  a complete feasibility characterization.
Furthermore, whenever the problem is solvable,
we  establish tight bounds on  cost and time of 
a size-optimal solution.

We start   by showing  that   knowledge of $n$ is necessary for teams of any size $k$ and irrespectively of the other parameters; and, that $k < 3$ agents cannot solve the problem even in the strongest possible model (colocated agents,  and nodes   equipped with whiteboards).

 When the agents are colocated (i.e., start from the same   node), 
we   show that any optimal-size algorithm that uses endogenous communication
 requires $\Omega(n^2)$ cost and time  even in the strongest of the endogenous mechanisms, FaceToFace; 
 and we constructively prove that this bound is tight by designing a solution for the weakest of the endogenous models, Vision,  that has a move and time complexity of $\Theta(n^2)$.  With the more powerful exogenous mechanisms, we show a tight  bound of  ${\Theta}(n^{1.5})$: the lower bound holding for the strongest  Whiteboard model, the matching upper bound for the weakest one, Pebble.

When the agents are {\em scattered}   (i.e., start from distinct locations), we first observe that no optimal size algorithm can locate the black hole by using endogenous communication. 
 We then show that the scattering of agents impacts the cost complexity of optimal-size algorithms: any solution in this model has to pay a cost of $\Omega(n^2)$ rounds and moves. Also in this case the bound is tight; in fact,
we present a $\Theta(n^2)$ matching optimal-size solution algorithm for the weakest 
of the exogenous models (Pebble). 
A summary of the results is shown in Table \ref{table:res}.
\color{black}
\begin{table}
\begin{center}

\begin{tabular}{ |c|c|c|c|c| }

 \hline
 	 & \multicolumn{2}{c}{{\em Exogenous}} &  \multicolumn{2}{c|}{{\em Endogenous} } \\ \hline
	 	 &   Anonymous & IDs & Anonymous & IDs\\
 \hline  
 	{\em Colocated} & \multicolumn{2}{c|}{${\Theta}(n^{1.5})$ } &     impossible &  ${\Theta}(n^2)$\\ \hline
         {\em Scattered} & \multicolumn{2}{c|}{ ${\Theta}(n^2)$ } &  \multicolumn{2}{c|}{impossible}  \\ \hline

 \hline
\end{tabular}
\end{center}

\caption{Map of the results. \label{table:res}}
\end{table}

\subsection{Related Work}
 The existing  literature related to our research can be divided between 
that  considering  \BHS in static networks and   that  investigating  distributed 
computing by mobile agents in safe dynamic graphs. 

\smallskip
\noindent {\bf Agents in dangerous static  graphs.} 
The black hole search problem has been introduced by Dobrev {\em et al.} in their seminal paper \cite{Dobrev2002}. A panoply of papers followed \cite{Flocchini2009,balamohan2014exploring,d2013exploring,markou2012identifying,shi2018Token} solving the problem in different classes of graphs (trees \cite{CzKMP07}, rings and tori \cite{dobrev2007locating,opodis12,ChDLM11b}, and  in  graphs of arbitrary and possibly unknow topology \cite{Dobrev2002,dobrev2013exploring,CzKMP06}),  under several assumptions (see the recent survey \cite{MaS19}).   

The most relevant papers for our work are the ones investigating  the \BHS\ in static ring networks. 
In the {\em asynchronous} setting,
 optimal size and cost bounds have been established, 
   solving the problem with two  colocated agents and   $\Theta(n \log n)$   moves,
    in the whiteboard model  \cite{DobrevFPS07}, 
and subsequently in the   pebble model \cite{FIS12}. 
In \cite{ChDLM13}, it has been shown that two scattered agents with pebbles are sufficient to find a black hole on   
oriented rings with ${\cal O}(n \log n)$ moves \cite{4228188} 
and with    ${\cal O}(n^2)$   in unoriented rings in  \cite{dobrev2007locating}. 
In the {\em synchronous} setting, on the other hand, it is well known that two colocated non-anonymous agents with FaceToFace communication can solve the problem in arbitrary known graphs (and therefore on the ring) \cite{CzKMP06}. Finally, \BHS\ by scattered agents with constant memory
has been studied in \cite{ChDLM13}: in unoriented rings $3$ agents are necessary and sufficient  when equipped with movable tokens, while more agents are needed when tokens are not movable. 
All papers on black hole search  assume a static topology;
the only exception is the study of {\em carrier graphs}, a particular class of periodic graphs 
defined by circular intersecting routes  of public carriers, where the stops are the nodes of the graph and  the agents can board and disembark from a carrier at any stop \cite{FlKMS12}.

\smallskip
\noindent {\bf Agents in safe dynamic graphs.} 
The study of mobile agents in temporal graphs is rather recent and
includes both centralized and distributed investigations
(e.g., \cite{IlcinkasW18,ErHK15,AaKM14,IlKW14,Michail2014}). 
From a distributed perspective, the research has  so-far
considered three types of temporal dynamics:
\noindent{\em periodic} \cite{FlMS13},
{\em temporal connectivity} \cite{BournatDD16,BournatDP17,GoFMS19}, and 
{\em 1-interval connectivity}   \cite{Agarwalla2018,DaDPP19,DiLDFS20,DiLFPPSV20,GoFMS19};
 for a recent detailed survey  see \cite{DiL19}. 
Several of these works considered ring networks.

Specifically,  in {\em 1-interval connected  rings}: 
the  {\em gathering} problem has been investigated in \cite{DiLFPPSV20};
The {\em exploration} problem  by a set of anonymous agents has been studied in   \cite{DiLDFS20}  under several assumptions (handedness agreement, synchrony vs semi-synchrony, knowledge of $n$ vs landmark, etc...); a recent preprint \cite{m2020live} has closed some questions left-open by \cite{DiLDFS20} regarding the terminating exploration by a team of $3$ agents.
Always in 1-interval connected ring, recent papers investigated the problems of {\em grouping} \cite{DaDPP19} 
and {\em scattering} \cite{Agarwalla2018}.  Exploration of a {\em temporally connected ring} was examined in 
 \cite{BournatDD16,BournatDP17}. 

\section{Model and Preliminaries}

\subsection{The Model and the Problem}

The system   is modeled  as a  time-varying graph  \TVG, where  $V$ is a set of nodes, $E$ is a set of edges, $ \mathbb{T}$  is the temporal domain, and 
  $\rho: E \times \mathbb{T} \rightarrow \{0,1\}$, called {\em presence function}, indicates whether a given edge    is available at a given time \cite{CaFQS12}.

The graph $G=(V,E)$ is called {\em underlying} graph (or {\em footprint})  of $\G$.
  In this paper we consider {\em discrete} time; that is,  $ \mathbb{T} = \mathbb{Z}^+$.
Since time is  discrete,  the dynamics of the system  can be viewed  as
 a sequence of static graphs: ${\G} = G_0,G_1, \ldots,G_r, \ldots$, where $G_r =(V_r,E_r)$ is the   graph of the edges present at round $r$ (also called {\em  snapshot} at time $r$).  
 The time-varying graph  in this case is called  {\em temporal graph}  (or {\em evolving graph}). 
We use the term {\em adversary} to refer to the scheduler that decides the sequence of dynamic graphs.

A  temporal graphs  where connectivity is  guaranteed at every round  is called   {\em 1-interval connected};
that is, 
a temporal graph ${\cal G}$ is  1-interval connected (or always connected) if      $\forall G_i \in \mathcal{\G}$, $G_i$ is connected.  
 In this paper we focus on {\em dynamic rings}, defined as 1-interval connected temporal graphs whose footprint is a ring.
Let   ${\cal R} = (v_0, v_1, \ldots v_{n-1})$  be a  dynamic oriented ring, i.e., where each node $v_i$ has two ports, consistently labelled left and right connecting it to $v_{i-1}$ and $v_{i+1}$, respectively (all operations on the 
indices are modulo $n$).
A set    $A = \{ a_0, a_1, \dots, a_{k-1} \}$    of   mobile    agents    operate in  ${\cal R}$.  If the agents are initially   in the same node (called {\em home-base}),  we say that they are {\em colocated}; if they start from distinct arbitrary locations, we say that they are   {\em scattered}. 
When the agents are identical (i.e., do not have  distinct identifiers), we say that they are  {\em anonymous}. In case agents are not anonymous we assume that their identities are visibile (e.g., if several agents are on the same node they can see who is who).
 The agents  can move from node to neighbouring node, have bounded storage (${\cal O}(\log n)$ bits of internal memory suffice for our algorithms), have computing capabilities  and obey the same set of rules (i.e., execute the same algorithm).  
The agents operate in synchronous rounds, and they are all activated in  each round. Upon activation, an agent on node $v$ at round $r$
   takes  a local snapshot of $v$ that  contains  the set $E_r(v)$ of edges incident on $v$ at this round,  and  the set of agents present in $v$. The agent also interacts    with the other agents either explicitly or implicitly  (the method of interaction depends 
   on the communication mechanism employed and will be discussed later).
 On the basis of   the snapshot, the local interaction,  and the content of its local memory,
  an agent  then decides what action to take.  The action consists of a {\em communication step} (defined below) and a {\em move step}. In the move step the agent may decide to stay still or to move on an edge    $e=(v,v') \in E_r(v)$. In the latter case,  the agent will reach $v'$ in round $r +1$.  
 
The interaction among the agents is regulated by different {\em  communication mechanisms} depending on the model. 
 We consider two classes of communication mechanisms ({\em endogenous} and {\em exogenous}) which give rise to   four models.

\noindent {\bf Endogenous Mechanisms}
 rely only on the robots' capabilities without requiring  any external object. Among those we distinguish:
 
\noindent -  {\em Vision:}  the agents have no  explicit means of communication; they can only see each other when they reside on the same node. 

\noindent  - {\em FaceToFace (F2F): }  the agents can explicitly communicate among themselves  only when they reside on the same node.  
 \smallbreak
 
\noindent {\bf Exogenous Mechanisms}
do   require  external objects for the robots to exchange information. Among those we distinguish:

\noindent - {\em Pebble:} each agent is endowed with a single pebble that can be placed on or taken from a node.  On each node,  the concurrent actions of placing or taking  pebbles are done in  fair mutual exclusion.

\noindent  -{\em Whiteboard:}   each node contains a local shared memory, called
whiteboard, of size $O(\log n)$ where agents can write on  and read from.  Access to the whiteboard is done in adversarial but fair mutual exclusion.

Notice that  the mutual exclusion nature of 
the Pebble and Whiteboard models 
 allows  anonymous colocated agents to break the symmetry and assume different Ids.     


The temporal graph $\G$ contains a {\em black hole} (\BH), a node 
that destroys any incoming agent without leaving any detectable trace of that
destruction. The goal of a {\em black hole search} algorithm ${\cal A}$ is
to identify the location of the black hole,
that is:

%
%

\begin{definition}(\BHS)
Given a dynamic ring ${\cal R}$, and an algorithm ${\cal A}$ for a set of agents we say that ${\cal A}$ solves the \BHS if   at least one agent survives and terminates. Each agent that terminates hast to know the footprint  of ${\cal R}$  with the indication of  the location of the backhole. 
\end{definition}

The main measure of complexity is the number of agents, called {\em size}, used by the protocol.
The other important cost measures are the total number of moves performed by the agents, which we shall call {\em cost}, and {\em time} it takes to complete the task. 
 
In Figure \ref{fig:p0} are shown (a) four rounds of an execution in a dangerous dynamic ring,
 and (b) the space diagram representation that we will use in this paper.  
 The agent is represented as the black quadrilateral and it is moving clockwise; the  \BH\ is the black node. At round $r=2$ and $r=3$ the agent is blocked by  the missing edge. In the diagram,   the movement of the agent is represented as a solid line.

 \begin{figure}[tbh]
\center
  \begin{subfigure}[b]{0.48\textwidth}
  \center
    \includegraphics[width=\textwidth]{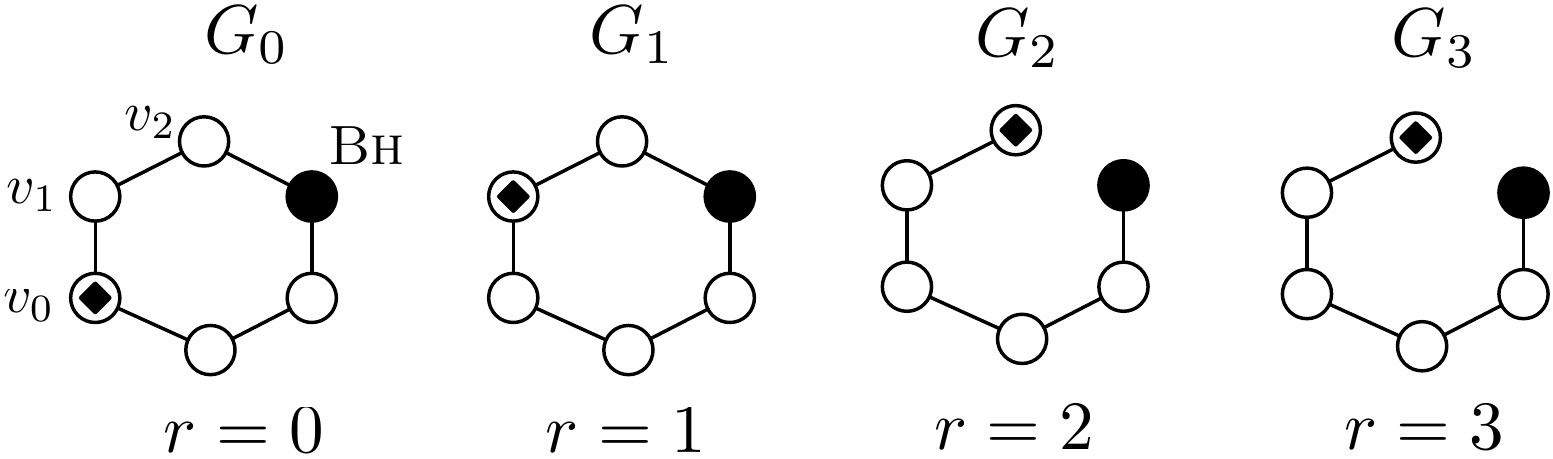}
    \label{graph1}
  \end{subfigure}
  \,\,
  \begin{subfigure}[b]{0.48\textwidth}
  \center
    \includegraphics[width=\textwidth]{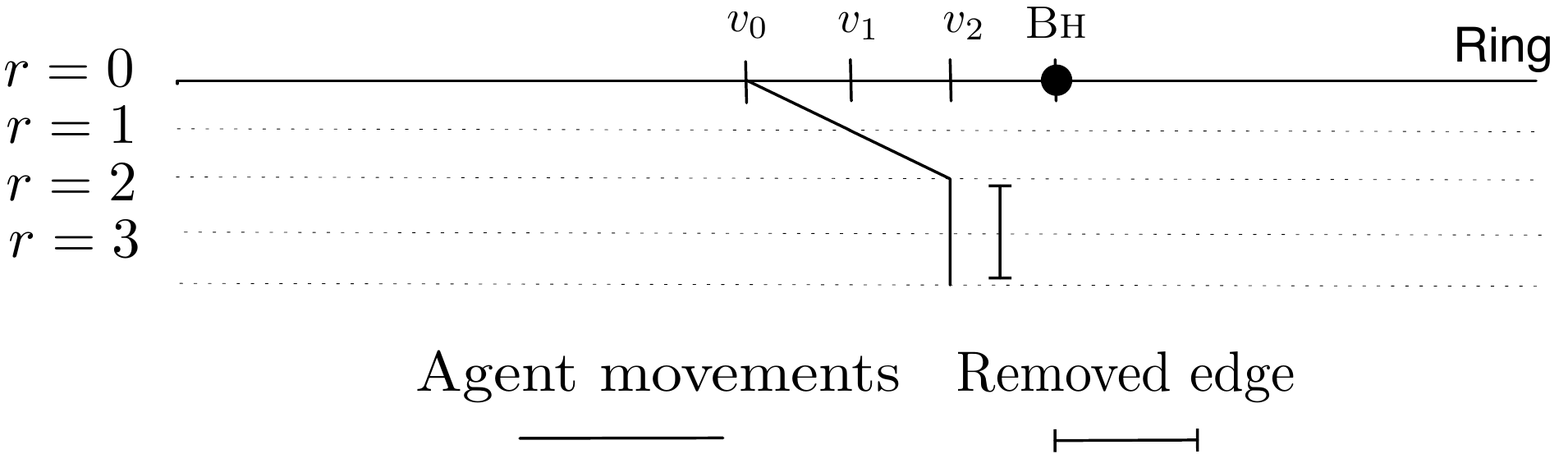}
    \label{graph2}
  \end{subfigure}
  \caption{(a) Execution in a dangerous dynamic ring, and 
  (b) its space diagram representation.}\label{fig:p0}
\end{figure}
\vspace{-0.5cm}

\section{Impossibilities and Basic Limitations}

In this section we show some general impossibility results that hold in the whiteboard model and hence under all communication mechanisms considered in this paper. More precisely, we establish that  three agents are necessary to locate the \BH and that, irrespectively of the number of agents available, the size of the ring must be known.

\begin{lemma}\label{lm:technical}
Let  ${\cal R}$  be a dynamic ring    of size $n>3$. 
Let the agents know that the black hole is located in one of three consecutive nodes 
 $H=\{v_1,v_2,v_3\}$ (different from the home-base).
It  is impossible for two colocated  agents to locate the black hole and terminate.
The impossibility holds even if the nodes are equipped with whiteboards, the agents  have distinct  IDs, and the ring is oriented.  
 \end{lemma}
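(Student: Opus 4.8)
The plan is to argue by an adversary that \emph{adaptively} chooses both the location of the \BH\ inside $H=\{v_1,v_2,v_3\}$ and the edge-presence schedule, and to show that against any algorithm for two colocated agents it can force either that no agent survives or that no surviving agent can ever tell the \BH\ apart from two distinct candidates. First I would set up the static picture underlying every execution: since the home-base lies outside $H$ and the three candidates are consecutive, the nodes outside $H$ form a single \emph{safe arc} $P=(v_4,v_5,\dots,v_0)$ bounded by the two \emph{gateway edges} $e_L=(v_0,v_1)$ and $e_R=(v_3,v_4)$. Agents move freely and safely inside $P$, and the only way to acquire information about the \BH\ is to cross a gateway into $H$; moreover, because destruction leaves no trace, a surviving agent can be \emph{certain} that a candidate $c$ is safe only if some agent has entered $c$, survived, and this fact has reached a still-living agent. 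Since the \BH\ can be located only by elimination, I would record the structural necessity that, to output a correct answer, the agents must collectively \emph{clear} (enter-and-survive-with-a-witness) two of the three candidates.

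The engine of the proof is a two-world indistinguishability argument combined with the sealing power granted by $1$-interval connectivity. I would fix two worlds that differ only at the middle candidate and one end, say $U_2$ with the \BH\ at $v_2$ and $U_3$ with the \BH\ at $v_3$; the nodes $v_2$ and $v_3$ are the only places where the two worlds behave differently, so the agents' observable histories coincide until an agent enters one of them, and each such entry is fatal in exactly one of the two worlds. The decisive geometric observation is that, to combine a clearing obtained at the left gateway with one obtained at the right gateway inside a single surviving agent (exactly what certifying $v_2$ as the \BH\ requires), that agent must either traverse the interior of $H$ — thereby entering $v_2$, which is fatal precisely when $v_2$ is the \BH\ — or cross an edge that the adversary keeps removed. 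Using the fact that a ring minus one edge is still connected, the adversary may keep a single chosen edge absent for arbitrarily many consecutive rounds, maintaining a ``wall'' that prevents two separated agents from meeting, or from reaching both gateways, without ever violating $1$-interval connectivity.

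With these ingredients I would run the given algorithm against the adversary and split on whether any agent ever attempts the distinguishing entry into the interior of $H$. If no agent ever does, the histories in $U_2$ and $U_3$ are identical forever, every surviving agent produces the same output in both worlds, and that output is wrong in one of them. If some agent does make the distinguishing entry, the adversary commits to the world in which that entry is fatal, destroying the probing agent; I would then argue that the remaining agent is confined — by the maintained wall together with the now-lethal middle node — to a region from which it can reach at most one further clearable candidate, so it can never shrink the set of possible \BH\ locations below two, forcing it either to terminate with an unjustified guess or to run forever. Finally I would observe that whiteboards do not help, because any message witnessing the second clearing sits on a node the surviving agent cannot reach without crossing the wall or the \BH; that distinct IDs do not help, because the obstruction is positional rather than a symmetry to be broken; and that orientation is irrelevant, since the obstruction is the adversary's sealing.

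The step I expect to be the main obstacle is making the adversary's sealing schedule simultaneously consistent with \emph{both} worlds while respecting the one-edge-per-round budget. The delicate point is the ``delayed report'' ambiguity: in the world where the probing agent survives it may try to carry the clearing back across a gateway, and preventing this from reaching the designated survivor seems to call for a second simultaneously-removed edge. Resolving this is exactly where the geometric bridging fact does the work — the return path either re-enters the lethal middle node or is already blocked by the single wall the adversary maintains — so the whole argument hinges on choosing the wall's position and the moment of commitment so that every attempt to transport the second clearing is routed through the \BH. Care is also needed for small $n$ (in particular $n=4$), where the safe arc is too short to host an interior wall and the sealing must instead be applied to a gateway edge; I would treat this boundary case separately, using the same indistinguishability scheme but with the wall realized at $e_L$ or $e_R$.
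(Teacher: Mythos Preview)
Your high-level instincts (seal a gateway, exploit indistinguishability) are right, but the concrete two-world scheme you propose does not close. The problem is your commitment rule: you commit to whichever of $U_2,U_3$ is fatal for the \emph{first} distinguishing entry, then argue the survivor is ``confined'' and can only make an ``unjustified guess''. But an unjustified guess can still be correct in the world you committed to, and with only $U_2,U_3$ you have no alternative world in which that same guess is wrong. Concretely: suppose the first distinguishing entry is $b$ stepping into $v_3$; you commit to $U_3$ and kill $b$. Now $a$, still outside $H$ or at $v_1$, can enter $v_1$ and then $v_2$ (both safe in $U_3$) and correctly output $v_3$ --- your single wall cannot simultaneously block $a$ from the left gateway and keep the rest of the picture consistent, and in any case $a$'s eventual view after visiting $v_3$-safe's complement is not consistent with $U_2$ (where $b$ would have survived $v_3$). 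Symmetrically, after committing to $U_2$ the survivor's residual ambiguity is between $v_1$ and $v_2$, and to refute a guess of $v_2$ you would need $U_1$, which you never introduced. The ``delayed report'' difficulty you flag is real, but your proposed resolution (route the return through the \BH\ or the wall) does not cover the case where the survivor approaches from the \emph{other} gateway.

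The paper's argument is both simpler and avoids this trap entirely. It does not maintain a global two-world simulation nor delay commitment. Instead, the moment the first agent (say $a$) enters $H$ at $v_1$, the adversary removes $e_L=(v_0,v_1)$ and keeps it removed; no further adaptivity is needed. Now $a$ is sealed on the $\{v_1,\dots\}$ side and $b$ on the safe arc, and they can never communicate again. Using correctness of the algorithm (not post-hoc indistinguishability), each agent is \emph{forced} to eventually step into $v_2$: $a$ because from $v_1$ its view is consistent with the \BH\ at either $v_2$ or $v_3$ and $v_2$ is its only reachable unresolved node; $b$ because it must go around to $v_3$, after which its view is consistent with the \BH\ at either $v_1$ or $v_2$ and again $v_2$ is its only reachable unresolved node. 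Placing the \BH\ at $v_2$ then kills both. Note that this implicitly uses all three candidate locations to constrain behaviour (e.g.\ $b$ cannot wait forever at $v_0$ because the \BH\ might be $v_1$), which is exactly what your two-world setup lacks. The argument is uniform in $n>3$; no separate small-$n$ case is needed.
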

\begin{proof}
Let $a$ and $b$ be the two agents. 
By contradiction, let ${\cal A}$ be  an algorithm that correctly locates the black hole regardless of the pattern of edge  disappearance   in the ring.  
Note that  the two agents cannot visit for the first time a node in $H$ travelling on the same edge at the same round, otherwise the adversary 
would place the black hole in that node killing both.
At least one of the agents must move to visit $H$. Let us assume, w.l.g., that   $a$ is the first to reach $H$ (to  visit   $v_1$) at some round $r$ or that both agents reach $H$ at  round $r$
($a$ visiting $v_1$ and $b$ visiting $v_3$ from the other side).
 At this point  the adversary, regardless of the position of $b$,  removes edge $e=(v_0,v_1)$. 
Note that, while the edge is missing, the two agents cannot  communicate   because they are disconnected on one side by the missing link and on the other by the black hole. 
Should agent $a$ survive, it has no choice but  visiting $v_2$ (from $v_1$) to determine whether the black hole is in $v_2$ or in $v_3$.
Agent $b$ cannot wait for ever in $v_0$ because $e$ might be permanently missing (and leading to the black hole), and it has to reach $H$ from the other side.  
Once    $b$  reaches $v_3$, if it does not die  it cannot avoid  visiting $v_2$  to determine whether the black hole is in $v_1$ or in $v_2$. Hence, within finite time they would both enter $v_2$, albeit at different moments. By choosing the black hole to be  in $v_2$, both agents  die,    contradicting the correctness of the algorithm.  
The adversary can now reactivate    edge $e$.

 \end{proof}
 From the above technical lemma is immediate that:

\begin{theorem}\label{ic2impossibile}
In a dynamic ring    of size $n>3$,
two colocated agents  cannot solve the \BHS.
%
 The impossibility holds even  if    the agents have unique IDs, 
 and   are equipped with the strongest ({\em Whiteboard}) communication  model.
\end{theorem}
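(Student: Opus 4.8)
The plan is to derive the theorem directly from Lemma~\ref{lm:technical} by a reduction, exploiting the principle that extra knowledge can only help the agents. I would argue by contradiction: suppose some algorithm $\mathcal{A}$ lets two colocated agents solve the \BHS\ on every oriented dynamic ring of size $n>3$ whose nodes carry whiteboards, regardless of where the adversary hides the \BH\ and of the edge dynamics. Since $n>3$, the ring contains three consecutive nodes $H=\{v_1,v_2,v_3\}$, none of which is the home-base (for $n=4$ these are exactly the three non-home-base nodes, with the home-base playing the role of $v_0$). The instances in which the adversary places the \BH\ in $H$ and schedules the edges as in the lemma are legitimate instances of the general problem, so $\mathcal{A}$ must solve all of them.

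The crux is that knowledge can only help. From the general solver $\mathcal{A}$ one obtains, for free, a solver for the more informed problem of Lemma~\ref{lm:technical}: the agents simply run $\mathcal{A}$ and discard the promise that the \BH\ lies in $H$. Because $\mathcal{A}$ succeeds on every instance with the \BH\ in $H$, this execution also succeeds --- at least one agent survives and terminates with the correct location --- which is exactly an algorithm of the kind Lemma~\ref{lm:technical} rules out. This contradiction proves the theorem; the whiteboard, distinct-IDs, and orientation assumptions are inherited directly from the lemma, so the impossibility already holds in the strongest model.

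The one point that needs care --- and the main, if modest, obstacle --- is getting the direction of the reduction right: Lemma~\ref{lm:technical} proves impossibility \emph{when the agents are promised} that the \BH\ is confined to three consecutive nodes, and one must argue that impossibility under this promise entails impossibility without it. This is immediate by contraposition, since any algorithm succeeding without the promise trivially yields one succeeding with it (by ignoring the promise). It remains only to check the routine consistency conditions --- that three consecutive non-home-base nodes exist whenever $n>3$, and that the lemma's schedule (removing and later restoring edge $(v_0,v_1)$) is a valid $1$-interval connected dynamics available to the general adversary --- both of which hold at once.
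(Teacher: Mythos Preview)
Your proof is correct and follows essentially the same approach as the paper, which simply states that the theorem is immediate from Lemma~\ref{lm:technical}. You have spelled out the (straightforward) reduction in more detail than the paper does, correctly noting that any general solver yields a solver for the promise problem by ignoring the promise, which is exactly the contrapositive needed.
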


Interestingly, we can show that there is no algorithm solving \BHS\ if $n$ is unknown. Such result does not depend on the number
of agents. 
 \begin{theorem}\label{icnoNimpossible}
There exists no algorithm that solves the \BHS\ in a dynamic ring  ${\cal R}$  whose size is unknown to the agents.
The result holds even if  the nodes have   whiteboards, the agents have IDs, and  irrespectively of the number of agents.
\end{theorem}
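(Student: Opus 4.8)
The plan is to argue by contradiction via an \emph{indistinguishability} construction. Suppose $\mathcal{A}$ solves \BHS\ on every ring whose size is unknown to the agents. I would exhibit two dynamic rings $\mathcal{R}_1$ and $\mathcal{R}_2$ of different sizes $n_1\neq n_2$, together with a single adversarial schedule, such that no surviving agent can ever tell which of the two rings it is running in. Since the definition of \BHS\ forces every terminating agent to output the correct footprint (and hence the correct value of $n$), any common output will be incorrect for at least one of the two rings, yielding the contradiction.

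The core idea is to use the black hole together with \emph{one} permanently missing edge as two disjoint ``cuts'' that isolate a hidden arc of the ring. Concretely, I would lay each ring out as a cycle $s_0 \sim s_1 \sim \cdots \sim s_m \sim B \sim h_t \sim \cdots \sim h_1 \sim s_0$, where $B$ is the black hole, the nodes $s_0,\dots,s_m$ host all the home bases and all the agents (with identical identities and identical whiteboard contents in both rings), and $h_1,\dots,h_t$ form a hidden arc whose length $t$ is the \emph{only} difference between $\mathcal{R}_1$ and $\mathcal{R}_2$. The adversary keeps the single edge $e^{*}=(h_1,s_0)$ missing at every round; because a ring minus one edge is a path, each snapshot is connected, so this schedule respects $1$-interval connectivity and never needs to remove a second edge.

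Next I would verify that every agent is confined to the explorable arc $\{s_0,\dots,s_m\}$ in both rings: moving from $s_m$ toward $B$ is fatal, while moving from $s_0$ toward $h_1$ is forever blocked by $e^{*}$, and the hidden arc can be entered only through $e^{*}$ or through $B$, both impassable. Crucially, this confinement holds \emph{regardless of the number of agents}, which is precisely what lets the statement quantify over teams of any size. Since an agent's local snapshot reveals only the edges incident to its current node and the co-located agents, and since the black hole leaves no trace, the hidden arc is invisible to every surviving agent. By induction on the rounds I would then show that the states of all surviving agents and all whiteboards in $\{s_0,\dots,s_m\}$ evolve identically in the two executions; in particular termination happens at the same round with the same announcement. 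Because $\mathcal{A}$ is correct it must, in $\mathcal{R}_1$, have an agent survive and terminate at some finite round $T$ announcing the footprint of $\mathcal{R}_1$, and by indistinguishability the same agent terminates at round $T$ in $\mathcal{R}_2$ with the identical (now wrong) announcement, contradicting correctness.

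The step I expect to require the most care is making the indistinguishability fully rigorous. I must check that nothing about the hidden arc ever leaks into the explorable arc: the two boundary probes an agent might attempt---trying to traverse the always-missing $e^{*}$, or sending a companion into $B$---must produce identical observations in both rings, and the black hole's second incident edge $(h_t,B)$ must remain undetectable from the $s_m$ side. Establishing this local equivalence carefully, and confirming that the adversary's fixed schedule keeps every snapshot connected, are the technical points on which the whole argument rests.
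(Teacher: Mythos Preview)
Your proposal is correct and follows essentially the same approach as the paper: use the black hole together with one permanently missing edge to hide an arc of unknown length from the agents, so that they cannot determine the ring's size. The paper phrases this as an adaptive adversary argument (``if they terminate, reveal a larger ring; if they wait, keep the edge missing forever''), while you phrase it as an explicit two-ring indistinguishability construction; these are equivalent formulations of the same idea, and your version is arguably the more rigorous of the two.
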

\begin{proof}  
The proof is by contradiction. Let ${\cal A}$ be a correct algorithm, and let the adversary remove an   edge $e$   at round  $0$. 
For the algorithm to be correct, there must exist a round $r$ when the  the portion of the ring delimited by the black hole and by edge $e$ is fully explored  (with at least one agent dead in the black hole and the corresponding link marked as dangerous). Not knowing the size of the ring, as long as $e$ is missing,  the remaining agents cannot decide whether they have explored all the nodes of the ring and can terminate, or whether the ring is larger, the missing edge $e$ is not incident to the black hole,  and there is still a portion to explore. If they decide to terminate, the adversary  will make $e$ re-appear revealing the  unexplored part of a larger ring and
  ${\cal   A}$ would be incorrect; if instead  they decide to wait for $e$ to re-appear,   the missing edge will   be permanently missing, and  the agents will never terminate.
%
 \end{proof}

Next recall the  following obvious fact:

\begin{observation}\label{trivialid:coloc}
Anonymous colocated agents  with an endogenous communication mechanism (i.e., F2F or Vision model) cannot solve \BHS\ in a 
 static ring,  regardless of their number.  
\end{observation}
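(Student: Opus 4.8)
The plan is to prove the impossibility by a symmetry (indistinguishability) argument that collapses the whole team into a single ``virtual'' agent, and then to observe that such a single agent cannot solve \BHS. First I would fix an arbitrary deterministic algorithm $\mathcal{A}$ and argue that, because the agents are anonymous, colocated, and communicate only endogenously, they are forced to execute identical actions in lockstep, so that $k$ agents are no more powerful than one.

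Concretely, the key step is the following invariant, proved by induction on the round number $r$: at the start of every round all $k$ agents occupy the same node and are in the same internal state. The base case holds because all agents begin at the home-base in the same initial state. For the inductive step, every agent on the common node takes the same local snapshot — in a \emph{static} ring both incident edges are always present, and the set of agents each of them perceives on the node is the same anonymous set of $k$ agents. In the Vision model this already fixes the computed action; in the F2F model the agents additionally exchange messages, but since they are in identical states they send and receive identical information, so their post-communication states coincide as well. Hence all agents compute the same move, stay together, and keep identical states, closing the induction. Consequently the team behaves exactly as a single agent that never splits, and in particular all agents enter any given node in the same round.

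It then remains to show that a single (virtual) agent cannot solve \BHS\ in a ring with $n\ge 3$. Here I would use the fact that the \BH\ is indistinguishable from an ordinary node until it is entered: as long as the virtual agent has not stepped into the \BH, its snapshots, and therefore its trajectory, are independent of where the \BH\ is placed. I would then argue via the adversary. If the virtual agent ever leaves the home-base, let $w$ be the first node it visits; since this choice is predetermined and independent of the \BH's position, the adversary places the \BH\ at $w$, destroying all $k$ agents simultaneously and leaving no survivor, contradicting correctness. Therefore, to guarantee a survivor, the virtual agent must never leave the home-base; but then it explores a single node and cannot distinguish which of the remaining $n-1 \ge 2$ nodes is the \BH, so it cannot terminate while correctly reporting the \BH's location.

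The only delicate point — and the step I would write most carefully — is the lockstep invariant, i.e.\ ruling out any mechanism by which two identical agents could break symmetry. This amounts to checking that neither the count of co-located agents (identical for all of them) nor the F2F message exchange (symmetric by the inductive hypothesis) provides a distinguishing signal, and that determinism leaves no randomness to exploit; this is precisely where anonymity together with the endogenous restriction is used, in contrast to the exogenous models whose mutual-exclusion access to pebbles/whiteboards would allow symmetry breaking. Once symmetry is established, the reduction to the single-agent case and the adversarial \BH\ placement are routine.
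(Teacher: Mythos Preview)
Your argument is correct and is precisely the standard justification for this fact. The paper, however, does not prove the observation at all: it is introduced with ``Next recall the following obvious fact'' and left without proof, so there is no paper-side argument to compare against. Your lockstep-symmetry invariant (anonymous, deterministic, colocated, endogenous $\Rightarrow$ identical states and positions forever) followed by the single-agent impossibility via adversarial placement of the \BH\ is exactly the natural way to substantiate what the paper takes for granted; in particular, your remark that mutual exclusion in the exogenous models is what breaks this symmetry matches the paper's own comment that ``the mutual exclusion nature of the Pebble and Whiteboard models allows anonymous colocated agents to break the symmetry and assume different Ids.''
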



We now introduce a  technical lemma   
that will be used to establish lower bounds in the rest of the paper.  The lemma is based on the following observation

\begin{observation}[\cite{ErHK15}]\label{madebyus}
Given a dynamic ring ${\cal R}$, and a cut $U$ (with $|U|>1$) of its footprint connected by edges $e_c$ and $e_{cc}$ to nodes in $V \setminus U$. Assume that at round $r$ all the agents are in $U$. 
 If at round $r$ there is not an agent that tries to traverse $e_{c}$ and an agent that tires to traverse $e_{cc}$, then the adversary may prevent agents to visit a node outside $U$.  
\end{observation}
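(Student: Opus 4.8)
The plan is to prove Observation~\ref{madebyus} by an explicit adversarial strategy that keeps the cut $U$ ``sealed'' for one round, and then to argue that this one-round argument is exactly what the hypothesis permits. The key structural fact is that $U$ is connected to the rest of the footprint by precisely two edges, $e_c$ and $e_{cc}$ (since the footprint is a ring, any cut is delimited by exactly two edges). The only way an agent can leave $U$ in round $r$ is to traverse one of these two boundary edges; no internal edge of $U$ leads outside. So the adversary's task reduces to removing, at round $r$, whichever boundary edge an agent might attempt to cross, while still respecting $1$-interval connectivity (i.e.\ leaving the snapshot $G_r$ connected).

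First I would formalize the hypothesis: ``there is not an agent that tries to traverse $e_c$ \emph{and} an agent that tries to traverse $e_{cc}$'' means that in round $r$ the set of boundary edges that some agent is attempting to cross has size at most one. I would split into the two cases. \textbf{Case 1:} no agent attempts to cross either $e_c$ or $e_{cc}$. Then trivially no agent moves outside $U$ this round regardless of the adversary, so the adversary does nothing (or removes any single edge consistent with connectivity) and the claim holds. \textbf{Case 2:} all attempted boundary crossings use the same single edge, say $e_c$ (the case of $e_{cc}$ is symmetric). Here the adversary removes $e_c$ in round $r$. Every agent trying to exit $U$ was trying to do so through $e_c$, and with $e_c$ absent each such agent is blocked and remains on its current node in $U$; no agent can reach a node of $V\setminus U$ in round $r+1$.

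The one technical point to check is that removing $e_c$ is a \emph{legal} move for the adversary, i.e.\ that the snapshot $G_r$ with $e_c$ deleted is still connected. This is where the ring footprint and the condition $|U|>1$ matter: in a ring, deleting a single edge leaves a path (hence connected), and $1$-interval connectivity only forbids disconnecting the graph, which a single deletion on a ring never does. So removing $e_c$ is always admissible. I would note that the hypothesis is precisely the obstruction to a \emph{single} edge removal sufficing: if agents simultaneously pushed on both $e_c$ and $e_{cc}$, the adversary could only block one of the two crossings with a single deletion, and blocking both would require deleting two edges, disconnecting the ring and violating $1$-interval connectivity. Thus the condition in the statement is exactly the situation in which the adversary can seal $U$ with a single legal edge removal, which is the content we want.

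The main obstacle, such as it is, is conceptual rather than computational: making precise that ``trying to traverse'' refers to the agent's intended move in round $r$ computed from its local snapshot, and that the adversary may fix the round-$r$ topology \emph{after} (or consistently with) the agents' decisions so that a removed edge nullifies the intended crossing. Once the synchronous semantics of the move step are spelled out — an agent that decides to cross an absent edge simply stays put — the argument is immediate. I would emphasize that the observation blocks escape for the single round $r$ only; iterating it to prevent escape over many rounds requires the precondition to hold at each round, which is how later lower bounds will invoke it.
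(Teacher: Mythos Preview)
Your argument is correct. The paper does not supply a proof of this observation at all: it is stated as a cited fact from~\cite{ErHK15} and used as a black box in the subsequent lower-bound arguments, so there is nothing in the paper to compare your proof against. One minor remark: the condition $|U|>1$ is not actually needed for the connectivity check you invoke (deleting any single edge of a ring of size $n\geq 3$ leaves a path regardless of $|U|$); its role is rather to guarantee that the two boundary edges $e_c$ and $e_{cc}$ are incident to \emph{distinct} nodes of $U$, so that a single agent cannot simultaneously threaten both exits and the case split you perform is exhaustive.
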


We say that a node $v$ is explored if it is has been visited at least one time by an agent.  Let us assume that agents are colocated and let us use $U_r$ to denote the set of explored nodes at round $r$.
Note that $U_{r}$ must be a cut of the ring, and that $U_{r-1} \subseteq U_r $. 
We will also say that a round $r$ is an {\em expansion round} if $U_{r} \subset U_{r+1}$, and that agent $a$ communicates with another agent $b$ after round $r$, if either $a$ and $b$ meet at a round $r'>r$, or at a round $r'>r$ agent $b$ visits a node on which $a$ wrote something in a round $r''$ with $r < r'' \leq r'$.

\begin{lemma}\label{made:constant}
If ${\cal A}$ solves the \BHS\ with ${\cal O}(n \cdot f(n))$ moves using three agents, then there must exist an agent $a$ that explores a sequence $seq$ of at least ${\Omega}(\frac{n}{f(n)})$ nodes such that: \\
-  $a$ does not communicate with any other agent while exploring nodes in $seq$.\\
- $a$ visits at most $o(n)$ nodes outside $seq$ while exploring nodes in $seq$. \\
The lemma holds even if the agents are colocated, they have distinct IDs, and the nodes are equipped with  whiteboards. 
\end{lemma}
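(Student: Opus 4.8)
The plan is to exploit three ingredients already available in the excerpt: the interval structure of the explored region for colocated agents, the expansion bottleneck of Observation~\ref{madebyus}, and the impossibility of solving with two agents (Theorem~\ref{ic2impossibile}, via Lemma~\ref{lm:technical}). Since the agents are colocated, $U_r$ is always a sub-arc of the ring containing the home-base, and it can grow by at most one node per endpoint per round. Because the black hole must ultimately be pinned down to a single node, at least $n-2$ distinct nodes get explored, so there are $\Omega(n)$ expansion rounds; restricting attention to the suffix of the execution in which $|U_r|\ge n/2$ still leaves $\Omega(n)$ expansion rounds, and there the two boundary edges $e_c,e_{cc}$ are separated by a fully-explored arc of length $\Omega(n)$. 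Without loss of generality a constant fraction of these expansions occur on one side, say the clockwise one, so the clockwise frontier advances through a contiguous block $W$ of $\Omega(n)$ new nodes.

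First I would formalize that information is expensive to move across $U_r$. Fix the regime $|U_r|\ge n/2$. The only safe route between the clockwise boundary and the rest of the team is through the explored arc, since the complement of $U_r$ contains the black hole and cannot be traversed; hence any meeting with, or fresh whiteboard read from, an agent that is not already at the clockwise frontier forces some agent to traverse $\Omega(n)$ explored nodes and therefore to spend $\Omega(n)$ moves. Call such an event a \emph{long-range communication}. Since the whole execution uses $O(n\cdot f(n))$ moves, at most $O(f(n))$ long-range communications can occur.

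Next I would argue that the clockwise frontier is, for long stretches, pushed by a single agent performing no communication at all, and this is the step that needs the two-agent impossibility. The adversary I would use keeps an edge inside $U_r$ removed so as to cut the clockwise-frontier agent $a$ away from the other two; while this cut persists, $a$ receives no message and meets no one, so its exploration is communication-free. If $a$ stopped expanding, only the two remaining agents could make progress on the still-unexplored arc adjacent to the black hole, contradicting Lemma~\ref{lm:technical}; hence $a$ must keep advancing the frontier on its own. The cut can be healed only by letting the removed edge reappear and having an agent cross the $\Omega(n)$-wide explored arc, i.e.\ by a long-range communication, of which there are $O(f(n))$. A silent hand-off of the frontier to a different agent is likewise impossible without such a traversal, since the incoming agent would have to walk across $U_r$ and read the marks $a$ left behind, which is a communication; so each inter-communication phase is carried out by one fixed agent.

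Finally I would put the pieces together by averaging: the $\Omega(n)$ clockwise expansions are split into at most $O(f(n))$ communication-free phases, so some phase contains $\Omega(n/f(n))$ of them, yielding the desired sequence $seq$ explored by a single agent $a$ without communicating. Condition (ii) follows from locality: during such a phase $a$ is confined by the adversarial cut to the frontier side, so any excursion toward the home-base side would either cross the cut or revisit marked nodes, that is, communicate; hence $a$ can wander over only $o(n)$ nodes outside $seq$. The delicate point, and the main obstacle, is precisely this isolation argument: one must rule out a strategy in which two agents advance the frontier in lock-step, regrouping every few rounds so that no single agent is ever isolated. Showing that such lock-step exploration is defeated by the separating adversary, which forces one agent ahead and, by two-agent impossibility, compels it to continue alone, or else cannot respect the $O(n\cdot f(n))$ move budget, is the crux that the full proof must nail down.
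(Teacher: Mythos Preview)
Your overall architecture---count ``long-range communications'' that each cost $\Omega(n)$ moves, bound their number by $O(f(n))$, and pigeonhole the $\Omega(n)$ frontier expansions into the resulting phases---is exactly the skeleton of the paper's proof. Where you diverge, and where the gap lies, is in the adversary and in the appeal to Lemma~\ref{lm:technical}. The paper's adversary does \emph{not} remove an interior edge of $U_r$; it removes one of the two \emph{frontier} edges $e_c,e_{cc}$. By Observation~\ref{madebyus}, an expansion round already requires an agent pushing at each frontier, so the adversary simply deletes the frontier edge on the side where two agents sit (or where the third agent is nearby) and lets the \emph{lone} pushing agent through. This single move guarantees that the explorer is at distance $\Theta(n)$ from both other agents at every expansion round, and hence that any communication with the explorer, or any hand-off of the exploring role, costs $\Omega(n)$ moves. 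No appeal to two-agent impossibility is needed; the ``lock-step exploration'' worry you flag simply cannot arise, because whenever two agents are together at a frontier, the adversary blocks that frontier.

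Your interior-edge adversary, by contrast, surrenders control over expansion: with the interior edge gone, both $e_c$ and $e_{cc}$ are present, so the two agents on the other side can freely expand their frontier while the isolated agent $a$ does nothing. Your claim ``if $a$ stopped expanding, only the two remaining agents could make progress \ldots\ contradicting Lemma~\ref{lm:technical}'' does not force $a$ to advance \emph{in this phase}; it only says the whole algorithm would eventually fail, which is compatible with $a$ idling now and a later long-range communication resuming progress on $a$'s side. Likewise, your argument for condition~(ii) via the cut is off: revisiting $a$'s own marks is not communication, and the cut may be $\Theta(n)$ away from the frontier, so the cut does not by itself confine $a$ to $o(n)$ nodes. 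In the paper, condition~(ii) is baked into the definition of a solitary sequence, and its violation is charged $\Omega(n)$ moves directly; that, together with the frontier-edge adversary, is what makes the counting go through.
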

\begin{proof} 
Let us have three agents starting from node $v_0$. It is easy to see that, if ${\cal A}$ is correct, then there is a round $r$ in which $|U_r|={ \Theta}(n)$ and $|V \setminus U_r| =\Theta(n)$: as long as the black hole \BH $\not\in U_{r}$ then ${\cal A}$ cannot correctly terminate, and \BH can be at a clockwise and counter-clockwise distance that is ${ \Theta}(n)$ from $v_0$. 

Let round $r' \geq r$ be an {\em expansion round} in which a new node $v_{x_1}$ is explored. 
By Observation~\ref{madebyus} we have that at each expansion round there are two agents trying to cross the edges of the cut, we name them the pushing agents. Note that the adversary may always allow only one the pushing agent to explore a new node, and block the other. 

Without loss of generality, we name exploring agent, shortened in $e$, to refer to the pushing agent that explores a new node, and as $b$ and $c$ the others.  Note that it may be  possible that the three agents alternate their roles of pushing agents and so that exploring agent is not always the same agent to explore. However in this case we just rename the agents, as we are using the name $e$ just fo convenience since the proof does not depend on $a$ being always the same agent. 

By Observation~\ref{madebyus} and from the fact that the adversary can decide which agent will explore a node outside $U_{r'}$ by orchestrating the removal of the edges, we can assume that in each expansion round the distance between $a$ and the other agents is ${\Theta}(n)$. 

Therefore, in order for $e$ to communicate with $b$ and $c$ after each expansion round $r' > r$, at least ${\Theta}(n)$ moves are needed. 

Now let $v_{x_1},v_{x_2},\ldots, v_{x_{t}}$ be a sequence of $t$ consecutive nodes that $e$ explores from round $r'$ to round $r''$, such that (i) $e$ never communicates with other agents from round $r'$ to round $r''$, and (ii) $a$ visits at most $o(n)$ nodes after each exploration of a node in the sequence. 
Let us call such a sequence a {\em solitary sequence} of length $t$. Note that when $e$ explores a single node and immediately communicates with others, we have a solitary sequence of length $1$. Therefore, after round $r$, each time we explore a new node, we create a solitary sequence of length at least $1$. 

We now argue that after round $r$ at most $s={\cal O}(f(n))$ solitary sequences are generated. Suppose the contrary, by definition when a solitary sequence ends ${\Omega}(n)$ moves are executed, and, by hypothesis ${\cal A}$ runs in ${\cal O}(n \cdot f(n))$ moves. If we had ${\omega}(f(n))$ solitary sequences we would have ${\omega}(n \cdot f(n))$ moves  (recall that $f(n) = \omega (g(n))$ if $\lim\limits_{n \to +\infty} \frac{f(n)}{g(n)}=+\infty$). 
 However, there are still $k={\Theta}(n)$ nodes to explore after round $r$, and this has to be done with $s={\cal O}(f(n))$ solitary sequences. Therefore, there must exists at least one solitary sequence of length at least $\frac{k}{s}={\Omega}(\frac{n}{f(n)})$. Such a solitary sequence proves our claim.
\end{proof}

Intuitively, Lemma \ref{made:constant} says that, in any \BHS algorithm that has cost (i.e., number of moves) ${\cal O}(n \cdot f(n))$, there exists at least one agent that explores a sequence of nodes of length ${\Omega}(\frac{n}{f(n)})$;
during the exploration of such sequence the agent does not communicate with others (either by a direct meeting, or by writing on a whiteboard or leaving a pebble on node that is visited by others), 
and it visits at most $o(n)$ nodes outside the ones in the sequence.

\section{Preliminaries}\label{sec:preliminaries}

Before presenting and analyzing our solution protocols, we briefly describe a well known 
idea employed for \BHS\ in static graphs that will be adapted in our algorithms, as well as
the conventions and symbols used in our protocols.

\subsection{Cautious Walk}

{\em Cautious Walk} is  a mechanism introduced in \cite{Dobrev2002}  for agents to move on dangerous graphs in such a way that two (or more) agents never enter the black hole from the same edge. The general idea of cautious walk in static graphs is that when an agent $a$ moves  from $u$ to $v$ through an unexplored (thus dangerous) edge $(u,v)$, $a$  must leave  the information that the edge is under exploration at $u$. The information can be provided through some form of mark in case of exogenous communication mechanisms, or implicit   in case of endogenous  mechanisms  (e.g., by employing a second agent as a ``witness''). 
In our algorithms we will make use of variants of the general idea of cautious walk, adapting it  to  the dynamic case and to the particular model under discussion.


\subsection{Pseudocode Convention}\label{sec:conv}
We use the pseudocode convention introduced in \cite{DiLDFS20}. 
In particular, our algorithms use  as a building block  procedure \Explore($dir$ $|$ $p_1:s_1$; $p_2:s_2$; \dots; $p_k:s_k$),
where $dir \in\{ \dLeft, \dRight, \dNil\}$, $p_i$ is a predicate, and $s_i$ is a state. 
In Procedure \Explore, the agent takes a snapshot, then evaluates  predicates $p_1, \ldots, p_k$ in order;
as soon as a predicate is satisfied, say $p_i$, the procedure exits, and the agent transitions to the state $s_i$ specified by $p_i$. 
If no predicate is satisfied, the agent tries to move in the specified direction $dir$ (or it stays still if $dir=\dNil$), and the procedure is executed again in the next round. 
The following are the main predicates  used in our Algorithms:
\begin{itemize}
\item \pMeeting[ID]: the agent sees another agent with identifier $ID$ arriving at the node where it resides, or the agent arrives in a node, and it sees another agent with identifier $ID$. 
\item \pSees[ID]: the agent sees another agent with identified $ID$ in the node where it resides.
\end{itemize}
Furthermore, the following variables are maintained by the algorithms:
\begin{itemize}
\item \vTtime, \vTsteps: the total number of rounds and distinct visited nodes, respectively, since the beginning of the execution of the algorithm.

\item \vEtime, \vEsteps: the total number of rounds and distinct visited nodes, respectively, since the last call of procedure \Explore.
\item \vMtime[C/ (CC)]: the number of rounds during which the clockwise/ (resp. counter-clockwise) edge is missing since the last call of procedure \Explore.
\item $\#Meets[ID]$: the number of times the agent has met with agent $ID$.
\item $RLastMet[ID]$  records the number of rounds elapsed since the agent has seen (or meet) an agent with id $ID$
\end{itemize}

Observe that, in a fully synchronous system, when predicate \pMeeting[y] holds for an agent $a$ with id $x$, then predicate \pMeeting[x] holds for the agent with id $y$.  

\section{Colocated Agents}\label{sec:colocated}

In this section we study the \BHS\ in interval connected rings when agents are {\em colocated}, i.e., they start from the same home-base.  In the following we distinguish between endogenous and exogenous communication mechanisms.


\subsection{Endogenous Communication Mechanisms}\label{f2f:com}

First  recall that, with endogenous communication mechanisms, 
  IDs are necessary for \BHS (see Observation~\ref{trivialid:coloc}).   Hence, we assume the agents have unique IDs.
%

\subsubsection{Lower bound on Cost and Time}\label{sec:f2fbound}

In this section we present a quadratic lower bound on the number of moves and on the number of rounds needed by any algorithm of optimal size to solve   \BHS.

\begin{theorem}\label{th:icf2fbound}
Given a dynamic ring ${\cal R}$, any algorithm ${\cal A}$ that solves the \BHS\ with three agents and an endogenous communication mechanism 
has a cost of at least  ${\Omega}(n^2)$ moves and needs ${\Omega}(n^2)$ rounds.
The result holds even if  the agents are colocated,  have distinct IDs, and  the  model is F2F. 
\end{theorem}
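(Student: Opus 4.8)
The plan is to argue by contradiction, combining the solitary-sequence machinery of Lemma~\ref{made:constant} with the impossibility of Lemma~\ref{lm:technical}. Suppose some correct algorithm $\mathcal{A}$ solved \BHS\ with three colocated F2F agents using $o(n^2)$ moves. Writing the cost as $\mathcal{O}(n\cdot f(n))$ with $f(n)=o(n)$, Lemma~\ref{made:constant} guarantees an agent $a$ that explores a \emph{solitary sequence} $seq$ of $\ell=\Omega(n/f(n))=\omega(1)$ \emph{consecutive} nodes during which (i) $a$ never communicates with the other two agents, and (ii) $a$ visits only $o(n)$ nodes outside $seq$. For $n$ large enough, $\ell\geq 3$, so I can fix three consecutive nodes $H=\{u_1,u_2,u_3\}\subseteq seq$ among the frontier nodes that $a$ is about to explore for the first time; note that $H$ lies far from the home-base, since by the setup of Lemma~\ref{made:constant} it is reached only after $\Theta(n)$ nodes have already been explored.

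The key observation, specific to endogenous communication, is that while exploring $seq$ agent $a$ leaves \emph{no persistent trace} in the network (there are neither whiteboards nor pebbles) and, by the solitary property, meets neither of the other two agents. Hence all information about how far $a$ has progressed is confined to $a$'s internal memory. I would exploit this by letting the adversary relocate the \BH\ into one of the three nodes of $H$, so that $a$ is destroyed upon first entering it. Because the nodes of $H$ are unexplored frontier nodes before that first visit — so neither survivor has ever been there — and because $a$ has exchanged no information during $seq$, the two surviving agents cannot distinguish which node of $H$ is the \BH: the adversary can keep their local snapshots identical across the three possible placements of the \BH\ inside $H$.

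At this point the two survivors face exactly the configuration of Lemma~\ref{lm:technical}: the \BH\ lies in one of three consecutive nodes distinct from their starting location, communication is endogenous, and the ring is oriented. I would then invoke Lemma~\ref{lm:technical} — whose impossibility only becomes stronger when the two agents are scattered rather than colocated, and when they know less than the exact three-node window — to conclude that the two remaining agents can neither locate the \BH\ nor safely terminate, contradicting the correctness of $\mathcal{A}$. This establishes the $\Omega(n^2)$ bound on moves; the time bound follows immediately, since each of the three agents performs at most one move per round, so the total number of moves is at most $3$ times the number of rounds, and $\Omega(n^2)$ moves force $\Omega(n^2)$ rounds.

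The main obstacle I anticipate is the indistinguishability step: I must argue rigorously that, across the three executions differing only in which node of $H$ is the \BH, the adversary keeps the survivors' views identical until the moment they would have to probe $H$, and that once they do probe it they are genuinely in the impossible situation of Lemma~\ref{lm:technical}. This requires care in (a) verifying that the nodes of $seq$ are truly consecutive frontier nodes never before visited by the survivors — which follows from the definition of exploration together with the solitary property — and (b) checking that $1$-interval connectivity grants the adversary enough power (one missing edge per round) to both drive $a$ into the chosen node of $H$ and to isolate the survivors from $H$, exactly along the lines of the edge-removal argument already used in the proof of Lemma~\ref{lm:technical}.
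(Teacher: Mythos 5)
Your proposal is correct and follows essentially the same route as the paper's own proof: both argue by contradiction, apply Lemma~\ref{made:constant} to extract an agent that explores three (consecutive) nodes without communicating with the others, place the black hole among those nodes to eliminate that agent, and then invoke Lemma~\ref{lm:technical} to conclude that the two survivors cannot locate the \BH, with the $\Omega(n^2)$ round bound following because a constant number of agents performs at most a constant number of moves per round. Your extra attention to the indistinguishability of executions and to the absence of persistent traces under endogenous communication simply makes explicit what the paper's terser proof leaves implicit.
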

\begin{proof}
Let $a, b, c$  be the three agents.  We first show the bound on the number of moves. 
The proof proceeds by contradiction: let ${\cal A}$ be a solution algorithm performing ${o}(n^2)$ moves.  
By Lemma \ref{made:constant}, when $n$ is large enough, in ${\cal A}$ there exists a round $r$ such that, by the end of round $r$, agent $a$ explored at least three nodes, say $v_1$, $v_2$ and $v_3$, without communicating with $b$ and $c$. 
Let the black hole be  one of these three nodes; hence, by round $r$,  agent $a$ is eliminated. 
At this point, the two remaining agents, $b$ and $c$, even if aware of $a$'s demise, are unaware 
of which of $v_1$, $v_2$ and $v_3$ is the \BH. By  Lemma~\ref{lm:technical}, the agents cannot determine the exact location and, hence,
${\cal A}$ cannot correctly solve the problem;  a contradiction. 

For the bound on the rounds, as just shown, 
 ${\cal A}$  must  perform ${\Omega}(n^2)$ moves. Even if these moves were equally divided among the three agents and performed in parallel, we would have at most three moves at each round. Therefore, the number of rounds is quadratic, and the claim follows. 
\end{proof}


\subsubsection{An Optimal Solution: \RT}\label{algorithm:rt}

 From Theorem \ref{ic2impossibile} we know that the minimum size of any solution algorithm is three.  
 From Theorems~\ref{th:icf2fbound},   we know that cost and time are    $\Omega(n^2)$. In this Section, we show that the bounds are tight describing a size  optimal solution  with $O(n^2)$ moves and rounds.  Our solution works in the weakest   communication model (Vision).
 
   \begin{algorithm*}

\begin{algorithmic}
\State States: \{{\sf Init}, {\sf NewNode}, {\sf Return}, {\sf Move}\}.
\AtState{Init, NewNode}
    \State \Call{Explore}{\dRight$|$ $\vEsteps>0$: \sReturn}
\AtState{Return}
     \State \Call{Explore}{\dLeft$|$ $\vEsteps>0$: {\sf Move}}
     \AtState{Move}
     \State \Call{Explore}{\dRight$|$ $\vEsteps>0$: {\sf NewNode}} 
     
\END
\end{algorithmic}
\caption{Algorithm \RT for \A \label{alg-3a}}
\end{algorithm*}
 
   \begin{algorithm*}

\begin{algorithmic}
\State States: \{{\sf Init}, {\sf Bounce}, {\sf Return}\}.
\AtState{Init}
\State $\mathit{nextTarget} \gets 1$
    \State \Call{Explore}{\dLeft$|$ $\vEsteps \geq \mathit{nextTarget} $: \sReturn}
\AtState{Return}

     \State \Call{Explore}{\dRight$|$ \pSees[\Leader]: \sBounce}
     \AtState{Bounce}
         \State $\mathit{nextTarget} \gets \vEsteps+1$
      \State \Call{Explore}{\dLeft$|$ $\vEsteps \geq nextTarget$: \sReturn}
\END
\end{algorithmic}
\caption{Algorithm \RT for \R \label{alg-3r}}
\end{algorithm*}
 
Our \RT algorithm works using three agents: the {\sc Avanguard}, the \R, and  the \Leader (refer to Figure \ref{fig:cp}, which shows two examples of possible runs of the algorithm). 
Initially, all three agents are on the same node $v_0$, the home-base.

\A and \Leader move clockwise ``cautiously'': 
If   the edge $e$ in the clockwise direction is not present, both \Leader and \A wait until it  reappears.

If edge $e$ is present, \A moves to the unexplored node using edge $e$. Then, 
if in a successive round the edge $e$ is present, \A goes back to \Leader,  signalling 
that the recently visited node is safe;  at this point,  both \Leader and \A safely move clockwise to the recently explored node
If  \A does not return when $e$ is present,
 \Leader knows the position of the black hole (the node just visited by \A) and terminates;
 in this case, we say that \A{} {\em fails to report}. 

While \A and \Leader are performing this special exploration, \R moves as follows:
it goes counter-clockwise until it visits the first unexplored node; then, it goes back clockwise until it meets again \Leader.
Once \R meets \Leader, it reverts back its movement direction to counter-clockwise, iterating the same kind of move: in other words, it swings similarly to a pendulum that increases its counter-clockwise amplitude of one node 
at each oscillation. 

In case \R finds a missing edge on its path, it waits until the edge re-appears; and then  it keeps performing the oscillating movement. 
We say that \R\space {\em fails to report} to \Leader if the \Leader sees a missing edge in its clockwise direction and, despite waiting for this edge to appear for a time long enough for \R to explore a new node and go back, it does not meet \R. Intuitively, since at most one edge is missing at each round, the fail to report of \R implies that \R entered the black hole. 
Note that, in this case, \Leader can compute exactly the position of the black hole. 

    \begin{algorithm*}

\begin{algorithmic}
\State  Predicates Shorthands: $FailedReport[\A] =\vEtime > \vMtime[C]$.
\State   $FailedReport[\R]= \vMtime[C]> 2 ((\#Meets[\R]+1)+\vTsteps)$.

\State States: \{{\sf Init}, {\sf Cautious}, {\sf Move}, {\sf TerminateA},  {\sf TerminateR}\}.
\AtState{Init,Cautious}

    \State \Call{Explore}{\dNil $|$  $\pMeeting[\A]$: {\sf Move}; 
       $FailedReport[\A]$:  {\sf TerminateA}; \newline $FailedReport[\R]$: {\sf TerminateR} }
%
%
\AtState{Move}

      \State \Call{Explore}{\dRight$|$ $\vEsteps>0$: {\sf Cautios};  $FailedReport[\R]$: {\sf TerminateR}}
     \AtState{TerminateA}
      \State Terminate, \BH is in the next node in clockwise direction. 
           \AtState{TerminateR}
      \State Terminate, \BH is in the node at  counter-clockwise distance $\#Meets[\R]+1$  from $v_0$. 
\END

\end{algorithmic}
\caption{Algorithm \RT for \Leader \label{alg-3l}}
\end{algorithm*}

 \begin{figure}
\center
  \begin{subfigure}[b]{0.45\textwidth}
  \center
    \includegraphics[width=\textwidth]{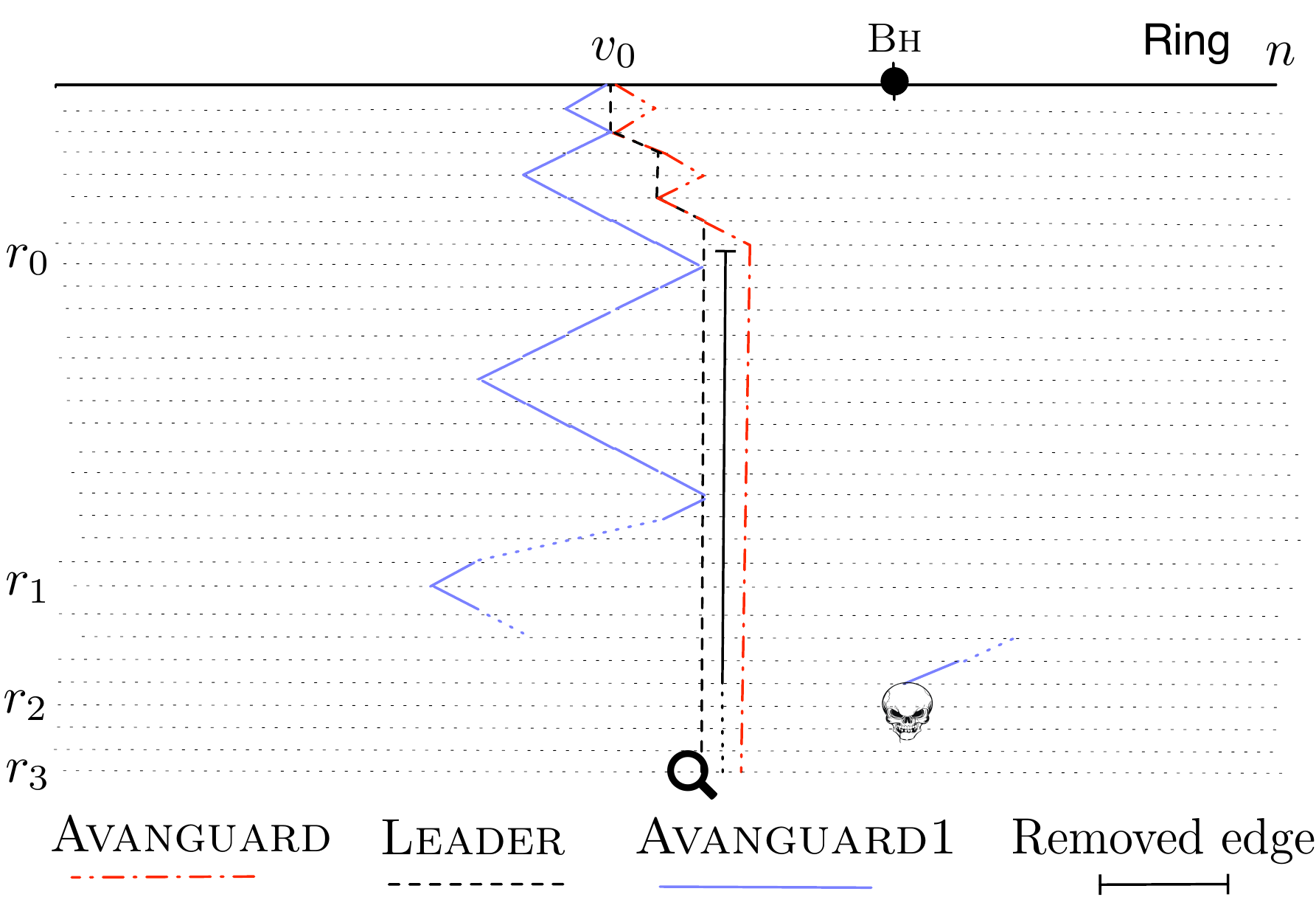}
    \caption{Example of a run where the termination is due to \R. At round $r_0$ \A and \Leader are blocked
    by a missing edge. At round $r_2$ \R enters in the black hole. At round $r_3$ \R fails to report. }
    \label{cp:case1}
  \end{subfigure}
  \,\,
  \begin{subfigure}[b]{0.45\textwidth}
    \includegraphics[width=\textwidth]{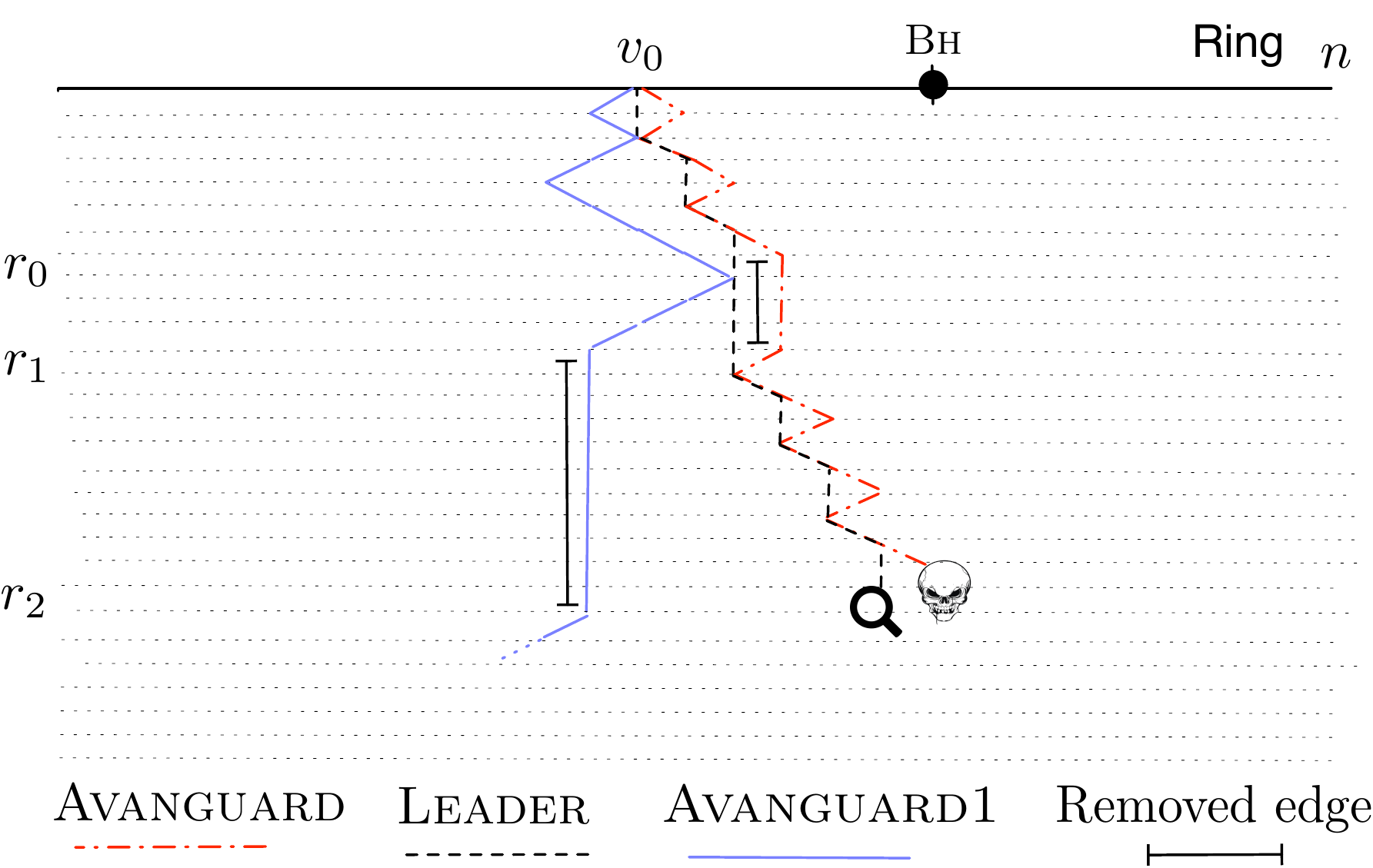}
    \caption{Example of a run where the termination is due to \A.  At round $r_1$ \A and \Leader are free to go while \R is blocked (at most one edge can be removed at each round). 
At round $r_2-1$ \A enters in the black hole. At round $r_2$ \A fails to report.}
    \label{cp:case2}
  \end{subfigure}
  \caption{Example runs of algorithm \RT.  The black hole is the node \BH and it is indicated by a black dot. When no edge is missing agents move as follows. The \R moves in the counter-clockwise direction, while \A and \Leader move
    in clockwise direction. The \A and the \Leader move in a coordinated way: the \A goes forward, then back and at this point both do a step forward (see the
    zig-zag line of \A). Agent \R moves as a pendulum exploring a new node left at each swing.  In Figure~\ref{cp:case1} is depicted a termination due to the failure to report by \R. In Figure~\ref{cp:case2} the termination is due to the failure to report by \A.}\label{fig:cp}
\end{figure}

\paragraph{Correctness}

In this section we prove the correctness of algorithm  \RT 
described in the previous section (refer to the 
 pseudocode of Algorithms~\ref{alg-3a},~\ref{alg-3r}, and~\ref{alg-3l}).
\begin{lemma}\label{lm:rt:beneter}
Consider three agents executing \RT. 
Let $r$ be the first round in which one agent enters the black hole. We have $r={\cal O}(n^2)$.
\end{lemma}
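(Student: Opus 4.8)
The plan is to follow the two exploring agents \A and \R on opposite sides of the home-base, and to use 1-interval connectivity to argue that the single absent edge can stall at most one of them in any given round. I would place the home-base at $v_0$ and let $v_p$ be the black hole, so that its clockwise distance from $v_0$ is $p$ and its counter-clockwise distance is $n-p$. Writing $E_c$ for the clockwise position of the \Leader (i.e.\ the number of nodes already declared safe clockwise) and $E_{cc}$ for the number declared safe counter-clockwise, the set of explored nodes is always a contiguous arc around $v_0$ that excludes $v_p$, and the clockwise frontier edge, immediately clockwise of the \Leader, is $(v_{E_c},v_{E_c+1})$. Hence the first agent to enter the \BH is either \A, stepping clockwise onto $v_p$ when $E_c=p-1$, or \R, stepping counter-clockwise onto $v_p$ when $E_{cc}=n-p-1$; the \Leader never enters first, as it only moves onto nodes already declared safe. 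Let $r$ be the round of this first entry.

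First I would establish the key structural fact from the pseudocode of Algorithms~\ref{alg-3a},~\ref{alg-3r}, and~\ref{alg-3l}. Every move of \A --- exploring the next node, returning to the \Leader, and the joint step forward --- traverses exactly the current frontier edge $(v_{E_c},v_{E_c+1})$, and \A is in a moving state in every round; therefore \A fails to move in a round exactly when that frontier edge is the edge removed by the adversary. In contrast, \R only ever travels between its counter-clockwise frontier and the \Leader, stopping as soon as it sees the \Leader, so its clockwise-most position never exceeds the \Leader and it never traverses the frontier edge. By 1-interval connectivity at most one edge is absent per round, so whenever \A is blocked the frontier edge is the unique absent edge, all of \R's edges are present, and \R makes a move. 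Consequently every round in which \A is blocked is a round in which \R moves, so the number of rounds in which \A is blocked is at most the number of rounds in which \R moves.

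Next I would bound the two total move counts up to round $r$. Agent \A declares safe at most $p$ clockwise nodes, a constant number of moves each, so it makes $O(p)=O(n)$ moves; since \A is never idle, the number of rounds in which \A moves equals its move count and is $O(n)$. Agent \R performs at most $n-p$ oscillations before reaching $v_p$, and its $j$-th oscillation is a round trip of length $2(E_c+j)=O(n)$, using the invariant $E_c<p\le n$ that holds until \A would enter the \BH; summing over the $O(n)$ oscillations gives $O(n^2)$ total moves of \R, so the number of rounds in which \R moves is $O(n^2)$. Since \A is always either moving or blocked, the number of rounds until the first \BH entry satisfies
\[
r=(\text{rounds \A moves})+(\text{rounds \A is blocked})\le O(n)+(\text{rounds \R moves})=O(n)+O(n^2)=O(n^2),
\]
which is the claim.

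The step I expect to be the main obstacle is the structural fact of the second paragraph: that \A is blocked exactly when the frontier edge is missing, that \R never uses that edge, and that \A is never idle. Verifying this requires a careful reading of the three coordinated state machines, in particular checking that \R returns only as far as the \Leader and bounces there, so that its clockwise-most reach never overtakes the \Leader. A secondary point is the uniform $O(n)$ bound on an oscillation's length, which relies on the invariant $E_c<p$ holding on the prefix of the execution preceding any entry into the \BH.
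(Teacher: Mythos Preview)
Your proof is correct and rests on the same core observation as the paper's: by 1-interval connectivity, \A and \R can never both be blocked in the same round, because they traverse disjoint sets of edges. The paper phrases this more loosely (``\A moves clockwise and \R moves counter-clockwise exploring disjoint portions of the ring, so they can never both be blocked'') and then derives a \emph{rate} bound --- at least one new node is explored every $O(n)$ rounds, hence $O(n^2)$ rounds suffice. You instead do an asymmetric \emph{partition} of the rounds from \A's viewpoint: rounds where \A moves (at most $O(n)$ of them) versus rounds where \A is blocked (in each such round \R moves, and \R's total move count is $O(n^2)$). Both routes give the same bound; your accounting is more explicit about which edge \A actually uses (always the current clockwise frontier edge) and why \R never contends for it, which in fact fills in detail the paper's proof leaves implicit.
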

\begin{proof}
Recall that \R moves counter-clockwise of $nextTarget$ steps, and then it moves clockwise until it meets \Leader: the distance it travels during this movement is upper bounded by $2n$. This implies that if \R is not blocked by a missing edge, it explores a new node every $2n$ rounds at most. 
Since at most one edge is missing at each round, and since by construction \A moves clockwise and \R moves counter-clockwise exploring disjoint portion of the rings, they can never be both blocked at the same round.
Moreover, if one of them is blocked for $2n$ rounds (even not consecutively), the other explores at least a node. 
It follows that at least
one new node is explored by one of these two agents every (at most) ${\cal O}(n)$ rounds. 
The above implies that, in at most ${\cal O}(n^2)$ rounds, \R or \A reaches the black hole. 
\end{proof}
\begin{observation}\label{ob:rt:leadernodies}
Consider three agents executing \RT. 
The leader never enters in the \BH.
\end{observation}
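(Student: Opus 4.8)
The plan is to show that \Leader\ never steps onto a node before \A\ has certified it safe, so that the only nodes \Leader\ ever occupies are the home-base and nodes already survived by \A. I would organize this as a single invariant maintained along any execution of Algorithm~\ref{alg-3l}.

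First I would isolate the one way \Leader\ can change position: among its states \textsf{Init}, \textsf{Cautious}, \textsf{Move}, \textsf{TerminateA}, \textsf{TerminateR}, only \textsf{Move} calls \Explore\ with a non-\dNil\ direction (namely \dRight), while the two terminating states perform no move at all. Moreover \Leader\ enters \textsf{Move} exclusively from \textsf{Init}/\textsf{Cautious}, and only when the predicate \pMeeting[\A] fires. Hence every displacement of \Leader\ is a single clockwise step taken immediately after a meeting with \A, and I only need to control what such a meeting guarantees.

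Then I would prove, by induction on the sequence of \Leader's moves, the invariant: \emph{whenever \Leader\ resides at $v_i$ in state \textsf{Cautious} (or \textsf{Init}), $v_i$ is safe; and \Leader\ advances from $v_i$ to $v_{i+1}$ only after \A\ has stepped clockwise onto $v_{i+1}$ and returned to $v_i$.} The base case is the home-base $v_0$, which is safe by hypothesis; here I must note that at round $0$ no agent \emph{arrives} anywhere, so \pMeeting[\A] does not fire and \Leader\ does not move prematurely despite the initial co-location of the three agents. For the inductive step, by the semantics of \pMeeting, the predicate fires at \Leader\ exactly when \A\ arrives back at $v_i$; by Algorithm~\ref{alg-3a}, \A\ reaches $v_i$ while moving counter-clockwise in state \sReturn\ only after it has moved clockwise onto $v_{i+1}$ and survived. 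Since a single agent entering the \BH\ is destroyed and leaves no trace, the mere fact that \A\ re-emerges at $v_i$ certifies $v_{i+1}\neq\BH$. As the footprint is a fixed ring, the clockwise step of \Leader\ in \textsf{Move} lands on precisely this certified node $v_{i+1}$, re-establishing the invariant.

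Putting these together, every node ever occupied by \Leader\ is either the safe home-base or a node certified safe by \A's cautious walk, so \Leader\ never enters the \BH; the termination transitions to \textsf{TerminateA} and \textsf{TerminateR} involve no movement and therefore cannot endanger the claim. I expect the main obstacle to be the careful matching between the \pMeeting\ semantics and \A's cautious-walk phase: I must rule out any meeting of \Leader\ with \A\ that is not preceded by a successful forward probe of the immediately clockwise node---in particular the spurious initial co-location at $v_0$, and any case where \A\ is merely delayed by a missing edge rather than having genuinely returned---and confirm that the single \dRight\ step of \textsf{Move} lands on exactly the node \A\ just survived. Everything else follows from a direct reading of the two pseudocodes.
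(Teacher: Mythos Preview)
Your proposal is correct and follows the same idea as the paper's proof: the \Leader\ only advances after \A\ has probed the next clockwise node and returned, so every node the \Leader\ steps onto is certified safe. The paper compresses this into a single sentence, while you spell out the invariant, the base case at $v_0$, and the edge cases around \pMeeting\ semantics; the extra care you flag (initial co-location not triggering \pMeeting, and distinguishing a genuine return of \A\ from a blocked-edge delay) is exactly what is needed to make the one-liner rigorous, and your reading of Algorithms~\ref{alg-3a} and~\ref{alg-3l} handles it.
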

\begin{proof}
The \Leader moves into a node only after \A explored it and went back to signal the node as safe: therefore, \Leader can never enter the black hole. 
\end{proof}
\begin{lemma}\label{lm:rt:term}
Consider three agents executing \RT.  Let $r$ be the round in which one agent enters  \BH. Then  \Leader terminates by a round $r_f=r+{\cal O}(n^2)$.\end{lemma}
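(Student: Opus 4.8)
The plan is to use Observation~\ref{ob:rt:leadernodies} to reduce to two cases, according to whether the first agent to enter the \BH\ at round $r$ is \A or \R (the \Leader never dies), and, in each case, to exhibit a round $\le r+\mathcal{O}(n^2)$ at which one of the \Leader's two termination predicates, $\mathit{FailedReport}[\A]=(\vEtime>\vMtime[C])$ or $\mathit{FailedReport}[\R]=(\vMtime[C]>2((\#Meets[\R]+1)+\vTsteps))$, becomes true. First I would record two cheap facts used throughout: the \Leader advances clockwise one node at a time and \R increases its amplitude by one per swing, so both $\vTsteps$ and $\#Meets[\R]$ are $\mathcal{O}(n)$; hence the $\mathit{FailedReport}[\R]$ threshold is $\mathcal{O}(n)$, i.e. once the \Leader's clockwise edge has been missing for more than $\mathcal{O}(n)$ rounds since its last move while \R fails to show up, the predicate fires. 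The single adversarial lever I must defeat is indefinite stalling, and the key tool against it is that at most one edge is missing per round, so the adversary cannot simultaneously block the surviving explorer and hide the death from the predicate watching for it.

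\emph{Case 1: \A dies first.} By construction of the cautious walk the \BH\ is then the node immediately clockwise of the \Leader's current node $w$, and the \Leader halts, since it will never again meet \A. If at some later round the adversary leaves the clockwise edge of $w$ present, then, because \A never returns, the quantity $\vEtime-\vMtime[C]$ grows and $\mathit{FailedReport}[\A]$ fires within $\mathcal{O}(1)$ rounds. Otherwise the adversary must keep that edge missing; but this is its only missing edge, so \R's counter-clockwise path is unobstructed and its pendulum, which now bounces off the fixed node $w$, increases its amplitude by one per swing, reaching the \BH\ from the counter-clockwise side after $\sum_{d\le n}\mathcal{O}(d)=\mathcal{O}(n^2)$ rounds; at that point \R dies, $\#Meets[\R]$ is frozen while $\vMtime[C]$ keeps increasing, and $\mathit{FailedReport}[\R]$ fires after $\mathcal{O}(n)$ further rounds.

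\emph{Case 2: \R dies first.} Here \A and the \Leader can still carry out the clockwise cautious walk. To keep $\mathit{FailedReport}[\R]$ from firing, the adversary may not hold the \Leader's clockwise edge missing past the $\mathcal{O}(n)$ threshold at any single \Leader position, so the edge is present often enough for \A and the \Leader to keep advancing. Since the \BH\ (the node where \R died) lies on the counter-clockwise side, \A reaches it after exploring at most $n$ more nodes clockwise, each costing $\mathcal{O}(n)$ rounds, i.e. $\mathcal{O}(n^2)$ in total, after which \A enters the \BH\ and $\mathit{FailedReport}[\A]$ fires. Conversely, if the adversary ever does keep the clockwise edge missing past the threshold, $\mathit{FailedReport}[\R]$ fires within $\mathcal{O}(n)$ rounds. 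In either subcase the \Leader terminates within $\mathcal{O}(n^2)$ rounds of $r$.

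The step I expect to be the main obstacle is precisely showing that the adversary cannot stall forever: in each case it faces a dichotomy forced by the one-missing-edge-per-round constraint, between revealing the death through the clockwise edge (triggering $\mathit{FailedReport}[\A]$) and letting the surviving explorer march unobstructed to the \BH\ in $\mathcal{O}(n^2)$ rounds (ultimately triggering $\mathit{FailedReport}[\R]$). I would stress that this lemma asks only for an upper bound on the termination round, with the correctness of the reported location deferred to a separate lemma, so any earlier firing of a predicate only helps the bound. Combining the two cases yields $r_f\le r+\mathcal{O}(n^2)$, as claimed.
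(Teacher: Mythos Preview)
Your proposal is correct and follows essentially the same approach as the paper's proof: the same reduction via Observation~\ref{ob:rt:leadernodies} to the two cases (\A dies first versus \R dies first), the same preliminary bounds $\vTsteps,\#Meets[\R]=\mathcal{O}(n)$ making the $\mathit{FailedReport}[\R]$ threshold $\mathcal{O}(n)$, and in each case the same dichotomy forced by the one-missing-edge constraint between triggering $\mathit{FailedReport}[\A]$ quickly or letting the surviving explorer reach the \BH\ in $\mathcal{O}(n^2)$ rounds and then triggering $\mathit{FailedReport}[\R]$.
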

\begin{proof}
By Observation~\ref{ob:rt:leadernodies}, at round $r$ \A or \R entered the black hole. 
First  observe that for the \Leader we always have $\#Meets[\R] \leq n$ and $\vTsteps \leq n$. The bound  $\vTsteps \leq n$ follows from the fact that the \Leader cannot do more than $n$ steps otherwise it would enter the black hole.
The bound $\#Meets[\R] \leq n$ derives from the fact that \R explores a new node (in counter-clockwise  direction) for each meeting with \Leader; thus, after  at most $n$ meetings, \R entered the black hole. 
We distinguish the two possible cases:

\noindent 
$1)$ \A reaches the black hole at round $r$. By construction, at round $r$ the \Leader is in state {\sf Cautious } in the counter-clockwise neighbor of the black hole. If  edge $e$ between the \Leader and the black hole is present at a round $r'>r$, then \Leader will not see \A returning at round $r'+1$; hence, it terminates by predicate $FailedReport[\A]$. Therefore, let us assume that  edge $e$ is missing from round $r$ on. In this case, \R cannot be blocked and will explore a new node every at most $2n$ rounds (recall that \R oscillates). Now, the only scenario that can occur is that also \R enters the black hole (from the counter-clockwise neighbor of the black hole): again, no more than one edge is missing and it has to be $e$, therefore, \R is never blocked. Consequently, in ${\cal O}(n^2)$ rounds \R enters in the black hole. Recall, that at this moment on the \Leader we have $\#Meets[\R] \leq n$ and $\vTsteps \leq n$.
		Therefore, after  ${\cal O}(n)$ additional rounds \R will fail to report to \Leader, and the predicate $FailedReport[\R]=\vMtime[C]> 2n+2 > 2*((\#Meets[\R]+1)+\vTsteps) $ will be verified.
		
\noindent 
$2)$ \R reaches the black hole at round $r$. 
Let us suppose the \Leader is at a node $v$ whose clockwise edge is missing; if this edge is missing for more than $2 \cdot n$ rounds, then the failure to report of \R  triggers, and the \Leader terminates. 
Therefore, let us assume that no edge in the clockwise direction of movement of \Leader is ever missing for more than $2 \cdot n$ rounds. In this case, \Leader and  \A will eventually reach the black hole from the clockwise direction, in at most in ${\cal O}(n^2)$ rounds. Once \A enters the black hole, \Leader terminates in at most $2 \cdot n$ rounds, either by the failure to report check by \R (see predicate $FailedReport[\R]$) or by \A not returning to \Leader (predicate $FailedReport[\A]$).  
\end{proof}

\begin{lemma}\label{lm:rt:corrterm}
Consider three agents executing \RT.  If the \Leader terminates then it correctly locates the \BH. 
\end{lemma}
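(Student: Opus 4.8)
The plan is to assume the \Leader has halted and to show that the node it outputs is the \BH, treating the two halting states, {\sf TerminateA} and {\sf TerminateR}, separately. The backbone of both cases is a \emph{cautious-walk invariant} that I would establish first: at every round the nodes ever occupied by the \Leader are safe, and an agent can be destroyed only when it moves, for the first time, onto the single unexplored node at its current exploration front. Concretely, \A's front is always the clockwise neighbour of the \Leader's node (\A probes it and must return before the \Leader advances, which is exactly Observation~\ref{ob:rt:leadernodies}), while \R's front is the node at counter-clockwise distance $\#Meets[\R]+1$ from $v_0$, since the $i$-th meeting of \R with the \Leader certifies that the counter-clockwise node at distance $i$ was visited and left safely. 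Thus, whichever agent dies, it dies in its front node, and the whole point is to show that the firing of the corresponding {\sf FailedReport} predicate is equivalent to that death.

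For {\sf TerminateA} I would argue as follows. When the predicate $\vEtime>\vMtime[C]$ holds, the \Leader sits at a safe node $w$ whose clockwise neighbour $u$ is \A's front. The quantity $\vEtime-\vMtime[C]$ counts the rounds, since \A last left $w$, in which the clockwise edge $e=(w,u)$ was \emph{present}. Because at most one edge is missing per round, an alive \A standing at $u$ crosses $e$ back to $w$ at the first present round, triggering \pMeeting[\A] and moving the \Leader to state {\sf Move} before $\vEtime>\vMtime[C]$ can hold (\pMeeting[\A] is tested first in the \Explore list). Hence the predicate can hold only if \A never reappeared although $e$ was available, i.e. only if \A was destroyed on entering $u$; by the invariant $u$ is then the \BH, which is precisely the node that {\sf TerminateA} reports.

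For {\sf TerminateR} the analysis is symmetric but uses the oscillation. While \R is alive it completes a swing and meets the \Leader every $O(\#Meets[\R]+\vTsteps)$ rounds; during the rounds counted by $\vMtime[C]$ the single missing edge is the \Leader's clockwise edge, which does not lie on \R's counter-clockwise path, so a live \R is never blocked and keeps $\#Meets[\R]$ growing in step with $\vMtime[C]$. Consequently the threshold $2((\#Meets[\R]+1)+\vTsteps)$ stays ahead of $\vMtime[C]$ as long as \R reports, so the predicate can hold only after \R has stopped meeting the \Leader, i.e. after \R was destroyed at its front node. At that instant $\#Meets[\R]$ is frozen at the number of counter-clockwise nodes \R cleared, so the \BH is the node at counter-clockwise distance $\#Meets[\R]+1$, exactly what {\sf TerminateR} outputs. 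Since the \Leader itself is alive by Observation~\ref{ob:rt:leadernodies}, a surviving agent indeed reports the location.

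The step I expect to be the main obstacle is the \emph{soundness} of the two {\sf FailedReport} predicates -- proving they never fire while the monitored agent is still alive. This is the place where $1$-interval connectivity is indispensable: it guarantees that the one missing edge cannot simultaneously obstruct the live agent's return and inflate the \Leader's timers, so that the \Leader's purely local counters ($\vEtime$, $\vMtime[C]$, $\#Meets[\R]$, $\vTsteps$) faithfully bound the opportunities a live agent had to report. Making this precise requires a careful round-by-round accounting of how these counters evolve relative to a live \A's round trip and a live \R's swing length, and checking that the comparison in $\vEtime>\vMtime[C]$ and the constant in $2((\#Meets[\R]+1)+\vTsteps)$ indeed dominate the worst-case return time the adversary can impose with a single edge removal.
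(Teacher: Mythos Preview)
Your proposal is correct and follows essentially the same route as the paper's proof. The paper also performs a two-case analysis on the terminating state: for {\sf TerminateA} it argues, just as you do, that if the clockwise edge was present at least once while the \Leader waited in {\sf Cautious} and \A did not return, then \A must have entered the \BH\ at the clockwise neighbour; for {\sf TerminateR} the paper first makes explicit the invariant you call the ``front'' (after the $m$-th meeting, \R's next target is at counter-clockwise distance $m+1$ from $v_0$, and one swing costs at most $2((m+1)+\vTsteps)$ moves), then observes that during the $\vMtime[C]$ rounds the single missing edge is the \Leader's clockwise one, so \R is unblocked and would have reported unless it died at its target. The only cosmetic difference is that the paper phrases the {\sf TerminateR} soundness as ``$\vMtime[C]$ free rounds suffice for one more swing'' rather than your ``the threshold stays ahead of $\vMtime[C]$'', but both formulations are aiming at the same implication and rest on the same use of $1$-interval connectivity that you correctly flag as the crux.
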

\begin{proof}
We first discuss the relationship between variables $\#Meets[\R]$ and $\vTsteps$ on \Leader and the behaviour of \R. 
\R  moves counter-clockwise of $nextTarget$ steps, and then it moves clockwise until it meets \Leader. 
When \R meets with the leader its variable \vEsteps contains the number of edges traversed from the last explored node, in counter-clockwise direction, and the \Leader, at this point \R
updates its $nextTarget$ as $nextTarget=\vEsteps+1$, and the \Leader updates $\#Meets[\R]=\#Meets[\R]+1$. It is thus clear that the next node \R will explore is at distance $\#Meets[\R]+1$ from $v_0$ in the clockwise orientation ($nextTarget$ starts from $1$). 
It is also clear that, considering the variables $\#Meets[\R]$ and $\vTsteps$ on \Leader, the quantity  $2*((\#Meets[\R]+1)+\vTsteps)$ is an upper bound
on the number of edges that \R traverses to explore a new node and go back to \Leader.

\noindent We can now prove our claim by a cases analysis on the possible termination cases:

\noindent 
- \Leader terminates in state {\sf TerminateA} on node $v$ at round $r$: in this case, \\$FailedReport[\A]$$ =\vEtime > \vMtime[C]$ is verified when \Leader is in state {\sf Cautious}. That is, there has been at least one round
in which the clockwise edge incident to $v$ was present. Let the round $r'$. By construction, in a round $r''<r'$ \A moved in the clockwise direction (recall that \Leader is in state {\sf Cautious}); but at round $r'$ the edge is not missing and agent \A did not return. It follows that \A moved into the black hole, hence \Leader can correctly compute its position.

\noindent 
- \Leader terminates in state {\sf TerminateR} on node $v$ at round $r$:  in this case, \\$FailedReport[\R]$$=vMtime[C] > 2*((\#Meets[\R]+1)+\vTsteps)$ is verified. 
That is, the clockwise edge  incident to $v$ has been missing for at least $2*((\#Meets[\R]+1)+\vTsteps)$ rounds. Since at most one edge can be missing at each round, this interval of time is sufficient for \R to reach an unexplored node at distance $\#Meets[\R]+1$ from $v_0$ in counter-clockwise orientation, and then go back to \Leader (recall that \Leader is at distance \vTsteps from $v_0$ in the clockwise orientation). This implies that \R entered the black hole, at distance $\#Meets[\R]+1$ from $v_0$ in the counter-clockwise direction. Hence, also in this case, \Leader can correctly compute the position of the black hole.
 
\end{proof}

\begin{theorem}\label{th:pendulum}
  Consider  a dynamic ring  ${\cal R}$, with three colocated agents with distinct IDs in the Vision model. 
  Algorithm \RT solves  \BHS\  with ${\cal O}(n^2)$ moves and in ${\cal O}(n^2)$ rounds. 
\end{theorem}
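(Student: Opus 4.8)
The plan is to assemble the theorem directly from the four auxiliary results already established for \RT, treating it as a synthesis rather than a fresh argument. First I would fix the first round $r$ in which some agent enters \BH. By Lemma~\ref{lm:rt:beneter} we have $r = {\cal O}(n^2)$, so within a quadratic number of rounds at least one agent is destroyed. The crucial structural fact is Observation~\ref{ob:rt:leadernodies}: the \Leader never enters \BH, because it only advances onto a node after \A has explored it and returned to certify it safe. Consequently the agent that dies at round $r$ is either \A or \R, and the \Leader is still alive at round $r$; this guarantees that the eventual survivor exists and is the \Leader.

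Next I would invoke Lemma~\ref{lm:rt:term}, which states that once an agent enters \BH at round $r$, the \Leader terminates by some round $r_f = r + {\cal O}(n^2)$. Combining this with $r = {\cal O}(n^2)$ gives $r_f = {\cal O}(n^2) + {\cal O}(n^2) = {\cal O}(n^2)$, establishing the time bound. To argue correctness of the output I would then apply Lemma~\ref{lm:rt:corrterm}: whenever the \Leader terminates, the predicate that triggered termination (either $FailedReport[\A]$ or $FailedReport[\R]$) lets it compute the exact position of \BH. Since the \Leader survives, terminates, and knows the footprint together with the location of \BH, the requirements of the \BHS definition are met, so \RT solves the problem.

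It remains to bound the cost (number of moves). Here I would use the fact that, in the synchronous model adopted in the paper, each agent performs at most one move step per round, traversing at most a single edge. Hence over any execution lasting $T$ rounds the three agents together perform at most $3T$ moves. Since we have just shown the execution terminates within $T = {\cal O}(n^2)$ rounds, the total number of moves is at most $3 \cdot {\cal O}(n^2) = {\cal O}(n^2)$, matching the claimed cost bound.

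I do not expect a serious obstacle, since all the hard work is localized in the lemmas; the only point requiring care is the logical ordering that guarantees a surviving terminating agent. Specifically, one must chain Observation~\ref{ob:rt:leadernodies} (the \Leader is never killed) together with Lemmas~\ref{lm:rt:term} and~\ref{lm:rt:corrterm} to certify that termination actually occurs, that it occurs on a living agent, and that its reported location is correct, rather than merely bounding the time until some agent dies. Once this chaining is made explicit, the time and move bounds follow mechanically from $r = {\cal O}(n^2)$, $r_f - r = {\cal O}(n^2)$, and the one-move-per-agent-per-round accounting.
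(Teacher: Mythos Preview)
Your argument is essentially the paper's own proof: you chain Lemma~\ref{lm:rt:beneter}, Observation~\ref{ob:rt:leadernodies}, Lemma~\ref{lm:rt:term}, and Lemma~\ref{lm:rt:corrterm} in the same order and for the same purposes, and your cost bound via the one-move-per-agent-per-round accounting is exactly what the paper does implicitly. The only point you omit is that the \BHS definition requires \emph{every} terminating agent to report the correct location, not just the \Leader; the paper closes this by observing that \A and \R have no terminating state in their code, so they cannot terminate incorrectly. Add that one sentence and your proof matches the paper's.
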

\begin{proof}
By Lemma~\ref{lm:rt:beneter} and Observation~\ref{ob:rt:leadernodies} we have that at a round $r={\cal O}(n^2)$ \R or \A entered the black hole while the \Leader remains alive. 
By Lemma~\ref{lm:rt:term} we have that at a round $r_f=r+{\cal O}(n^2)$ the \Leader terminates, and by Lemma~\ref{lm:rt:corrterm} it correctly locates the \BH. The fact that no other agent terminates incorrectly it is
immediate from the observation that \R and \A have no terminating state. \end{proof}

Summarizing, by Th.~\ref{th:icf2fbound} and Th.~\ref{th:pendulum} we have: 
\begin{theorem}\label{ic3optimal}
Algorithm \RT is size-optimal with optimal cost and time.
\end{theorem}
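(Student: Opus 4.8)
The plan is to assemble this summary statement directly from the three bounds established earlier, treating size, cost, and time in turn. First I would settle size optimality: by Theorem~\ref{ic2impossibile}, two colocated agents cannot locate the black hole even in the strongest (Whiteboard) model with distinct IDs, so every correct algorithm needs at least three agents. Since Theorem~\ref{th:pendulum} exhibits \RT as a correct three-agent algorithm, three is exactly the optimal team size and \RT attains it.

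Next I would handle cost and time together, as they share the same quadratic lower bound. By Theorem~\ref{th:icf2fbound}, any three-agent algorithm using an endogenous mechanism requires $\Omega(n^2)$ moves and $\Omega(n^2)$ rounds; combined with the size-optimality just established, this lower bound applies to every size-optimal endogenous solution. On the other side, Theorem~\ref{th:pendulum} shows that \RT completes in $O(n^2)$ moves and $O(n^2)$ rounds. Matching the two yields $\Theta(n^2)$ for both measures, so \RT is simultaneously cost- and time-optimal.

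The only point that requires care --- more bookkeeping than genuine obstacle --- is checking that the lower and upper bounds concern compatible settings. The lower bound of Theorem~\ref{th:icf2fbound} is proved for F2F, the \emph{strongest} endogenous model, whereas the upper bound of Theorem~\ref{th:pendulum} is achieved in Vision, the \emph{weakest}. Since a lower bound proved for a stronger model is inherited by every weaker model, and an upper bound achieved in a weaker model is inherited by every stronger model (the stronger model can simulate the weaker one), both bounds hold simultaneously across all endogenous mechanisms. This alignment is precisely what makes the $\Theta(n^2)$ characterization tight and justifies the combined claim of size-, cost-, and time-optimality.
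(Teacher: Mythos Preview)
Your proposal is correct and follows exactly the paper's approach: the theorem is stated as a direct consequence of the lower bounds (Theorems~\ref{ic2impossibile} and~\ref{th:icf2fbound}) and the matching upper bound (Theorem~\ref{th:pendulum}), and the paper presents it with no proof beyond the summarizing sentence that cites those results. Your added paragraph on the compatibility of the F2F lower bound with the Vision upper bound is a welcome clarification that the paper leaves implicit.
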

\color{black}

\subsection{Exogenous Communication Mechanisms}\label{whiteworld}

Note that, with the exogenous  communication mechanisms,  anonymous agents on the same node can easily break the symmetry  and assume different Identifiers   by exploiting the mutual exclusion nature of pebbles and whiteboards.
 
As for the ability of agents to interact, we can observe that, even in the  simpler pebble  model, any communication between agents located at the same node is easy to achieve (e.g.,  two agents may exchange messages of any size   using a communication protocol in which they send one bit every constant number of rounds). 
Therefore, in this section we can assume that the agents are able to communicate. 
 Specifically, the communication of constant size messages is assumed to be instantaneous, since it can implemented trivially by a multiplexing mechanism (the logical rounds are divided in a constant number of physical round, the first of which is used to execute the actual algorithm and the others to communicate).


\subsubsection{Lower bound on Cost and Time}\label{lowerbound:sqrtsn}

 The lower bound of Theorem~\ref{th:icf2fbound} does not hold when employing one of the proposed exogenous communication mechanisms. 
 We now show  lower bounds of  $\Omega(n^{1.5})$ on cost and time complexity; the lower bounds hold even employing the strongest of the exogenous mechanisms: whiteboards.

%

\begin{theorem}\label{th:sqrtn}
Given a dynamic ring ${\cal R}$ in the Whiteboard model, any algorithm ${\cal A}$ solving \BHS\ with three agents requires $\Omega(n^{1.5})$ moves and $\Omega(n^{1.5})$ rounds even
if the agents are colocated and  have distinct IDs.
\end{theorem}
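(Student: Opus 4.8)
The plan is to argue by contradiction and reduce the statement to a bound on the \emph{length} of the solitary sequences produced by Lemma~\ref{made:constant}. Suppose a correct algorithm $\mathcal{A}$ solves \BHS\ with $M=o(n^{1.5})$ moves, and write $M = n\cdot f(n)$ so that $f(n)=o(\sqrt{n})$. By Lemma~\ref{made:constant} there is an agent $a$ that explores a solitary sequence $seq$ of length $\Omega(n/f(n))=\omega(\sqrt{n})$, during which $a$ never communicates with the other two agents and visits only $o(n)$ nodes outside $seq$. The whole argument then hinges on a key sub-lemma: for a correct algorithm no solitary sequence can be longer than $O(\sqrt{n})$. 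Granting this, $\Omega(n/f(n))\le O(\sqrt{n})$ forces $f(n)=\Omega(\sqrt{n})$, hence $M=n\cdot f(n)=\Omega(n^{1.5})$, a contradiction. The bound on rounds then follows exactly as in Theorem~\ref{th:icf2fbound}: with only three agents at most three moves occur per round, so $\Omega(n^{1.5})$ moves entail $\Omega(n^{1.5})$ rounds.

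The core work is the adversarial proof of the sub-lemma, namely that a solitary run of length $\omega(\sqrt{n})$ is fatal. Unlike the endogenous case of Theorem~\ref{th:icf2fbound}, here $a$ leaves whiteboard traces along $seq$, so merely killing $a$ inside a constant-length window does \emph{not} immediately reduce to Lemma~\ref{lm:technical}: the two survivors could walk up to the frontier and read $a$'s last trace to locate the \BH. To defeat this I would let $a$ advance along $seq$ while the adversary, using its single removable edge, cuts the edge immediately behind $a$ at an early point of the run and keeps it absent; since a ring minus one edge is still connected this is admissible, and it does not prevent $a$ from advancing along the edges still present ahead of it (only one edge is missing at a time). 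I would then commit the \BH\ to a node $v^\star$ deep inside $seq$, so that when $a$ dies the safe nodes it explored---carrying the decisive traces---lie in the arc trapped between the missing edge and $v^\star$. That arc is inaccessible to the survivors: reaching it from the committed side requires crossing the missing edge, and reaching it from the other side requires crossing $v^\star=\BH$. This is exactly the isolation mechanism of Lemma~\ref{lm:technical} and of the missing-edge argument in Theorem~\ref{icnoNimpossible}, now scaled to an ambiguity region whose size is the trapped portion of $seq$; knowledge of $n$ is what turns this into a finite threshold rather than an outright impossibility.

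The delicate and decisive point---and the step I expect to be the main obstacle---is to pin down why the threshold is exactly $\sqrt{n}$: I must show that two survivors can always resolve a trapped region of size $O(\sqrt{n})$ (which is why the optimal algorithm can afford solitary runs of that length) but provably cannot resolve one of size $\omega(\sqrt{n})$. The resolution side is what forces the quantitative balance: with two survivors and a single adversarial edge the agents can mount a two-sided cautious probing of the trapped arc, and one missing edge can obstruct only one of the two approaches at a time, so the survivors can always read traces from whichever side the adversary leaves open---unless the arc is so long that this probing cannot finish before the adversary re-traps it, and meanwhile the survivors must also keep the rest of the ring explored. Formalizing ``trapped ambiguity region'', correctly handling the adaptivity of the adversary (which commits the \BH\ only upon a death so that all node labellings remain consistent with the partial execution), and making the move/round accounting that yields precisely the $\sqrt{n}$ cut-off robust to the whiteboard traces, is the technically hardest part; once the $O(\sqrt{n})$ upper bound on solitary sequences is secured, the contradiction with Lemma~\ref{made:constant} closes the proof immediately.
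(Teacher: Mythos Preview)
Your high-level skeleton---assume $o(n^{1.5})$ moves, invoke Lemma~\ref{made:constant} to obtain a solitary sequence $S$ of length $\omega(\sqrt{n})$, and derive a contradiction---matches the paper. The gap is in the mechanism you propose for the contradiction.

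You frame the sub-lemma as an \emph{impossibility}: after $a$ dies in $S$, you want to show that the two survivors ``provably cannot resolve'' an ambiguity region of size $\omega(\sqrt{n})$. This is false, and your own adversary demonstrates why. If the adversary permanently removes the single edge at the entrance of $S$ (your proposed strategy), then that is the \emph{only} edge ever missing; the survivors $b,c$ are never blocked anywhere else. They can simply walk together around the ring to the far side of $S$ and perform an ordinary two-agent cautious walk into $S$ from that side: one probes, comes back, both advance. This locates the \BH\ in $O(|S|)$ additional moves, with no need to read $a$'s traces at all. So your adversary neither makes the algorithm incorrect nor forces many moves. There is no hard threshold at $\sqrt{n}$ separating ``resolvable'' from ``unresolvable''; two survivors can always resolve, the question is at what cost.

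The paper's argument is a \emph{move-counting} one, not an impossibility one, and it needs a genuinely adaptive adversary. Let $Q$ be the $o(n)$ nodes carrying $a$'s decisive traces (on the counter-clockwise side of the \BH) and let $e_q$ be the edge joining $Q$ to the rest of the ring. The adversary removes $e_q$ whenever an agent is about to cross it, so the traces are never read. By Observation~\ref{madebyus}, any expansion of the explored region requires one survivor pushing at $e_q$ and the other pushing at the frontier inside $S$; the adversary lets only the $S$-side agent through, and only when the other is at $e_q$. Hence each newly explored node of $S$ finds the two survivors $\Theta(n)$ apart. But the probing agent must communicate with the other after \emph{every} single node of $S$ it explores (otherwise two consecutive probes without communication re-create the two-candidate ambiguity of Lemma~\ref{lm:technical} for the last survivor). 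Each such communication costs $\Theta(n)$ moves, so the survivors spend $\Omega(n\cdot|S|)=\Omega(n\cdot\sqrt{n})$ moves, contradicting the assumed $o(n^{1.5})$ budget. Your round bound via ``at most three moves per round'' is fine once the move bound is in place.
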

\begin{proof}

Let $a,b$ and $c$ be the tree agents. 
By Lemma~\ref{made:constant}, if ${\cal A}$ terminates in ${o}(n \cdot \sqrt{n})$ moves, then there is an agent, say $a$, that explores a sequence $S=v_1,v_2,\ldots,v_k$ of ${ \Omega}(\sqrt{n})$ consecutive nodes while it does not communicate with any other agent. Let us suppose that the black hole is in $S$. 
Recall that $a$ communicated with the other agents only before starting the exploration of $S$. Hence, when $a$ reaches \BH, neither $b$ nor $c$ know the exact location of the node $a$ was visiting when it entered the black hole. In the best case, when $a$ disappears, agents $b$ and $c$ know that \BH is in   sector $S$, but they do not know the exact location of \BH in $S$ (refer to Figure~\ref{fig:nsqrtn}). 
\begin{figure}

 \centering
  \includegraphics[width=0.5\textwidth]{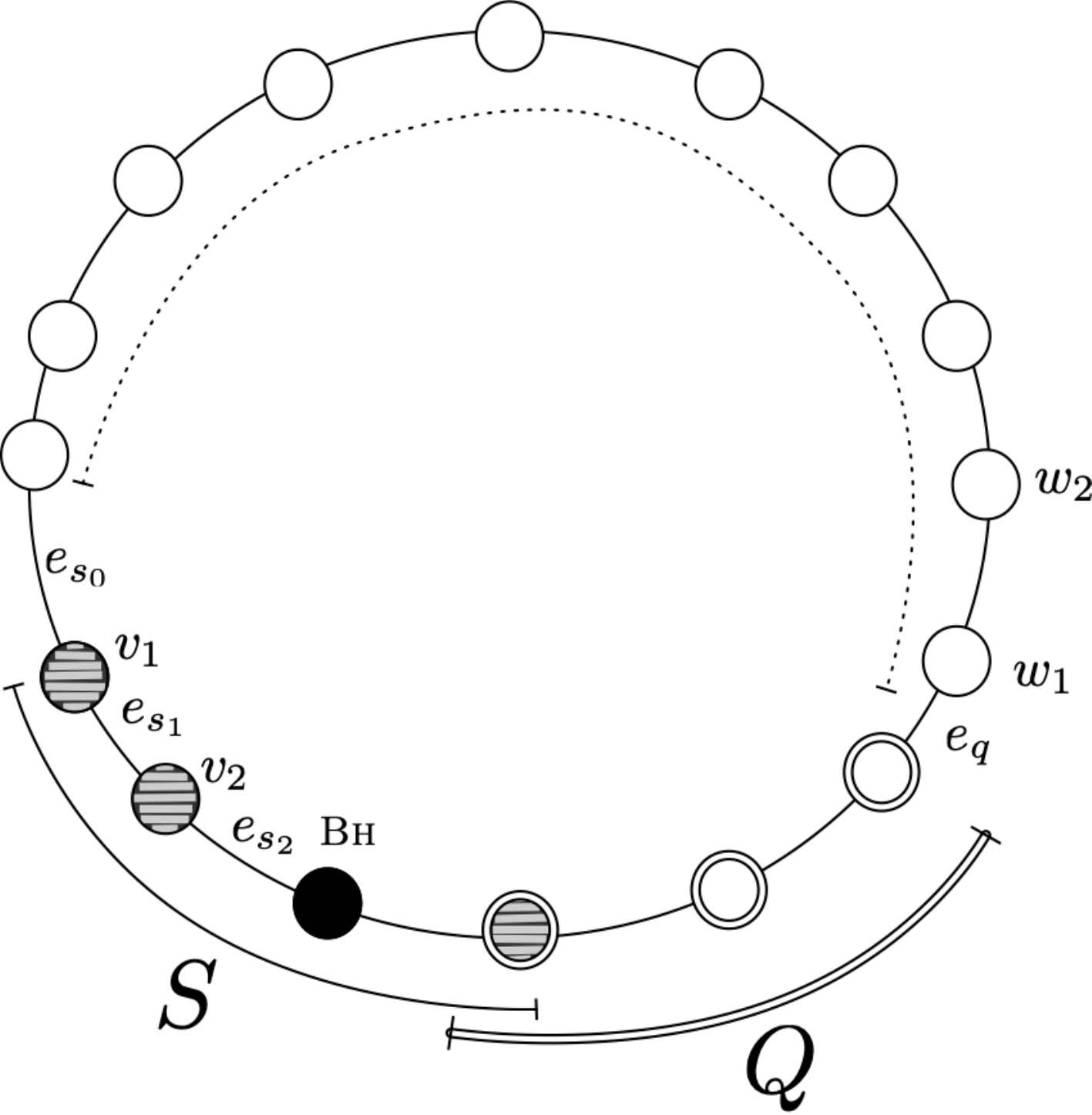}
   \caption{Pictorial representation for the lower bound of $n \sqrt{n}$: the zone $S$ is a set of contiguous nodes where the black hole could be located, the zone $Q$ represents all nodes where $a$ wrote information about the location of $s$. \label{fig:nsqrtn}}
\end{figure}

Since, by hypothesis, nodes are equipped with whiteboards, it is possible that $a$ wrote on a set of nodes $Q$ the exact location of \BH. However, by Lemma~\ref{made:constant}, when $a$ reaches the black hole \BH, neither $b$ nor $c$ visited one of the nodes in $Q$. Without loss of generality, let us assume that $Q$ is a set of contiguous nodes placed at the counter-clockwise side of the \BH (see Figure \ref{fig:nsqrtn}). By Lemma~\ref{made:constant}, $|Q|={o}(n)$.

Also, let $C$ be the cut of the ring that includes both $b$ and $c$ (i.e., $C \cap Q= \emptyset$). By Observation~\ref{madebyus}, if $b$ and $c$ want to visit a node outside $C$, then one of them has to try to traverse the clockwise edge connecting $C$ with other nodes in $V$, while the other agent has to do the same in the counter-clockwise direction. Algorithm ${\cal A}$ may identify the black hole only in two possible ways: either one agent visits a node in $Q$; or an agent, say $b$, reaches \BH while $a$ knows the next node $b$ is just about to visit.

Let $e_{q}$ be the edge connecting a node in $Q$ to a node in $V \setminus S$, and let $e_{s_0}$ be the edge connecting a node in $S$ to a node in $V \setminus Q$ (see Figure \ref{fig:nsqrtn}). We now establish the following strategy for an adversarial scheduler that decides which edge is missing: 
the adversary removes the edges such that no agent will ever traverse $e_q$ (this can be achieved by always removing $e_q$ when an agent is present on a node with $e_q$ incident). Moreover, the adversary never let agents continue their exploration on $S$ as long as one of them is not trying to traverse $e_q$. 
As an example as long as $v_1$ is never explored the adversary always remove $e_{s_0}$ but when an agent may traverse $e_q$, when $v_1$ is explored the same behaviour will be applied to $e_{s_1}=(v_1,v_2)$, and so on.
Thanks to this strategy, no agent will ever learn the location of \BH by visiting a node in $Q$.

Therefore, an agent is forced to reach the black hole by using a counter-clockwise edge ($e_{s_2}$ in Figure~\ref{fig:nsqrtn}). Let us assume that $b$ is the agent that traverses for the first time an edge $e_{s_j}=(v_j,v_{j+1})$ with $v_j \in S$.
Note that it might be possible that is not always $b$ to do that, but each time agent cross each other we may simply assume they swap also identity. 

The strategy of the scheduler forces $a$ to try to traverse $e_{q}$ each time $b$ traverses a new $e_{s_j}$. Also, $b$ needs to communicate with $a$ after every newly explored node in $S$: otherwise, if $b$ explores two nodes without communicating with $a$ and one is the black hole, then $a$ cannot disambiguate which one is \BH (the adversary can perpetually block $a$ on two neighbor nodes by a careful removal of edges). 

Finally, $n$ moves are necessary for $b$ to be able to communicate with $a$: in fact, $a$ is trying to traverse $e_q$ when $b$ traverses an $e_{s_j}$. Therefore, $n|S|$ moves are required to identify a black hole in $S$. The bound now follows immediately from the fact that $|S|={\Omega}(\sqrt{n})$.  The bound on the number of rounds follow immediately from the fact that a constant number of agents needs ${ \Omega}(\sqrt{n})$ rounds to perform ${ \Omega}(\sqrt{n})$ moves.

 \end{proof}

 \subsubsection{An Optimal Solution: \DP }\label{dpalgorithm}
 
  We now describe the \DP, that is  an optimal solution in the weakest of the two exogenous communication mechanisms (i.e., the Pebble model).
  
 \paragraph{Primitives and Pseudocode Conventions in the Pebble model} \label{par:pebble} In our algorithm  we use the term {\em mark} to and {\em unmark} to indicate that an agent is leaving/removing the pebble from a node. 
Besides the \Explore procedure described in Section~\ref{sec:preliminaries}, we also use   procedure \CExplore. 
\CExplore  makes an agent perform  a cautious walk using the pebbles:
the agent marks a node,  moves in the $dir$ direction, then it goes back, unmarks the node, and move again in the $dir$ direction. 
Moreover, to verify  predicate  $\pMeeting[x]$ in this context,  the concerned agents should not be   trying to unmark a node (otherwise they are not considered to have met). 

Note that
this can be implemented by sending messages as discussed at the beginning of Section~\ref{whiteworld}. 
Finally, we  define   predicate ${marked}$  to be verified when the agent resides in a marked node.

\paragraph{High level description}

The algorithm structure is reminiscent of a pendulum that oscillates with two different amplitudes.
At the start \R oscillates by increasing its counter-clockwise amplitude of $\sqrt{n}$ nodes at times:
in other words, \R explores at each oscillation a sector of $\sqrt{n}$ contiguous nodes before reporting to \Leader. While exploring a sector, \R uses the pebble to perform a cautious walk. 
As long as \R is alive, the \Leader and the \A act as in \RT.

The \Leader detects when, and if, \R reached the black hole by using a timeout strategy. Such a timeout strategy is designed in order to not trigger in case \R is just blocked by an edge (see the detailed description). 

 In case \R entered the black hole, the \Leader knows the sector that \R was exploring (\Leader keeps track of the sector by counting the number of times it met \R). Since this sector contains the black hole, it will be denoted as {\em dangerous sector}. 

When (and if) \R fails to report, the \Leader starts moving counter-clockwise looking for the last node marked by \R, while 
\A starts exploring the dangerous sector in the clockwise direction. 

The exploration of the sector by \A is done in a cautious way, by reporting back to \Leader for each newly explored node in this sector 
 (at each swing, the amplitude of \A increases of only $1$ node); thanks to this strategy, \Leader knows which node \A is exploring. Note that, if \A is not blocked by a missing edge it enters in the black hole in ${\cal O}(n \sqrt{n})$ rounds. 
 
At the end, either \Leader reaches the pebble left by \R (and thus, it knows where the black hole is), or \A will
enter the black hole and it fails to report to \Leader (also in this case \Leader can correctly compute the position of the black hole).

  \begin{figure}
\center
  \begin{subfigure}[b]{\textwidth}
  \center
    \includegraphics[width=0.65\textwidth]{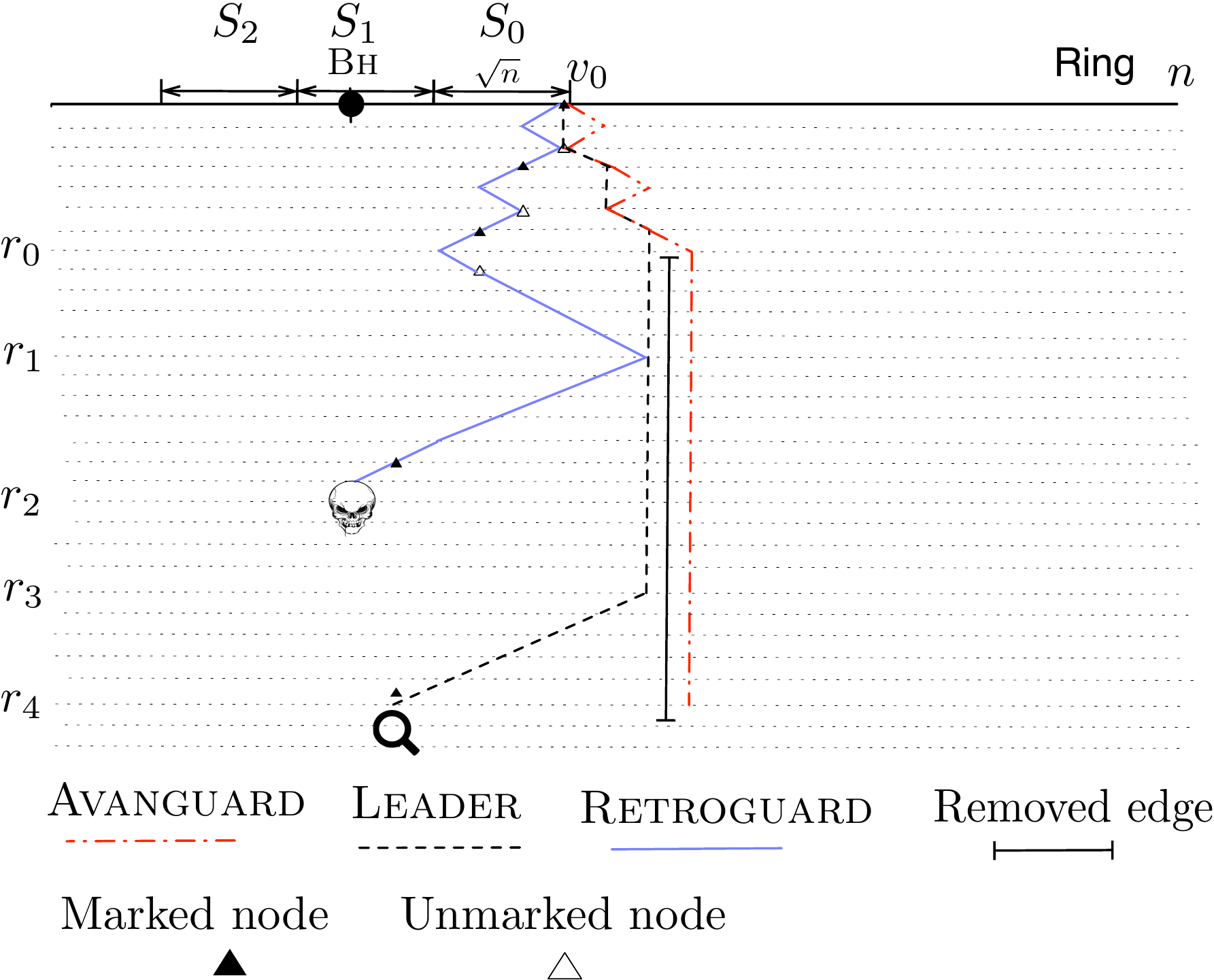}
    \caption{Example of a run where the termination is due to \Leader finding the marked node.  At round $r_0$ \A and \Leader are blocked
    by a missing edge. In the meanwhile agent \R is exploring a sector of size $\sqrt{n}$: note the marking and unmarking of nodes.  
    At round $r_0$ the \R completes the exploration of the first sector $S_0$, and goes back to the \Leader. After meeting with \Leader, it starts exploring sector $S_1$ (we are not showing the cautious walk until the beginning of $S_1$). 
    At round $r_2$ \R enters in the black hole leaving the counter-clockwise neigbhour marked. 
    At round $r_3$ the \Leader detects the failure to report by \R and it changes direction of movement. At round $r_4$ the \Leader finds the marked
    node and terminates.  }
    \label{dp:case1}
  \end{subfigure}
  \begin{subfigure}[b]{\textwidth}
  \center
    \includegraphics[width=0.65\textwidth]{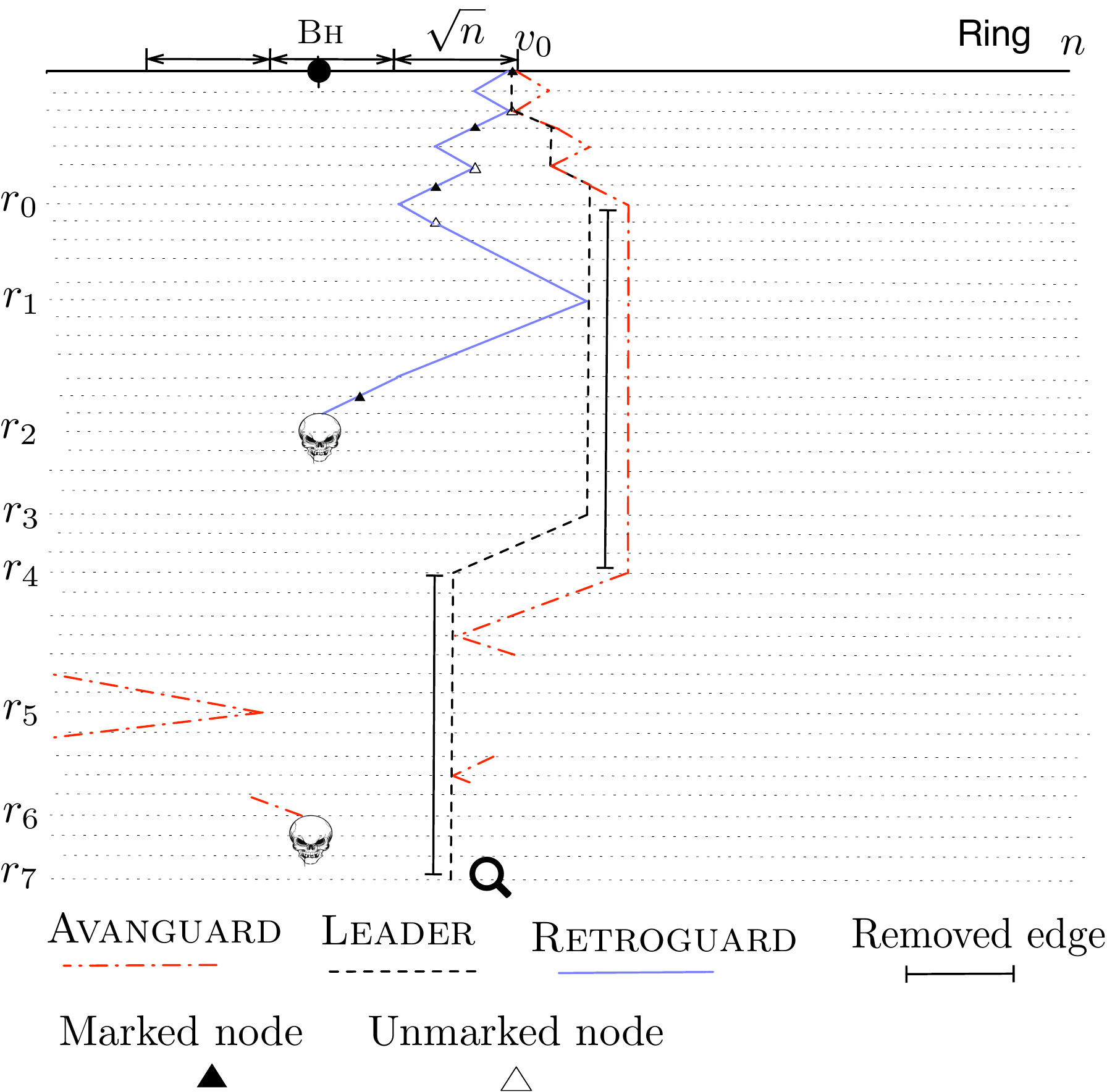}
    \caption{Example of a run where the termination is due to \A. Until round $r_3$ the execution is as the one in Figure~\ref{dp:case1}. 
At round $r_4$ the \Leader is blocked. At the same round \A goes back and it does not find the \Leader, thus it understands that the leader is looking for the node marked by \R. At this point \A goes counter-clockwise until it meets the \Leader,
and it learns the sector that \R was exploring. Once this is done, \A starts exploring the dangerous sector in the clockwise direction: at round $r_5$ it explores the first node of such sector and
goes back to \Leader. At round $r_6$ it explores the second node entering in \BH. At round $r_7$ the \Leader detects the failure to report by \A and terminates.}
    \label{dp:case2}
  \end{subfigure}
  \caption{Example executions of Algorithm \DP. }\label{fig:dp1}
\end{figure}

 \paragraph{Detailed description}
 The pseudocode for \DP is in Algorithms~\ref{alg-dpr},~\ref{alg-dpa}, and~\ref{alg-dpl}. Also, in Figure~\ref{fig:dp1}, two executions are reported. 
 
 As long as \R is not detected dead, the behavior of \Leader and \A is the same as in \RT. 
\R explores counter-clockwise, in a cautious way, sectors of size $\sqrt{n}$ (states {\sf Init} and {\sf Bounce} of Algorithm~\ref{dp:retroguard}).

Once \R explored a sector, it reaches \Leader to report, by moving clockwise (state {\sf Return} of Algorithm~\ref{dp:retroguard}). Once the report is over, \R moves counter-clockwise until
it reaches the end of the next unexplored sector (see the update of $Esteps$ in state {\sf Bounce}; see also the example reported in Figure~\ref{dp:case1}). 

 If \R fails to report back to \Leader, then \Leader goes into the {\sf Detection} state by using a timeout strategy: the transition occurs when the number of rounds from the last meet with \R is $7((\#Meets[\R]+1)\sqrt{n}+\vTsteps)$.
 
 Note that the quantity $(\#Meets[\R]+1)\sqrt{n}+\vTsteps$ is an upper bound on the maximum distance from \Leader and \R: the first component $(\#Meets[\R]+1)\sqrt{n}$ is the counter-clockwise distance between the last node of the sector under exploration and $v_0$, the second quantity $\vTsteps$ is the clockwise distance between the actual position of \Leader and $v_0$ (in the proof will become apparent why we need to multiply this quantity by the constant $7$). We remark that the factor $\vTsteps$ avoid that the \Leader timeous if \R is blocked on an edge, as long as \R is blocked the \Leader moves increasing $\vTsteps$ and delaying the timeout.

Once in state {\sf Detection}, the \Leader walks counter-clockwise trying to reach the last node marked by \R (see state {\sf Detection} of Algorithm~\ref{dp:leader}). In state {\sf Detection} the \Leader also resets $\#Meets[\A]$ to 0, since it is interested in counting the number of times it meets \A from the state switch. A \Leader in {\sf Detection} state terminates if either it finds the marked node, or $3n$ rounds passed without meeting \A (refer again to the example of Figure~\ref{dp:case1}). 

The \A detects that \R disappeared by recognizing that the \Leader moved in a way that is not compatible with the simulated cautious walk. 
More specifically, \A goes into the {\sf SearchLeader} state if in state {\sf NewNode} or state {\sf Move} it does not see the \Leader. 

While in this state, \A moves counter-clockwise until it meets \Leader; when (and if) this occurs,
they both start a communication protocol in which \A reads the variable $\#Meets[\R]$ from \Leader's memory\footnote{Recall that at the beginning of Section~\ref{whiteworld} we discussed how pebbles can be used to communicate messages of non-constant size.}, and it starts oscillating between state {\sf Detection2} and {\sf Return1}. 
The behavior of \A in these two states is as follows: \A goes clockwise until it reaches the first unexplored node in the dangerous sector (the sector that \R was exploring when failed to report); the dangerous sector's position can be computed
using $\#Meets[\R]$: in fact, its position is $\#Meets[\R]\sqrt{n}$ nodes away, according to the counter-clockwise direction from $v_0$.
When a new node in the dangerous sector has been explored, \A moves counter-clockwise until it meets \Leader. Thanks to this mechanism, \Leader always knows the precise location of the next node \A is just about to explore (refer to the example in Figure~\ref{dp:case2}). 

  \begin{algorithm*}

\begin{algorithmic}[1]
\State States: \{{\sf Init}, {\sf Bounce}, {\sf Return}\}.
\AtState{Init}
    \State \Call{CautiousExplore}{\dLeft$|$ $\vEsteps \geq \sqrt{n} $: \sReturn}
\AtState{Return}

     \State \Call{Explore}{\dRight$|$ \pSees[\Leader]: \sBounce}
     \AtState{Bounce}
         \State $\mathit{steps} \gets  \vEsteps+\sqrt{n} $
  \State \Call{CautiousExplore}{\dLeft$|$ $\vEsteps \geq steps $: \sReturn}\END
\end{algorithmic}
\caption{Algorithm \DP for \R \label{alg-dpr}}\label{dp:retroguard}
\end{algorithm*}
 
   \begin{algorithm*}

\begin{algorithmic}[1]
\State States: \{{\sf Init}, {\sf NewNode}, {\sf Return}, {\sf Move}, {\sf SearchLeader}, {\sf Detection1}, {\sf Return1}, {\sf Detection2}.\}
\AtState{Init, NewNode}
    \State \Call{Explore}{\dRight$|$  $\neg\pSees[\Leader]$:{\sf SearchLeader}; $\vEsteps>0$: \sReturn}
\AtState{Return}
     \State \Call{Explore}{\dLeft$|$ $\vEsteps>0$: {\sf Move}}
     \AtState{Move}
     \State \Call{Explore}{\dRight$|$  $\neg\pSees[\Leader]$:{\sf SearchLeader}; $\vEsteps>0$: {\sf NewNode}} 
          \AtState{SearchLeader}
     \State \Call{Explore}{\dLeft$|$ $\pMeeting[Leader]$: {\sf Detection1}} 
          \AtState{Detection1}
          \State Communicate with the \Leader to compute $\mathit{nextTarget}$ that is the distance to first unexplored node in the dangerous sector from clockwise direction.
     \State \Call{Explore}{\dRight$|$ $\vEsteps \geq nextTarget$: {\sf Return1}} 
     \AtState{Return1}
     \State \Call{Explore}{\dLeft$|$ \pMeeting[\Leader]: {\sf Detection2}}
               \AtState{Detection2}
          \State  $\mathit{nextTarget} \gets  \vEsteps+1$
     \State \Call{Explore}{\dRight$|$ $\vEsteps \geq nextTarget$: {\sf Return1}} 
\END
\end{algorithmic}
\caption{Algorithm \DP for \A \label{alg-dpa}}
\end{algorithm*}

 \begin{algorithm*}

\begin{algorithmic}[1]
\State  Variables: $RLastMet[X]=$ number of round since the last meeting of \Leader and agent $X$. 
\State  Predicates Shorthands:
\State $FailedReport[\A] =\vEtime > \vMtime[C]$.
\State $FailedReport[\R]= RLastMet[\R]> 7((\#Meets[\R]+1)\sqrt{n}+\vTsteps)$.
\State $FailedReportD =RLastMet[\A] > 3n$.

\medskip
\State States: \{{\sf Init}, {\sf Cautious}, {\sf Move},  {\sf Detection}, {\sf TerminateA},  {\sf TerminateR},  {\sf TerminateAD} \}

\AtState{Init, Cautious}

    \State \Call{Explore}{\dNil $|$  $\pMeeting[\A]$: {\sf Move}; 
       $FailedReport[\A]$:  {\sf TerminateA};  \newline $FailedReport[\R]$: {\sf Detection} }
\AtState{Move}

      \State \Call{Explore}{\dRight$|$ $\vEsteps>0$: {\sf Cautios};  $FailedReport[\R]$: {\sf Detection}}

           \AtState{Detection}
           \State $\#Meets[\A] \gets 0$ 
           \State $D \gets$ number of nodes explored in the dangerous sector by $\A$
   \State \Call{Explore}{\dLeft$|$ $marked$: {\sf TerminateR};  $FailedReportD$: {\sf TerminateAD} } 
        \AtState{TerminateA}
      \State Terminate, \BH is in the next node in clockwise direction. 
              \AtState{TerminateR}
      \State Terminate, \BH is in the next node in counter-clockwise direction. 
              \AtState{TerminateAD}
      \State Terminate, \BH is the last node \A was exploring. This is computed using $\#Meets[\A]$ and $\#Meets[\R]$.  Specifically, \BH is the node
      that is $(\#Meets[\A]+D)$ nodes way from the clockwise end of the dangeours sector, where $D$ is the number of nodes \A explored in the dangerous sector during the simulation of the cautious walk. 
 \END

\end{algorithmic}
\caption{Algorithm \DP for \Leader \label{alg-dpl}} \label{dp:leader}
\end{algorithm*}

 \paragraph{Correctness.}
 
 \begin{definition}
A {\em sector} is a sequence of ${\sqrt n}$ nodes that \R explores in state {\sf Bounce}. The sequence of sectors that \R explores is denoted by $S_0,S_1,\ldots$,  where $S_i$, $i \geq 0$, is the $i$-th sector explored by \R. 
 \end{definition}

\begin{lemma}\label{gooddec}
If \Leader enters in state {\sf Detection}  at round $r$, then  \R reached the black hole in a round $r_x < r$. 
\end{lemma}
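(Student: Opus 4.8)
The plan is to argue the contrapositive: assuming that \R has not entered the \BH\ in any round strictly before $r$ (so \R is alive throughout $[0,r)$), I show that the predicate $FailedReport[\R]$ evaluated by the \Leader cannot hold at round $r$, hence the \Leader does not switch to state {\sf Detection} at $r$. Since {\sf Detection} is entered only when $FailedReport[\R]$ fires, this suffices. Recalling that the predicate demands $RLastMet[\R] > 7((\#Meets[\R]+1)\sqrt{n}+\vTsteps)$, I write $D^\ast = (\#Meets[\R]+1)\sqrt{n}+\vTsteps$ (all variables being the \Leader's), so the goal is to prove $RLastMet[\R]\le 7 D^\ast$ at round $r$.

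First I would fix the geometry. Let $t_0<r$ be the round of the last meeting between the \Leader and \R, so that $RLastMet[\R]=r-t_0$, while $\#Meets[\R]$ stays fixed on $[t_0,r)$ and $\vTsteps$ is non-decreasing (the \Leader moves only clockwise). Because after $\#Meets[\R]$ meetings \R is exploring the sector whose far counter-clockwise end is at distance $(\#Meets[\R]+1)\sqrt{n}$ from $v_0$, and the \Leader sits at clockwise distance $\vTsteps$ from $v_0$, the quantity $D^\ast$ is exactly an upper bound on the \Leader--\R distance; this is why it appears in the timeout. I then bound $r-t_0$ by splitting $[t_0,r)$ into the rounds in which \R moves and the rounds in which \R is blocked by a missing edge.

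For the \R-active rounds, a single oscillation of \R is a cautious walk (\CExplore) counter-clockwise that retraces the already-explored part and then explores $\sqrt{n}$ new nodes, followed by a plain clockwise return to the \Leader; both legs have length $O(D^\ast)$ and the cautious walk only inflates the forward leg by a constant, so the number of rounds in which \R actually moves (plus the $O(\sqrt n)$ in-place mark/unmark rounds) is at most a constant times $D^\ast$, say $\le 4D^\ast$. For the \R-blocked rounds, I use the single-missing-edge property together with the fact that \R and the \Leader--\A pair explore disjoint arcs on opposite sides of $v_0$: whenever \R is blocked, the missing edge lies on \R's (counter-clockwise) side, so \A's clockwise edge is present and \A must move in that round (by Observation~\ref{madebyus} the two groups always contend for distinct edges). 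Consequently the count of \R-blocked rounds is at most the number of moves \A performs, and since \A advances the \Leader by one node using exactly three moves of its simulated cautious walk, this is at most $3\vTsteps + O(1)\le 3D^\ast + O(1)$. Adding the two contributions gives $r-t_0 \le 7D^\ast + O(1)$; since the threshold is the strict inequality $RLastMet[\R] > 7D^\ast$, it remains to absorb the lower-order term, which I address next.

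The step I expect to be the main obstacle is the control of the \R-blocked rounds and, with it, the calibration of the constant $7$. The delicate point is that the adversary may interleave blocked and active rounds of \R into many short intervals, so I cannot simply claim that $\vTsteps$ grows by one every three idle rounds; instead I must charge each idle round to an individual move of \A (which the adversary cannot prevent while \R is blocked) and argue that \A's total move count is tied, with no waste, to the \Leader's net advance $\vTsteps$. Making this charging airtight, and checking that the mark/unmark and boundary $O(1)$ terms are swallowed by the slack in the factor $7$ (using $D^\ast\ge\sqrt n\to\infty$), is where the real work lies; everything else is routine bookkeeping justified by Observation~\ref{madebyus} and the disjointness of the two explored arcs.
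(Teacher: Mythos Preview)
Your proposal is correct and follows essentially the same approach as the paper. Both arguments split the interval $T=[t_0,r)$ since the last \Leader--\R meeting into the rounds where \R actually moves and the rounds where \R is blocked, bound the former by $4D^\ast$ via the distance to the far end of the current sector, and bound the latter through the progress of the \Leader--\A pair (using that whenever \R is blocked, the single missing edge is on \R's side so \A is free and the cautious-walk cycle advances). The paper packages this as a contradiction---assuming the predicate fires, it shows the \Leader must have made more than $D^\ast$ steps in $T$, which is impossible since those steps are at most $\vTsteps(r)<D^\ast$---whereas you package it as a direct upper bound $|T|\le 7D^\ast$; the arithmetic is the same.

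Two minor remarks. First, your appeal to Observation~\ref{madebyus} is unnecessary: all you need is the single-missing-edge property together with the fact that \R's path and \A's edge lie on opposite sides of the \Leader, which you already state. Second, the cleanest way to absorb your $O(1)$ is not to pass through the loose bound $3\vTsteps\le 3D^\ast$ at all: keep $|T|\le 4D^\ast + 3\vTsteps + O(1) = 7D^\ast - 3(\#Meets[\R]+1)\sqrt n + O(1)$, which is strictly below $7D^\ast$ once $n$ is nontrivial. This is exactly the slack the paper exploits implicitly when it reaches the contradiction $s>D^\ast$ with $s\le\vTsteps(r)=D^\ast-(\#Meets[\R]+1)\sqrt n$.
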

\begin{proof}
The proof is by contradiction. Suppose the \Leader enters in state {\sf Detection} at round $r$, while \R is still alive. 
It follows that at round $r$ the predicate $FailedReport[\R]$ is verified by \Leader. 
Let $r_{last}$ be the last round, before $r$, in which \R and \Leader met; thus, $RLastMet[\R]=r-r_{last}>7((\#Meets[\R]+1)\sqrt{n}+\vTsteps)$. Where $\vTsteps$ is the value of the corresponding variable stored by \Leader at round $r$. 

Let $T$ be the interval between $r_{last}$ and $r$, that is $T=[r_{last},r-1]$. We now prove the fact (1): in interval $T$ the leader has done less than $\frac{|T|}{7}$ steps clockwise. Suppose the contrary, for each step clockwise of the leader the quantity 
$7((\#Meets[\R]+1)\sqrt{n}+\vTsteps)$ increases of $7$ units, this implies that after $\frac{|T|}{7}$ steps the quantity would be greater or equal than $|T|$, that is in contradiction with the triggering of $FailedReport[\R]=  |T| > 7((\#Meets[\R]+1)\sqrt{n}+\vTsteps)$.  

Now we will show that fact (1) is in contradiction with the hypothesis that \R is alive at round $r$.  By construction, at round $r_{last}$, \R starts moving counter-clockwise until it explores all nodes in sector $S_{\#Meets[\R]}$. If \R is not blocked by a missing edge, it will complete this exploration in 
at most $3(\#Meets[\R]+1)\sqrt{n}+3\vTsteps$ rounds;
then, \R will move clockwise toward \Leader, reaching it in at most $(\#Meets[\R]+1)\sqrt{n}+\vTsteps)$ rounds. 
The overall sum of the above rounds is $4((\#Meets[\R]+1)\sqrt{n}+\vTsteps)$, that is clearly less than  $|T|$.

Therefore, the only possible scenario left to analyze is when \R has been blocked during its movement (by hypothesis, \R cannot reach the black hole before round $r$). 
The number of rounds in which \R has been blocked during interval $T$ is at least  $3(\#Meets[\R]+1)\sqrt{n}+\vTsteps)$:  if \R moves for $4((\#Meets[\R]+1)\sqrt{n}+\vTsteps)$ rounds, it meets the leader in interval $T$ preventing $FailedReport[\R]$ to trigger. 

However, for every three rounds in which \R is blocked the \Leader moves one step. Therefore, in $T$ the leader did (at least) $(\#Meets[\R]+1)\sqrt{n}+\vTsteps)$ steps clockwise, this quantity is precisely $\frac{|T|}{7}$  and it is in contradiction with fact (1). 

From the above we have that \R cannot be alive at round $r$, and this prove our claim. 
\end{proof}

\begin{lemma}\label{leaderneverdies}
The \Leader does not enter the black hole. 
\end{lemma}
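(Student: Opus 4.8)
The plan is to prove Lemma~\ref{leaderneverdies} by a case analysis over the states of the \Leader in Algorithm~\ref{alg-dpl}, showing that in every state the \Leader either remains still or steps onto a node that has already been certified safe, so it can never traverse the edge leading into the \BH. First I would dispose of the non-moving states: in {\sf Init} and {\sf Cautious} the \Leader invokes \Explore with direction \dNil and therefore never moves, while {\sf TerminateA}, {\sf TerminateR}, and {\sf TerminateAD} are halting. Hence the only states in which the \Leader actually traverses an edge are {\sf Move} (clockwise) and {\sf Detection} (counter-clockwise), and it suffices to rule out entering the \BH from each of these.

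For the clockwise case I would reuse the cautious-walk invariant already established for \RT. As long as \R has not been detected dead, the pair $(\Leader,\A)$ behaves exactly as in \RT, so the argument of Observation~\ref{ob:rt:leadernodies} applies verbatim: the \Leader transitions to {\sf Move} only when $\pMeeting[\A]$ holds, i.e.\ only after \A has returned to its node, and by the structure of \A's cautious walk \A returns only after it has explored the next clockwise node and survived. Consequently the node the \Leader steps onto in {\sf Move} was visited by \A without \A disappearing, so that node is not the \BH. Had \A perished there, the \Leader would never meet it, would instead trigger $FailedReport[\A]$, and would halt in {\sf TerminateA} rather than move.

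The delicate case is {\sf Detection}, and this is where I expect the main work. The \Leader enters {\sf Detection} only when $FailedReport[\R]$ fires, and by Lemma~\ref{gooddec} this guarantees that \R has already entered the \BH in an earlier round. I would then argue from \R's cautious walk (it marks a node, steps counter-clockwise, and only afterwards unmarks): at the round it died, \R had marked the node $u$ immediately clockwise-adjacent to the \BH and was destroyed before it could unmark $u$. Since \R is dead the mark on $u$ is permanent, and since \A reaches the dangerous sector from the counter-clockwise side and dies upon entering the \BH from that side, \A never reaches $u$ and so cannot unmark it either; thus $u$ stays marked. As the \Leader walks counter-clockwise in {\sf Detection} starting from a position clockwise of $v_0$, the node $u$ is exactly the one preceding the \BH along this direction, so the \Leader must arrive at $u$ first; there the predicate $marked$ is satisfied and the \Leader transitions to {\sf TerminateR} \emph{before} attempting the edge from $u$ into the \BH. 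If instead the adversary blocks the \Leader's counter-clockwise progress before it reaches $u$, the \Leader merely waits on a safe, already-explored node while termination is delivered by \A via $FailedReportD$ ({\sf TerminateAD}); in neither branch does the \Leader cross into the \BH. Combining the three cases yields the claim, and the principal obstacle is precisely the bookkeeping that certifies $u$ is reached before the \BH and that its mark survives both \R's death and \A's independent exploration.
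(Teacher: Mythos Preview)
Your proposal is correct and follows essentially the same approach as the paper: the paper's proof is a two-case argument (Detection vs.\ not-Detection), using Lemma~\ref{gooddec} together with \R's marking for the counter-clockwise direction and the cautious walk with \A for the clockwise direction, which is exactly what your state-by-state analysis unpacks. Your additional bookkeeping about \A not unmarking $u$ is harmless but unnecessary, since in \DP only \R performs \CExplore and hence only \R ever places or removes pebbles; \A uses plain \Explore throughout and never touches marks.
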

\begin{proof}
By Lemma~\ref{gooddec}, if \Leader goes into the {\sf Detection} state, then \R reached \BH; also, \R marked with a pebble the last safe visited node. Thus, \Leader will find the marked node and terminate
before entering \BH (predicate $marked$ in state {\sf Detection}). 
If \Leader does not go in  state {\sf Detection}, then it moves clockwise simulating a cautious walk with \A. Since \Leader waits for \A to return before visiting a new node, it follows that \Leader will never enter the black hole. 
\end{proof}

\begin{lemma}\label{firstenter}
Let $r$ be the first round in which one among \A or \R enters the black hole. We have $r={\cal O}(n^{1.5})$.
\end{lemma}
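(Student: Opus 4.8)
The plan is to bound $r$ by bounding the total number of moves that \A and \R can make before one of them falls into \BH, exploiting the fact that in every round at least one of the two agents advances. First I would re-establish, exactly as in the proof of Lemma~\ref{lm:rt:beneter} for \RT, that while both agents are alive they explore disjoint arcs of the ring (clockwise and counter-clockwise from $v_0$, respectively) and that, since the adversary suppresses at most one edge per round, they can never be blocked in the same round. The edge whose removal stops \A is always its clockwise frontier edge, whereas the edge whose removal stops \R lies either on \R's counter-clockwise frontier or on the already-explored arc it is retracing on its way back to \Leader; these are distinct as long as the two frontiers have not met, which cannot happen before one agent enters \BH. Consequently, in every round preceding the first entry into \BH, at least one of \A, \R performs a move, so if $R_A$ and $R_R$ denote the numbers of rounds in which \A and \R respectively move, then $r \le R_A + R_R$.

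Next I would bound $R_A$ and $R_R$ by the total number of moves each agent makes before round $r$. For \A the count is the same as in \RT: \A advances clockwise one node at a time via the cautious walk (forward, back to \Leader, joint step), spending $O(1)$ moves per newly explored node, so after exploring at most $n$ nodes we get $R_A = O(n)$. For \R this is where $n^{1.5}$ appears: \R explores the ring in sectors of $\sqrt n$ contiguous nodes, returning to \Leader after each one. A single sector excursion costs $O(n)$ moves, since it consists of a round trip between \Leader and the far end of the sector (at most a full traversal of the ring, hence $O(n)$) together with $O(\sqrt n)$ moves for the cautious walk inside the sector. As \R explores $\sqrt n$ new nodes per excursion and at most $n$ nodes in total before reaching \BH, it performs at most $O(\sqrt n)$ excursions, yielding $R_R = O(\sqrt n \cdot n) = O(n^{1.5})$.

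Combining the two estimates gives $r \le R_A + R_R = O(n) + O(n^{1.5}) = O(n^{1.5})$, which is the claim; that the first entry is genuinely by \A or \R (and not by \Leader) is already guaranteed by Lemma~\ref{leaderneverdies}. The step I expect to be the main obstacle is making the ``at most one agent blocked per round'' invariant rigorous: one must argue carefully that while \R is retracing explored nodes toward \Leader it never travels past \Leader's position, and therefore never competes with \A for \A's clockwise frontier edge, so the single removable edge can stall at most one of the two agents at any given round. Once this separation of the two relevant frontier edges is established, the move-counting and the per-excursion bound are routine, and the inequality $r \le R_A + R_R$ closes the argument.
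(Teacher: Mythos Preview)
Your argument is correct and follows the same skeleton as the paper's proof: establish that \A and \R operate on disjoint arcs so the single missing edge blocks at most one of them per round, bound \A's moves by ${\cal O}(n)$ and \R's by ${\cal O}(n^{1.5})$ via the sector-excursion count, and combine. Your inequality $r \le R_A + R_R$ is in fact a cleaner packaging of what the paper does by a case split on how many rounds \A is blocked over a window of $12n\sqrt{n}$ rounds.

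One small point you glossed over: you cannot literally reuse the argument ``exactly as in the proof of Lemma~\ref{lm:rt:beneter}'' because in \DP the \A agent has additional states ({\sf SearchLeader}, {\sf Detection1}, \ldots) in which it moves counter-clockwise and would overlap \R's territory. The paper closes this by invoking Lemma~\ref{gooddec}: the \Leader enters {\sf Detection} only after \R has already fallen into the black hole, hence before round $r$ the \Leader is still doing the joint cautious walk, \A always sees it, and \A never leaves its clockwise cautious-walk states. You should make this explicit; once you do, the disjoint-arc invariant and the rest of your accounting go through unchanged.
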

\begin{proof}
First note that, as long as \R does not reach the black hole, the set of nodes that are visited by \R and \A  are disjoint (not considering the black hole \BH): in fact, \A switches direction of movement only when \Leader goes in state {\sf Detection}; however, by Lemma~\ref{gooddec}, this transitions cannot occur as long as \R does not reach the black hole.  
Therefore, since at most one edge at the time might be missing, it follows that at most one among \R and \A may be blocked at each round. 

If \R is not blocked, it first explores sector $S_{0}$ of $\sqrt{n}$ nodes, then it goes back to \Leader (that is at most $n$ hops away), then it explores the new sector $S_1$ of $\sqrt{n}$, and so on. Therefore, its exploration costs a number of rounds that is upper bounded by:
$$ \sum^{\sqrt{n}}_{i=0} 3(i\sqrt{n} +n).$$

\noindent By immediate algebraic manipulation, we have that $3\sum^{\sqrt{n}}_{i=0} (i\sqrt{n} +n)\leq 6n\sqrt{n}$.

Now we observer that if  \A is free to move for $3n$ rounds, then it explores $n$ nodes. 
 Thus, if \A is not blocked for at least $3n$ rounds over an interval of $12n\sqrt{n}$ rounds, it will necessary reach the black hole \BH. However, if \A is blocked for $3n$ rounds, then \R is free to move for $12n\sqrt{n}-3n\geq 6n\sqrt{n}$ rounds, and it reaches \BH. From the above in the first  $12n\sqrt{n}$ \R or \A reaches the black hole, and thus the lemma follows. 
\end{proof}

\begin{lemma}\label{leader:terminate}
Let $r$ be the first round when an agent enters the black hole; then, \Leader terminates by round $r_t=r+{\cal O}(n^{1.5})$.  
\end{lemma}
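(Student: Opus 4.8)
The plan is to combine Lemma~\ref{leaderneverdies} with a case analysis on which of \A or \R is the first agent to die at round $r$, proving in each case that one of the \Leader's terminating predicates fires within $r+{\cal O}(n^{1.5})$ rounds. By Lemma~\ref{leaderneverdies} the \Leader never enters \BH, so the agent dying at round $r$ is either \A or \R. I would treat the two possibilities separately and, within each, reduce termination to the firing of $FailedReport[\A]$ (which gives state {\sf TerminateA}) or of $FailedReport[\R]$ (which sends the \Leader into state {\sf Detection}, and from there to {\sf TerminateR} or {\sf TerminateAD}).

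The technical heart of the argument is a single timing estimate for the \R-timeout. I would first record the two invariants $\vTsteps \le n$ (the \Leader cannot visit more than $n$ distinct nodes, since by Lemma~\ref{leaderneverdies} it stops short of \BH) and $\#Meets[\R]\le \sqrt{n}$ (there are only $\sqrt{n}$ sectors and \R meets the \Leader at most once per completed sector). Hence the threshold $7((\#Meets[\R]+1)\sqrt{n}+\vTsteps)$ guarding $FailedReport[\R]$ is ${\cal O}(n)$ \emph{at every round}. Now the key point: once \R is dead it can never meet the \Leader again, so $RLastMet[\R]$ increases by one each round and is never reset; since the threshold stays ${\cal O}(n)$ while $RLastMet[\R]$ grows without bound, $FailedReport[\R]$ must fire within ${\cal O}(n)$ rounds of \R's death. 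This predicate is tested in each of the states {\sf Cautious} and {\sf Move}, so the \Leader cannot evade it by staying still, and this is precisely the estimate that makes the timeout robust against the adversary inflating the threshold by pushing the \Leader clockwise, because $\vTsteps$ is capped at $n$.

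With this estimate the cases are short. If \R dies first, the \Leader enters {\sf Detection} (or terminates earlier via $FailedReport[\A]$) within ${\cal O}(n)$ rounds. If \A dies first, then at round $r$ the \Leader sits in {\sf Cautious} on the counter-clockwise neighbour of \BH; either the edge toward \BH reappears, so \A fails to return and $FailedReport[\A]$ (i.e. $\vEtime>\vMtime[C]$) fires within ${\cal O}(1)$ rounds, or that edge stays missing. In the latter situation it is the only missing edge, so \R is never blocked and, by the exploration-rate argument of Lemma~\ref{firstenter} ($\le 6n\sqrt{n}$ rounds for an unblocked \R to cover all sectors), \R reaches \BH within ${\cal O}(n^{1.5})$ rounds; we are then back in the first case. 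In all branches the \Leader reaches {\sf Detection} (or has already terminated) by round $r+{\cal O}(n^{1.5})$.

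It remains to bound the time spent in {\sf Detection}, which I expect to be the main obstacle. On switching to {\sf Detection}, \A — noticing in state {\sf NewNode} or {\sf Move} that the \Leader is gone — enters {\sf SearchLeader}, re-synchronises, reads $\#Meets[\R]$ to locate the dangerous sector, and cautiously sweeps its $\sqrt{n}$ nodes clockwise, reporting after each new node. The \Leader terminates either when it reaches the node \R marked (predicate $marked$, state {\sf TerminateR}) or when \A stops reporting for $3n$ rounds (predicate $FailedReportD=RLastMet[\A]>3n$, state {\sf TerminateAD}); in the \A-dies-first branch $RLastMet[\A]$ is already large on entry, so {\sf TerminateAD} fires almost immediately. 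The bound to establish is ${\cal O}(n^{1.5})$: \A performs ${\cal O}(\sqrt{n})$ cautious reporting trips, each a round trip of length ${\cal O}(n)$, while the $3n$ cap on $RLastMet[\A]$ controls the stretches in which \A is blocked or dead. The delicate part is to verify that the single adversarial missing edge cannot inflate any of these $\sqrt{n}$ trips beyond ${\cal O}(n)$ rounds while the \Leader walks counter-clockwise toward \R's mark; once this is checked, summing the three phases yields $r_t=r+{\cal O}(n^{1.5})$.
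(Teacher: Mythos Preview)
Your proposal is correct and follows essentially the same route as the paper: the same case split (\A dies first vs.\ \R dies first), the same reduction of each branch to the firing of $FailedReport[\A]$ or $FailedReport[\R]$, and the same bound on the {\sf Detection} phase via the $3n$ cap on $RLastMet[\A]$ together with the $\sqrt{n}$ bound on the number of reporting trips.

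One point where your argument is actually cleaner than the paper's: you bound the $FailedReport[\R]$ threshold \emph{uniformly} by ${\cal O}(n)$ using $\#Meets[\R]\le \sqrt{n}$ and $\vTsteps\le n$, so the predicate fires within ${\cal O}(n)$ rounds of \R's death regardless of how the threshold evolves. The paper instead writes $\#Meets[\R]\le n$ (though its justification gives $\sqrt{n}$) and fixes a target round $r'$ computed from values at time $r$, which is a little imprecise since $\vTsteps$ can grow between $r$ and $r'$; your uniform-cap argument sidesteps this entirely. The ``delicate part'' you flag in the {\sf Detection} phase is handled exactly as the paper does: the $FailedReportD$ timeout caps each inter-meeting stretch at $3n$ rounds, and there are at most $\sqrt{n}$ meetings, so the total is ${\cal O}(n^{1.5})$ without needing a finer analysis of how the adversary alternates blocking between \A and the \Leader.
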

\begin{proof}
Let $r$ be the first round when someone enters in the black hole, note that by Lemma~\ref{leaderneverdies} we can exclude that is the \Leader to enter in the black hole. 
Therefore, at round $r$ \A or \R reached the black hole. 

 We will now show that \Leader terminates by round $r'=r+{\cal O}(n^{1.5})$. We distinguish two cases:
\begin{itemize}
\item \A reaches the black hole \BH at round $r$. By construction, at round $r$ \Leader is at node $v$ neighbour  of \BH; let $e$ be the edge between $v$ and \BH. 
\Leader waits on $v$ until either: (1) the edge $e$ is not missing, or (2) \R fails to report. 
In case (1), predicate $FailedReport[\A] $ is verified  by \Leader, hence \Leader terminates. 

If (1) does not apply, we first show then there is a round $r_f > r$ in which \Leader detects that \R has failed to report, with $r_f \in {\cal O}(n \cdot \sqrt{n})$.  Since $e$ is missing, \R cannot be blocked; hence, it reaches \BH by round $r+6 \cdot n \cdot \sqrt{n}$ (refer to the argument for the bound on $r$ used in Lemma~\ref{firstenter}). Also, \R marks the node before the black hole with a pebble. 

By predicate $FailedReport[\R]$, \Leader detects this event (\R in \BH) by round $r_f=r+6 \cdot n \cdot \sqrt{n}+7((\#Meets[\R]+1)\sqrt{n}+\vTsteps)$.
We argue that $r_f=r+6 \cdot n \cdot \sqrt{n}+7((\#Meets[\R]+1)\sqrt{n}+\vTsteps)$ is ${\cal O}(n \cdot \sqrt{n})$: we have that $\vTsteps \leq n$, otherwise the \Leader would have entered the black hole that is impossible, see Lemma~\ref{leaderneverdies}); and  $\#Meets[\R] \leq n$, each times \R meets the leader it explores $\sqrt{n}$ nodes thus \R would enter in the black hole after at most $\sqrt{n}$ meetings.  

At round $r_f$, \Leader changes state, it reverts direction of movement, and starts moving towards the node that \R marked. If \Leader  does not reach the marked node within $3n$ rounds, then \A fails to report to \Leader (by predicate $FailedReportD$), and \Leader terminates. 
Otherwise, if \Leader reaches the marked node, it terminates as well. In both cases, the leader within ${\cal O}(n^{1.5})$ rounds, and the lemma follows.

\item  \R reaches the black hole at round $r$, while exploring sector $S_{J}$. By construction, \R marks the node $v$ before \BH with a pebble; also, $J=\#Meets[\R]$. 
Consider now round $r'=r+7((\#Meets[\R]+1)\sqrt{n}+\vTsteps$: if the \Leader does not terminate by round $r'$,  then it detects that \R failed to report (by predicate $FailedReport[\R]$). 

Since $\#Meets[\R] \leq n$ and $\vTsteps \leq n$, it follows that $r'= {\cal O}(n \cdot \sqrt{n})$. 
At round $r'$ \Leader  goes into {\sf Detection} state, and it switches direction of movement. Now, if \A reaches \BH by round $r'$, then the proof is similar to the previous case. Otherwise, 
if \Leader is never blocked by a missing edge, it  reaches $v$ at most $n$ rounds after switching direction, it hence terminates, and the lemma follows again. 

Thus, let us  consider the case when \Leader is blocked by a missing edge before $n$ rounds after $r'$.  We assume that \A does not enter in the black hole by round $r'$, otherwise the proof is equal to previous case. 

 While the \Leader is blocked, \A has enough time to trigger the predicate  $\neg\pSees[\Leader]$, and to go {\sf SearchLeader} state;
in this state, \A moves toward the \Leader. When \A and \Leader meet, \A  uses the value of  $\#Meets[\R]$ to identity the node of sector $S_{\#Meets[\R]}$ that has to be explored (this is done by updating variable $nextTarget$). At this point \A, in state  {\sf Detection1}, moves until it explores, from clockwise direction, the first node in the dangerous sector. If such node is not the black hole \A switches to state {\sf Return1}, it goes back to \Leader, and upon meeting it goest to {\sf Detection2} updating its target node in the dangerous sector. This oscillation between states {\sf Return1} and {\sf Detection2} is iterated until \A enters in the black hole. 

We claim that, if \A is not blocked, then \A enters in the black hole in at most $2n \sqrt{n}$ rounds:  in state {\sf Detection2} \A explores a new node in $S_{\#Meets[\R]}$, then it switches to state {\sf Return1} and meets the \Leader; this process lasts at most
$2n$ rounds. Since $S_{\#Meets[\R]}$ contains $\sqrt{n}$ rounds the claim follows. 
Once \A entered in the black hole, the \Leader will terminate after additional $3n$ rounds (see predicate $FailedReportD$). \end{itemize}
\end{proof}

\begin{lemma}\label{leaderisalwayscorrect}
If the \Leader terminates, it can correctly locate the position of the black hole.  
\end{lemma}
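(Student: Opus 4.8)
The plan is to argue by a case analysis over the three terminating states the \Leader can enter, namely {\sf TerminateA}, {\sf TerminateR}, and {\sf TerminateAD}, showing in each case that the predicate triggering the transition, combined with the structural invariants of \DP, pins down \BH{} exactly. Two facts drive every case: by Lemma~\ref{gooddec} the \Leader reaches state {\sf Detection} only after \R has already been destroyed in \BH, and by Lemma~\ref{leaderneverdies} the \Leader never enters \BH, so the node it finally indicates is always distinct from the one it occupies.

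For {\sf TerminateA} the argument is identical to the corresponding case of \RT, so I would simply invoke the reasoning of Lemma~\ref{lm:rt:corrterm}. The predicate $FailedReport[\A]=\vEtime>\vMtime[C]$ holding in state {\sf Cautious} forces a round in which the clockwise edge incident to the \Leader was present, and yet \A, which had stepped onto the adjacent clockwise node during the simulated cautious walk, failed to return; the only possibility is that \A perished there, so \BH{} is the clockwise neighbour, as announced.

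For {\sf TerminateR}, triggered by the predicate $marked$ in state {\sf Detection}, I would proceed as follows. By Lemma~\ref{gooddec} \R is dead, and because \R advances counter-clockwise through \CExplore, at the instant it entered \BH{} exactly one node carried a pebble, namely the clockwise neighbour of \BH{} (the last safe node from which \R departed), every earlier node having been unmarked by the cautious walk. Since the \Leader sweeps toward the dangerous sector from its clockwise side, it encounters this unique mark before \BH; thus when $marked$ holds the \Leader stands on the clockwise neighbour of \BH{} and correctly names \BH{} as the next counter-clockwise node. The only sub-step to check is that a single-pebble cautious walk leaves exactly one live mark, located as claimed, at the moment of \R's death.

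The state {\sf TerminateAD} is where I expect the genuine difficulty, and I would spend most of the effort there. Two things must be established. First, that the expiry of $FailedReportD=RLastMet[\A]>3n$ certifies that \A entered \BH{} rather than being merely blocked: since at most one edge is absent per round, a blocked but living \A leaves the \Leader free to advance counter-clockwise and reach the unique pebble within at most $n<3n$ rounds, which would fire {\sf TerminateR} first; this calibration of the $3n$ timeout is precisely what the termination analysis of Lemma~\ref{leader:terminate} supplies, so reaching {\sf TerminateAD} forces \A to be dead. Second, that the announced node is correct, which rests on the invariant that the \Leader always knows the precise node \A is about to attempt: after the \Leader enters {\sf Detection} (resetting $\#Meets[\A]$ to $0$), \A clears the dangerous sector $S_{\#Meets[\R]}$ one node at a time from its clockwise end, returning to meet the \Leader after every secured node, so that the meeting counter $\#Meets[\A]$, the offset $D$, and the sector position $\#Meets[\R]\sqrt{n}$ jointly determine \A's current target. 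The main obstacle is carrying this counting invariant faithfully through the oscillation between states {\sf Detection2} and {\sf Return1} and verifying that the composition $(\#Meets[\A]+D)$ from the clockwise end of the sector coincides exactly with the index of \BH{} — delicate off-by-one bookkeeping rather than a conceptual hurdle.
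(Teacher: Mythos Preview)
Your three-case decomposition matches the paper exactly, and your arguments for {\sf TerminateA} and {\sf TerminateR} are essentially those of the paper (the latter hinging, as you note, on the uniqueness of the live pebble left by \R's cautious walk together with Lemma~\ref{gooddec}).

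In the {\sf TerminateAD} case, however, your justification that ``$FailedReportD$ fires $\Rightarrow$ \A is dead'' is stated from the wrong side and is incomplete as written. You argue that a \emph{blocked} but living \A would leave the \Leader free to reach the pebble and fire {\sf TerminateR} first; but this says nothing about a living \A that is \emph{not} blocked. The paper argues the converse direction: if the \Leader does not reach the pebble it is the \Leader who is (eventually) blocked, hence \A is free, transitions to {\sf SearchLeader}, and thereafter oscillates with period at most $2n$, so $RLastMet[\A]$ can never exceed $3n$ while \A lives. Your version can be patched, but as stated it leaves the ``\A alive and moving'' scenario unaddressed.

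More substantively, you omit a sub-case the paper treats explicitly: \A may have entered \BH{} \emph{during the original cautious walk}, i.e.\ before the \Leader ever switched to {\sf Detection}. In that scenario \A never executes the {\sf SearchLeader}/{\sf Detection1}/{\sf Detection2} oscillation at all, so your invariant ``after the \Leader enters {\sf Detection} \ldots\ \A clears the dangerous sector one node at a time'' is vacuous, and the meeting counter $\#Meets[\A]$ stays at $0$ after the reset. Correctness then rests entirely on the offset $D$ (the number of dangerous-sector nodes \A had already visited during the cautious walk) recorded at the moment the \Leader enters {\sf Detection}. The paper singles out this sub-case and checks that the formula still yields \BH; your proposal, by assuming \A enters the post-{\sf Detection} oscillation, does not cover it.
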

\begin{proof}
The proof proceeds by case analysis on the terminating conditions of Algorithm~\ref{alg-dpl}. 
\begin{itemize}
\item \Leader terminates in state {\sf TerminateA}. This case occurs when $FailedReport[\A]$ is triggered and \Leader is in state either  {\sf Init} or {\sf Cautious}. 
In both states, \Leader is waiting for \A to return; also, $FailedReport[\A]$  triggers if the clockwise edge $e$ is present for two rounds (not necessarily consecutive), and $meeting[\A]$ has not been verified. 
The first round \A moves to a neighbor node using $e$, and it goes to {\sf Return} state. Thus, as soon as edge $e$ appears again the second time, an alive \A would get back to the node where the \Leader is, preventing the triggering
of $FailedReport[\A]$. Therefore, the only possibility left, is for \A to be lost in the black hole. Hence, the \Leader correctly terminates and correctly detects the position of \BH.

\item \Leader terminates in state {\sf TerminateR}.  First, \Leader can go in the {\sf TerminateR} state only from state {\sf Detection}. By Lemma~\ref{gooddec}, \Leader goes in the {\sf Detection} state only after
\R reaches the black hole. By construction, when \R reaches the black hole, it leaves a pebble in the clockwise neighbour of \BH (\R performs a cautious exploration in both states {\sf Init} and {\sf Bounce}); also, \R is the only agent that leaves a pebble. 
Therefore, 
when \Leader finds a marked node, it necessarily is the node marked by \R.
Hence, \BH is the counter-clockwise neighbour of the marked node, and thus \Leader terminates correctly. 

\item \Leader terminates in state {\sf TerminateAD}. First, this state is reachable only from the {\sf Detection} state; also, by Lemma~\ref{gooddec},  \Leader goes in the {\sf Detection} state only after \R reaches the black hole. Let $r$ be the round when the \Leader goes to the {\sf Detection} state. Note that the \Leader resets variable $\#Meets[\A]$ to $0$. 

If \Leader reaches the node marked with a pebble by \R, then it would terminate in state {\sf TerminateR}. (see previous case). Therefore, let us assume that 
\Leader never reaches the node marked by \R; that is, \Leader is blocked by a missing edge starting from round $r_b=r+\delta$ on, with $\delta < n$. 

First, we will show that $FailedReportD$ cannot trigger as long as \A is alive. If \A is alive at round $r_b$  (i.e., it did not reach \BH), it follows that \A cannot be blocked from round $r_b$ on; hence, at most by round $r_b+2$, it goes to the {\sf searchLeader} state. Now, in at most $n$ rounds, \A meets \Leader, it computes the position of the 
dangerous sector, and it starts oscillating with a period of $2n$ rounds. Therefore, predicate $FailedReportD$ cannot trigger as long as \A is alive, and when \A enters in the black hole after round $r_b$ the \Leader knows the position of \BH (Recall that, \Leader is always aware of the node that \A is exploring in the dangerous sector, by accessing the value of $\#Meets[\A]$).

Let us now consider the case when \A is not alive at round $r_b$: this case occurs only if \A reaches \BH, and \BH is the clockwise neighbour of the node where \Leader is when it goes to the {\sf Detection} state. By construction of the algorithm before round $r+1$, \A and \Leader do not occupy the same node only when \A explores a new node (state {\sf Return}).  Hence, since \A is not alive, predicate $FailedReportD$   triggers at round $r+3n$, and $\#Meets[\A]=0$. Thus, \Leader correctly locate the position of  the black hole: it is in the clockwise neighbour of the node visited by the \Leader at round $r$ (that is the node where the \Leader switches to {\sf Detection} state). 
\end{itemize}

Therefore, in all cases, the lemma follows.
\end{proof}

\begin{theorem}\label{th:dp}
Consider  a dynamic ring  ${\cal R}$ with three colocated agents in the Pebble model. 
Algorithm \DP solves the \BHS\   with ${\cal O}(n^{1.5})$ moves and ${\cal O}(n^{1.5})$ rounds. 
\end{theorem}
\begin{proof}
By Lemmas~\ref{firstenter} and~\ref{leaderneverdies}, it follows that in at most ${\cal O}(n^{1.5})$ rounds, either \R, or \A, or both, reach the black hole. 
At this time, by Lemma~\ref{leader:terminate}, \Leader terminates in  ${\cal O}(n^{1.5})$ rounds; also, by Lemma~\ref{leaderisalwayscorrect}, \Leader correctly identifies the position of the black hole, thus solving \BHS.
It is also clear that agents in state \A and \R cannot terminate by algorithm design, they have no terminating state in their algorithms. Thus, they cannot terminate incorrectly. 

Finally the bound on the number of moves derives directly from the bound on the number of rounds needed to terminate, and from the fact that the number of agents is constant.   
\end{proof}

We can conclude that:
\begin{theorem}\label{ic3optimal}
Algorithm \DP is    size-optimal with optimal cost and time.\end{theorem}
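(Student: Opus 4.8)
The plan is to treat Theorem~\ref{ic3optimal} as a synthesis of three results already in hand, assembling them while being careful about the direction in which bounds transfer across the exogenous communication hierarchy. No new computation is required: the substantive work lives in the feasibility bound, the lower bound, and the analysis of \DP.

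First I would settle size-optimality. By Theorem~\ref{ic2impossibile}, two colocated agents cannot solve \BHS\ in a ring of size $n>3$, and this holds even in the strongest (Whiteboard) model with distinct IDs; hence at least three agents are necessary. Since \DP\ employs exactly three agents (the \R, the \A, and the \Leader), it matches this lower bound and is therefore size-optimal.

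Next I would establish the matching of cost and time. The lower bound comes from Theorem~\ref{th:sqrtn}, which gives $\Omega(n^{1.5})$ moves and $\Omega(n^{1.5})$ rounds for any three-agent algorithm; the upper bound comes from Theorem~\ref{th:dp}, which shows \DP\ achieves $\mathcal{O}(n^{1.5})$ moves and rounds. The point to emphasize is the interplay of the models in which each bound is proven. The lower bound is established for the \emph{strongest} exogenous model (Whiteboard): since whiteboards can simulate pebbles, a bound that already holds when agents carry whiteboards holds \emph{a fortiori} for the weaker Pebble model in which \DP\ actually runs. Conversely, the $\mathcal{O}(n^{1.5})$ guarantee of \DP\ is achieved in the \emph{weakest} exogenous model (Pebble), and because pebbles are simulable by whiteboards this upper bound carries up to the Whiteboard model as well. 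Putting the two together, $\Theta(n^{1.5})$ is the exact cost and time of size-optimal \BHS\ for colocated agents across the entire exogenous class, and \DP\ meets it; this establishes the claim.

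The only obstacle here is conceptual rather than technical: one must verify that the lower bound is stated for the stronger model and the upper bound for the weaker one, so that together they genuinely sandwich the complexity of every exogenous mechanism rather than merely pinning it down for a single model. Once that observation is in place, the conclusion is immediate from Theorems~\ref{ic2impossibile}, \ref{th:sqrtn}, and~\ref{th:dp}.
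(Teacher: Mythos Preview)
Your proposal is correct and mirrors the paper's own treatment: the paper presents Theorem~\ref{ic3optimal} as an immediate corollary (``We can conclude that:'') of Theorems~\ref{ic2impossibile}, \ref{th:sqrtn}, and~\ref{th:dp}, without writing out a separate proof. Your explicit discussion of how the lower bound in the stronger Whiteboard model and the upper bound in the weaker Pebble model together pin down the complexity across the exogenous hierarchy is more detailed than what the paper states, but it is exactly the intended reasoning.
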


\section{Scattered Agents}\label{sec:scattered}

In this section we study the \BHS\ in rings when agents are {\em scattered}, i.e., they start from different home-bases. As we did for the colocated case we distinguish between Endogenous and Exogenous communication mechanisms.

\subsection{Impossibility with Endogenous Communication}
When the agents are scattered,  any number of them, even equipped with the stronger endogenous mechanism (i.e., F2F model), cannot solve \BHS on rings of arbitrary size. 
Interestingly,  the following theorem holds also for static rings. 

\begin{theorem}\label{scattered:idsimpossible}
A constant number of scattered agents in the F2F model cannot solve \BHS on a static ring 
of arbitrary size $n$, even if they
have distinct IDs.  
\end{theorem}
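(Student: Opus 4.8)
The plan is to argue by contradiction with an adversary that keeps the location of the \BH\ ambiguous to every surviving agent. Fix a putative algorithm ${\cal A}$ run by a team of $k={\cal O}(1)$ agents with distinct IDs on a static ring, and record two structural facts. First, since communication is face-to-face and the agents are scattered and ignorant of one another's positions, before an agent meets anybody its entire behaviour is a deterministic function of its ID, the round number, and the (anonymous, agent-free) local snapshots it sees; hence the adversary can precompute the solo trajectory of every agent. Second, for ${\cal A}$ to be correct some surviving agent must be certain of the \BH's node at termination, and in a ring such certainty can arise in only two ways: either the combined knowledge of the survivors (nodes they visited, plus what they learned in past meetings) certifies that all but one node is safe, pinning the \BH\ by elimination; or some surviving agent $W$ directly witnessed another agent step from $W$'s node into the \BH\ and fail to return. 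The second route requires a \emph{co-located surviving witness} at the instant of the fatal step.

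The adversary I would build chooses $n\gg k$, chooses the scattered starting nodes, and reserves a large unexplored arc. It commits to the \BH\ as late as possible by coupling two executions $E_1,E_2$ on the \emph{same} ring (so $n$ and the agents' knowledge of $n$ are identical in both) that differ only in that the \BH\ sits at a node $v$ in $E_1$ and at the node $v'$ in $E_2$, where $v'$ is the node an agent $A$ would step into immediately after surviving $v$, and both $v,v'$ lie deep inside the reserved arc. The aim is to exhibit a single agent $A$ that enters $\{v,v'\}$ while in a forward-exploring mode and with no co-located witness that survives, and to guarantee that no surviving agent ever enters $\{v,v'\}$ in either execution. If this is arranged, then up to the moment $A$ first reaches $v$ the two executions are literally identical; in $E_1$ agent $A$ dies on entering $v$, while in $E_2$ it survives $v$ and dies one step later on entering $v'$. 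Because $A$ is unwitnessed along this last stretch and communicates nothing afterward, the one-node difference in its path is invisible to everybody else, so every surviving agent has the same view, and hence the same output, in $E_1$ and in $E_2$. Since the correct answer differs ($v$ versus $v'$), ${\cal A}$ errs in at least one execution; and if the agents never terminate, ${\cal A}$ violates finiteness. Either way we obtain a contradiction.

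To produce the pair $v,v'$ and the lone prober $A$ I would combine the solo-trajectory precomputation above with a pigeonhole argument over the large ring. Here is where ``scattered, no knowledge of positions, constant team'' is essential: each agent must obey a uniform ID-based rule, so the adversary is free to place the agents and the reserved arc so that the first agent to reach that arc arrives alone and in a forward step (the very step that defines $v$ and then $v'$). A witnessed pair could reach the arc only by first meeting elsewhere and then performing a cautious walk inward; but with only $k={\cal O}(1)$ agents the total ring-length that can be certified by witnessed cautious walks up to any fixed round is bounded, while the adversary reserves an arc of length $\Theta(n/k)$ far from all such activity. Taking $n$ large then forces at least one disambiguating entry into the reserved arc to be performed by a lone, unwitnessed agent, which is exactly the $A$ we need; and once the \BH\ is fixed inside the arc it blocks the arc, so later arrivals die before crossing $\{v,v'\}$.

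The step I expect to be the main obstacle is precisely this last point: ruling out that the team coordinates so as to deliver a \emph{surviving} witness to the neighbourhood of the \BH. A naive algorithm expands the frontier with lone agents and hands the adversary its prober for free, but a clever one would try to gather a pair first and then cautiously walk toward the hole, which would break the ambiguity and correspond to the ``witnessed entry'' route of the correctness analysis. Defeating this amounts to proving a quantitative statement of the form ``for every ${\cal A}$ there is a starting placement and a reserved arc such that every agent able to reach the arc in time does so without a surviving witness'', i.e.\ a pigeonhole of arcs of a sufficiently large ring against the bounded witnessed-sweep capacity of a constant team. Establishing this ``bad placement exists'' claim rigorously is the heart of the proof; the coupling and indistinguishability bookkeeping sketched above are then routine.
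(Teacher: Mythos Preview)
Your high-level strategy---couple two executions with the black hole at adjacent nodes $v,v'$ and argue indistinguishability for the survivors---is sound in spirit, but the gap you yourself flag is real and unfilled. The claim that some agent must enter the reserved arc \emph{alone and unwitnessed} is exactly where the work lies, and your pigeonhole sketch does not establish it. In a static ring nothing prevents two agents from meeting early (their solo trajectories may simply intersect) and thereafter performing a joint cautious walk into any arc you have reserved; to rule this out you must control the initial placement so that no such meeting occurs before the arc is breached, and the proposal does not do this.

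The paper's proof takes a route that sidesteps this difficulty entirely: rather than reserving an arc and reasoning about who reaches it first, it constructs a placement in which \emph{no two agents ever meet at all}. Each agent's solo trajectory is a fixed function of its ID. The paper splits the agents into those whose solo trajectory is $L$-bounded (these are parked far apart and simply stay confined forever) and those whose trajectory is unbounded. The unbounded agents are placed one at a time, inductively: the $i$-th is positioned at precisely the distance $d_i$ from the black hole that makes its solo walk enter the hole at some round $r_i$, with the spacing $d_i = d_{i-1} + 2(D(r_{i-1})-1)+1$ chosen so that by round $r_{i-1}$ all previously placed agents have already died and no pair has come within meeting distance. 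Taking $n$ large enough to accommodate the whole layout yields a run in which every agent either walks into the hole or remains confined to a bounded segment, and nobody ever communicates with anybody---so no survivor can output the location.

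That inductive placement is precisely the ``bad placement exists'' lemma you identified as the heart of the matter; the paper proves it directly and constructively. Once one has a run with zero meetings, your two-execution coupling apparatus becomes unnecessary: trivially no survivor has any information distinguishing one black-hole location from another.
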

\begin{proof}
Let us consider $k={\cal O}(1)$, and ${\cal A}$ be an algorithm that correctly solves the problem. The proof is by contradiction; we will show that there exists an initial configuration $C$ of the $k$ agents on a ring of a proper size $n$ that makes ${\cal A}$ fail.
We will construct the configuration $C$ in such a way that no two agents meet.

 Let $id_1,id_2,\ldots,id_k$ be the IDs of agents, with a small abuse of notation we will use $id_i$ to denote also the agent with $id_i$.

We consider the behaviour of agent $id_i$ until round $r$ in a run where it executes the algorithm ${\cal A}$, and it does not meet any agent. 
 
 Let $DLeft(id_i,r)$ be the furthest distance travelled in the left direction from its initial homebase by agent $id_i$ until round $r$, and let $DRight(id_i,r)$ be the analogous for the right direction. Moreover, we define as $D(r)$ the maximum distance in any direction travelled by an agent until round $r$. 
 
 We say that agent $id_i$ is $L$-bounded if $\exists L \in \mathbb{N}$ such that $\forall r > 0$ we have $DLeft(id_i,r) < L$ and $DRight(id_i,r) < L$. Note that, an $L$-bounded agent, when alone on a ring of a sufficiently large size, will be perpetually confined on a constant set of nodes, never visiting the entire ring. Let $B$ be the set of agents that are $L$-bounded for some $L$. 
 Now let us focus on the set $UB$ of agents in $A$ that are not bounded, let $t=|UB|$. We will show that we can place agents in $UB$ on a ring $R$ of a size $n$ (value that we will specify later) such that: (1) all agents placed will enter in the black hole, and (2) no two agents meet. The placement is constructed inductively. At each step we place on more agent from $UB$ on $R$, in such a way: the newly placed agent $id_i$ will enter in the black hole at a round $r_i$, by round $r_i$ all the agents previously placed also entered the black hole, and no meeting between any pair of agents happened. 
 \begin{itemize}
 \item First agent: Let $r_1$ be the first round at which some agent, w.l.o.g $id_1 \in UB$ has $DLeft(id_1,r_1) =1$ or $DRight(id_1,r_1) =1$, note that it also holds $D(r_1)=1$. It is clear that if we position agent $id_1$ at distance $d_1=D(r_1)$ from the black hole \BH in the correct direction (either left or right according to whether $DLeft(id_1,r_1) =1$ or $DRight(id_1,r_1) =1$) it will enter in the black hole at round $1$. 
 
 \item Second agent: The second agent will be placed at distance $d_2=d_1+2(D(r_1)-1)+1$ from the \BH. We argue that if we place any agent at this distance either in the left or right direction, then it will not meet any of the previous agents by round $r_1$: no agent does more than $D(r_1)$ step before round $r_1$, and at round $r_1$ all agents placed before will enter in the \BH. 
 
 Let $r_2 >r_1$ be the first round at which some agent, w.l.o.g. $id_2 \in UB \setminus \{id_1\}$ has $DLeft(id_2,r_2) =d_2$ or $DRight(id_2,r_2) =d_2$, note that it also holds $D(r_2)=d_2$. We stress that such a round exists, since no agent in $UB$ is $L$-Bounded.
 
Now, if we place agent $id_2$ at distance $d_2$ from the black hole \BH in the correct direction it enters in the black hole by round $r_2$.

\item Third agent: The third agent will be placed at a distance $d_3=d_2+2(D(r_2)-1)+1$ from \BH. We argue that if we place any agent at this distance from \BH either in the left or right direction, no agent will meet by round $r_2$: no agent does more than $D(r_2)-1$ steps before round $r_2$, the other agents are placed at a distance that is larger than $2D(r_2)-1$ and by round $r_2$ any previously placed agent entered in \BH. 

Let $r_3>r_2$ be the first round at which some agent, w.l.o.g. $id_3 \in UB \setminus \{id_1,id_2\}$ has $DLeft(id_3,r_3) =d_3$ or $DRight(id_3,r_3) =d_3$, note that it also holds $D(r_3)=d_3$. We stress that such a round exists, since no agent in $UB$ is $L$-Bounded.

Now, if we place agent $id_3$ at distance $d_3$ from the black hole \BH in the correct direction it enters in the black hole by round $r_3$.  

\item $t$-th agent: The $t$-th agent will be placed at a distance $d_t=d_{t-1} +2(D(r_{t-1}-1))+1$ from the \BH. Let $r_t > r_{t-1}$ be the first round in which agent $id_t$ has $DLeft(id_t,r_t) =d_t$ or $DRight(id_t,r_t) =d_t$. We stress that such a round exists, since $id_t$ is not $L$-Bounded.

 By construction by round $r_{t-1}$ there can be no meeting between agent $id_t$ and agents $id_1, \ldots, id_{t-1}$: no agent travels a distance $D(r_{t-1})$. Since until round $r_{t-1}$ all agents $id_1, \ldots, id_{t-1}$ act has they are alone, then
by inductive hypothesis by round $r_{t-1}$ all agents $id_1, \ldots, id_{t-1}$ entered in the \BH without any meeting between them. It is now immediate to see that at round $r_t$ also $id_t$ enters in the \BH. 

 \end{itemize} 

Now we have to place the agents that are in $B$. Let $L_{M}$ be an upper bound on the nodes explored by any agent in $B$. It is clear that if we place agents in $B$ at a distance $2L_{M}+1$ from each other, they will never meet. 
Therefore, if we place all agents in $B$ on ring $R$ starting at a distance $3d_t+L_{M}+1$ from \BH at regular intervals of size $2L_{M}+1$, then they will never meet with any of the agents in $UB$. 
In order for the above placement to be possible we take a ring $R$ of size $n \geq 2(3d_t + |B|2(L_{M}+1))+1$. 
It is clear that ${\cal A}$ cannot be a correct algorithm on $R$ since an agent enters the \BH without meeting anyone else, or the agent is perpetually confined in a portion of the ring, and it does not meet anyone else.
In both cases, no agent knows the position of \BH. 
\end{proof}

\subsection{Exogenous Communication Mechanism} 
Fortunately, any exogenous mechanism circumvents the impossibility of Th.~\ref{scattered:idsimpossible}. In this section we focus on such mechanism, as we remarked in Section~\ref{whiteworld}, even the weaker exogenous mechanism (i.e., pebbles) allow for symmetry breaking and communication of messages.

\subsubsection{A Quadratic Lower Bound on the Cost and Time of Size-optimal Algorithms.}
Interestingly, we can show a quadratic lower bound on the number of moves and rounds of any size-optimal algorithm that solve that \BHP with scattered agents; the bound holds even if agents have IDs and whiteboards are present.

\begin{theorem}\label{scat:lb}
In  a dynamic ring ${\cal R}$   with whiteboards, any algorithm ${\cal A}$ for \BHS with three scattered agents with unique IDs requires $\Omega(n^{2})$ moves and $\Omega(n^{2})$ rounds.
\end{theorem}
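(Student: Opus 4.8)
The plan is to argue by contradiction: assume $\mathcal{A}$ locates the \BH\ in $o(n^2)$ moves, and exhibit an adversarial placement of the three scattered agents, a choice of \BH, and a schedule of edge removals on which $\mathcal{A}$ fails to be both correct and cheap. The guiding intuition is that the colocated exogenous complexity drops to $\Theta(n^{1.5})$ (Theorem~\ref{th:sqrtn} together with Algorithm~\DP) only because the avanguard and the \Leader\ start adjacent: the forward cautious walk—and hence the per-node reporting that preserves a survivor's knowledge of each visited node—costs $O(1)$, and the genuinely expensive $\Omega(n)$ reporting is confined to the \R\ and amortised over sectors of size $\sqrt n$. I would show that scattering destroys exactly this free adjacency, so that the step which was $O(1)$ per node in the colocated solution is forced up to $\Omega(n)$ per node over $\Omega(n)$ nodes.

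First I would fix the adversary's static choices: place the three agents at pairwise distance $\Theta(n)$, and commit to keeping the \BH\ inside a distinguished arc $S$ of $\Theta(n)$ consecutive nodes that initially contains a single agent, say $a$, with the other two agents separated from the eventual \BH. The key dynamic device is that the \BH, an impassable node, together with a single well-chosen missing edge on the complementary side, separates $a$ from the other agents: any attempt by a partner to reach $a$'s frontier on the short side is blocked by the (eventual) \BH, while on the long side it must traverse essentially all of $S$, which is precisely the exploration we are lower-bounding. Thus no partner can be brought adjacent to the exploring frontier without paying $\Omega(n)$, and the same holds symmetrically. I would then combine the cut argument of Observation~\ref{madebyus} with a knowledge-preservation requirement: at every expansion round two agents must push on the two ends of the current explored arc, and by the placement these two pushers are $\Theta(n)$ apart; moreover, correctness demands not merely that the nodes of $S$ be visited cautiously, but that a surviving agent be able to name the exact node of $S$ that is the \BH. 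As in Theorem~\ref{th:sqrtn}, I would run the adversary so that every whiteboard annotation an isolated agent leaves on the counter-clockwise side of the \BH\ stays permanently unreachable by the survivors, forcing localisation from the far side one node at a time; and if the exploring pusher ever advances two unreported nodes, one of which is the \BH, the adversary blocks it on those two neighbours and the survivors can no longer disambiguate. Hence each newly explored node of $S$ must be accompanied by a fresh communication between the two $\Theta(n)$-separated pushers, costing $\Omega(n)$ moves, and since $|S|=\Theta(n)$ nodes are explored this way the total is $\Omega(n^2)$. The bound on rounds follows as usual, since a constant number of agents needs $\Omega(n^2)$ rounds to perform $\Omega(n^2)$ moves.

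The main obstacle is justifying that scattering genuinely precludes the cheap $O(1)$ per-node reporting, i.e. that the adversary can maintain both the $\Theta(n)$ separation of the pushers and the isolation of the exploring agent for the entire traversal of a linear-length arc, rather than only until the agents manage to rendezvous once—after which they could simply gather and imitate the colocated $\Theta(n^{1.5})$ solution. This is exactly where the black hole is indispensable: unlike Lemma~\ref{made:constant}, which yields only an $\Omega(\sqrt n)$ solitary sequence and is therefore too weak here, the construction must use the \BH\ as a permanent impassable barrier on one side while the missing-edge schedule stalls the sole alternative route, so that any rendezvous creating a local witness is itself as costly as exploring $S$. Carrying this out while respecting $1$-interval connectivity—exactly one edge missing per round, and the snapshot connected every round—and still forcing $\Omega(n)$ nodes to be explored under isolation is the delicate heart of the proof.
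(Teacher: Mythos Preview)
Your proposal captures the right contradiction framework and correctly imports the cut argument (Observation~\ref{madebyus}) and the two-node indistinguishability idea from Theorem~\ref{th:sqrtn}. But the obstacle you yourself flag in the final paragraph is fatal to the approach as written, and the paper resolves it by a trick you are missing.

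The difficulty is this: with three live agents initially at pairwise distance $\Theta(n)$, nothing you have said prevents the algorithm from spending $O(n)$ moves to gather all three at a single node and then running the colocated protocol of Section~\ref{dpalgorithm}, for a total of $O(n^{1.5})$ moves. You assert that ``any rendezvous creating a local witness is itself as costly as exploring $S$'', but this is false in general: the \BH\ is only an impassable barrier to agents who know where it is, and the adversary cannot both use it as a wall (forcing agents to avoid it) and keep its location secret. Lemma~\ref{lemma:variationgathering} in the paper even shows that agents doing a cautious walk in the same direction gather within $9n$ rounds; a sub-quadratic algorithm could exploit exactly this.

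The paper's proof sidesteps the gathering issue by a much simpler device: it places one agent, say $c$, on the node adjacent to \BH\ so that $c$'s very first move sends it into the black hole at round $0$, before it can write anything useful or meet anyone. This immediately reduces the instance to two surviving agents $a,b$ who must locate \BH\ among $\Theta(n)$ candidates. Now the colocated $\Theta(n^{1.5})$ solution is unavailable (it needs three agents), and the argument proceeds exactly as you sketch for the two pushers: by Observation~\ref{madebyus} they must sit on opposite ends of a $\Theta(n)$-wide explored arc, so any communication costs $\Theta(n)$; if the algorithm is sub-quadratic then one of them, say $b$, must explore two nodes $v_1,v_2$ without an intervening communication with $a$; and since $a,b$ never heard from $c$, the adversary is free to choose \BH\in$\{v_1,v_2\}$ while confining $a$ to nodes $b$ did not mark, leaving $a$ unable to disambiguate. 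The key idea you should add is the immediate sacrifice of one agent via the initial placement; once you have that, the rest of your outline works.
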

\begin{proof}
The proof is by contradiction. Let ${\cal A}$ be a sub-quadratic algorithm that solves the \BHP; and let $a$, $b$, and $c$ be the tree agents. Suppose the agents have unique IDs, and, without loss of generality, let $c$ be the first agent that moves. 

Let us assume an initial configuration where $c$ is initially placed on node $v_{c}$, neighbour of \BH, and where $a$ and $b$ are on two neighbours nodes, $v_a$ and $v_b$, at distance $n/2$ from $v_c$. 
 Furthermore, let us assume that $c$ is placed in such a way that when it moves, it immediately enters \BH. Without loss of generality, we may assume that $r=0$ is the round and that $v_c$ is the counter-clockwise neighbor of \BH. 

Let $U_r$ be the partition of nodes explored by agents $a$ and $b$ at the end of round $r$. By Observation~\ref{madebyus}, agents $a$ and $b$ may explore a node outside $U_r$ only if they try to traverse at the same round both the edges crossing the cut $U_r$ and $V \setminus U_r$. Let $e_c$ be the clockwise edge incident in $U_r$, and $e_{cc}$ be the counter-clockwise one. Edge $e_c$ might always be missing, thus preventing the agents from crossing it. Therefore, \BH might only be reached by its clockwise neighbor, and node $v_c$ will never be explored. 

Let $r$ be a round such that $U_{r-1} \subset U_r$. If $U_{r-1}$ is ${\Theta}(n)$, in order for both agents to explore $U_r$, they have to perform ${\Theta}(n)$ moves.
Therefore, since by hypothesis ${\cal A}$ is sub-quadratic, there must exists an agent, say $b$, that explores at least two nodes, say $v_1$ at round $r_1$ and $v_2$ at round $r_2$, such that it (1) does not communicate with $a$ between the two explorations and (2) both $a$ and $b$ visits $o(n)$ disjoint nodes between $r_1$ and $r_2$. 

Note that, since the initial configuration is arbitrary and $a$ and $b$ never received any information communicated from $c$, the position of $v_1$ and $v_2$ does not depend on the position of \BH and $v_c$. Therefore, there can exist two initial configurations $C_1$ and $C_2$, such that $v_1=$\BH in $C_1$ and $v_2=$\BH in $C_2$. 
Since $b$ reaches \BH by round $r_2$, $a$ is the only agent that can disambiguate between the two configurations. However, $a$ might be blocked indefinitely on a set of nodes that
was never visited by $b$ after round $r_1$ (see Observation~\ref{madebyus}, and recall that at round $r_2$ agent $a$ was trying to traverse edge $e_c$). 
Consequently, $a$ is not able to distinguish $C_1$ and $C_2$; thus, ${\cal A}$ cannot be correct, having a contradiction.
Finally, the bound on the number of rounds derives immediately from the one on moves and the fact that there is a constant number of agents.. 
\end{proof}

The above theorem shows the cost optimality of the size-optimal algorithm \GL described in Section~\ref{scatter:threeagents}. 

\subsubsection{An Optimal Exogenous Algorithm: \GL} \label{scatter:threeagents}
In this section, we describe an algorithm to solve the problem with $k=3$ anonymous agents using pebbles.
We name the algorithm \GL. \GL works in two phases:

\begin{itemize}
\item {\bf Phase 1:} In the first phase agents move clockwise using pebbles to implement a cautious walk. If they meet they synchronise their movements such that at most one of them enters in the black hole. The Phase 1 lasts until all three agents meet or $9n$ rounds have passed. 
We will show that at the end of this phase we have either:
\begin{itemize}
\item  (1) three agents are on the same node or on the two endpoint nodes of the same  edge. In this case we say that agents gathered; 
\item (2) the counter-clockwise neighbour of the black hole has been marked, at most one agent is lost, and the two remaining agents are gathered (that is they are on the same node, or on two endpoints of an edge);
\item  (3)one agent is lost in the black hole, the counter-clockwise neighbour of the black hole has been marked. For the remaining two agents we have that either both terminated locating \BH, or at least one terminated, and the other is still looking for the \BH. In Phase 2 this last agent will either terminate or it will be blocked forever (Note that in either case the problem has been solved, since at least one agent located the \BH). 
\end{itemize}

\item {\bf Phase 2:} The second phase starts after the previous one, and relies on the properties enforced by the first phase. In particular, if at the beginning of this phase three agents are on the same node, they start algorithm \RT. Otherwise, if there are at least two agents on the same node, they act similarly to \R and \Leader in \RT.  If none of the above applies, then two agents are on the two endpoints of a missing edge, or a single agent remained and still has to discover the \BH. This scenario is detected by a timeout strategy: upon a timeout, an agent starts moving clockwise looking for the node marked during Phase 1. If two agents meet during this process, they act similarly to \R and \Leader in \RT. Otherwise, in case a single agent is still active, the agent will either reach the marked node (and terminate correctly) or it will be blocked forever on a missing edge. We remark that in this last case there has been an agent correctly terminating in Phase 1, and thus \BHP is still correctly solved.  
\end{itemize}

\paragraph{Detailed Description.}
The pseudocode of {Phase 1} is reported in Algorithms~\ref{alg:scatteredchirality},~\ref{alg:scatteredchiralityfollower}, and~\ref{alg:scatteredchiralityexplorer}; and Phase 2 in Algorithms~\ref{alg:p2chirscat} and~\ref{alg:p2t}. 
In the pseudocode, we use predicate $\#A=x$ that is verified when on the current node there are $x$ agents. Initially, all agents have role {\sc Anon} (that is, agents are anonymous). As discussed in Section~\ref{par:pebble}, if one agent is trying to remove the mark from a node, during a cautious walk, it will not trigger the meeting predicate with any of the other agents. 
\medskip

 \begin{figure}
\center
  \begin{subfigure}[b]{0.6\textwidth}
  \center
    \includegraphics[width=\textwidth]{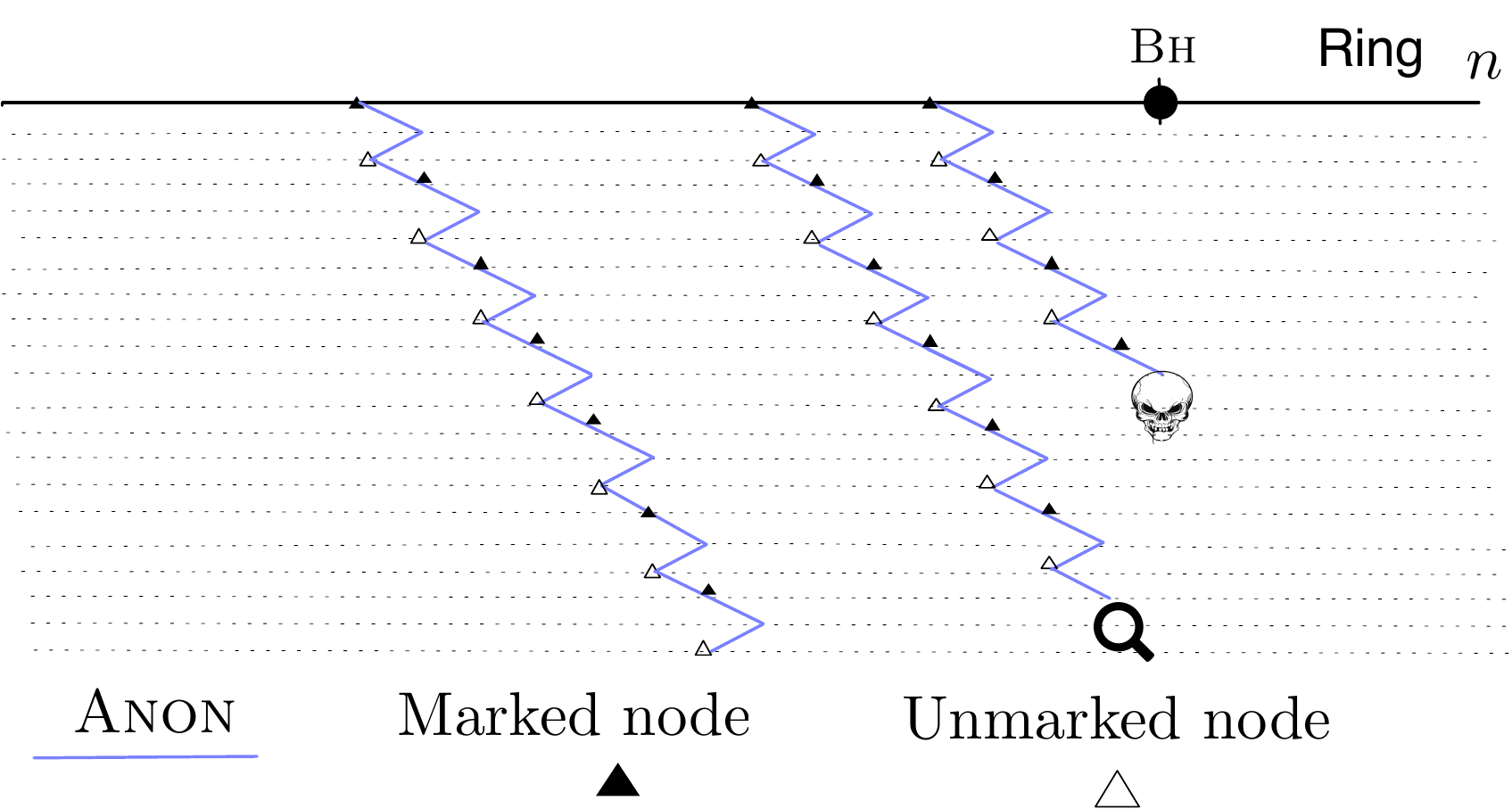}
    \caption{Phase 1: run where \BHP is solved.}
    \label{scat:phase1c1}
  \end{subfigure}
\\
  \begin{subfigure}[b]{0.6\textwidth}
  \center
    \includegraphics[width=\textwidth]{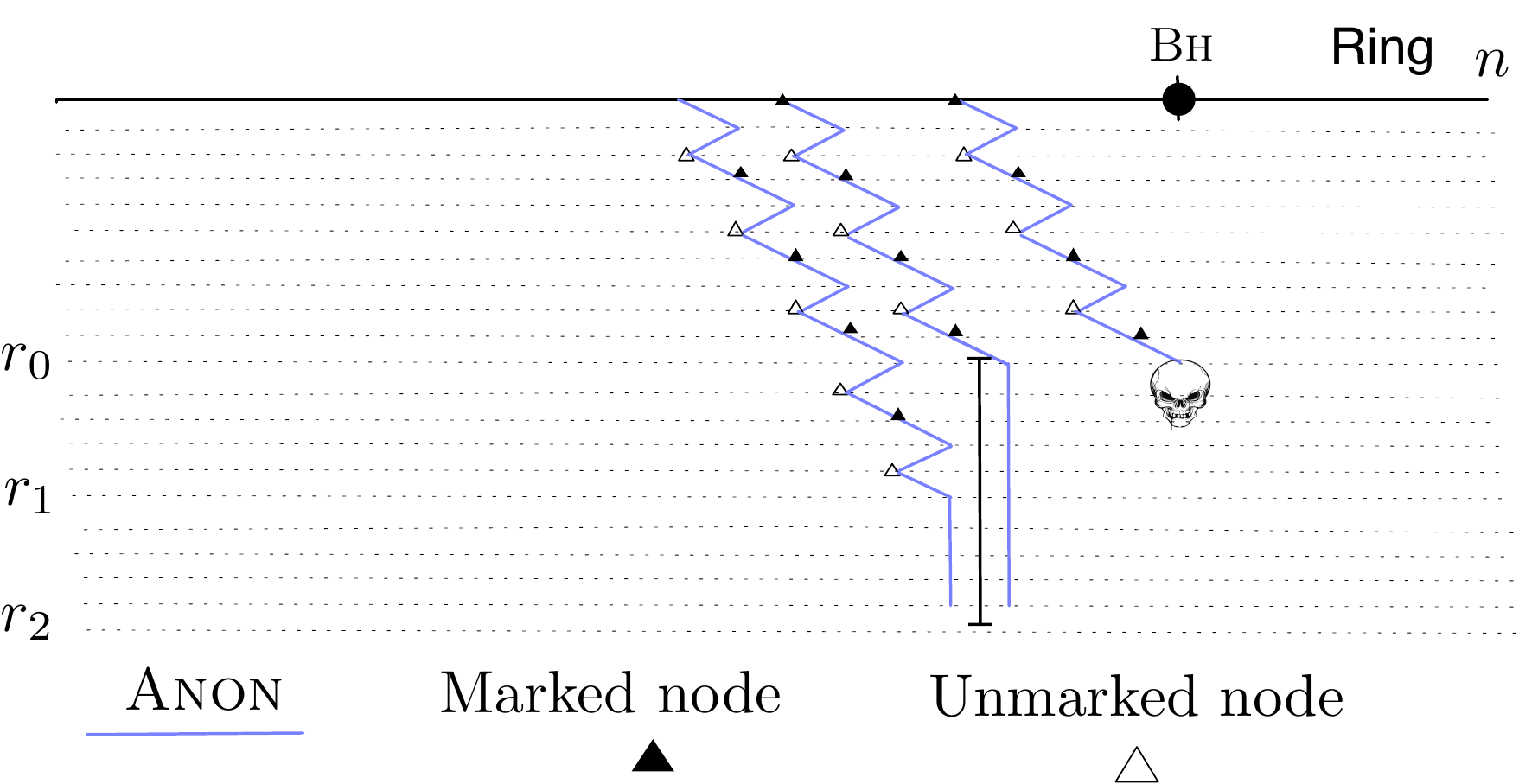}
    \caption{Phase 1: run where two agents gather. At round $r_0$ the rightmost agent enters in the black hole, while
    the middle agent is blocked. At round $r_1$ the two remaining agents gathers.}
    \label{scat:phase1c3}
  \end{subfigure}
\\
    \center
  \begin{subfigure}[b]{0.6\textwidth}
    \includegraphics[width=\textwidth]{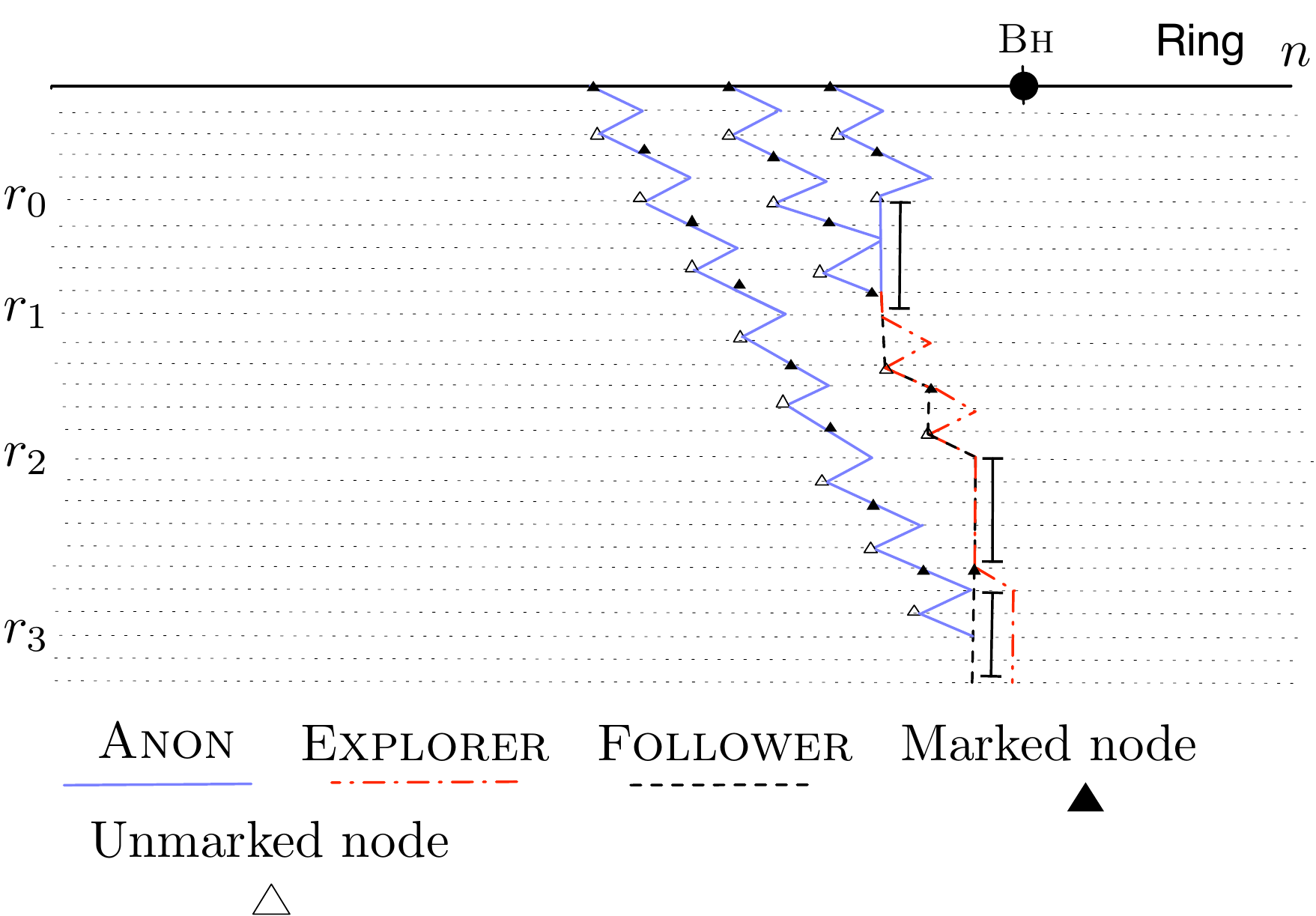}
    \caption{ Phase 1: run where three agents gather. At round $r_0$ the rightmost agent is blocked. At round $r_1$ two agents
    meet creating a pair {\sc Explorer}-{\sc Follower}. At round $r_2$ the pair is blocked and the leftmost agent is able to catch up.
    At round $r_3$ the tree agents gathered. Note that the meeting predicate of the {\sc Anon} with the {\sc Follower} triggers at round $r_3$
    and $r_3-2$: when an agent is cautious exploring it cannot meet other agents if it has to unmark a node.}
    \label{scat:phase1c2}
  \end{subfigure}

  \caption{Example of runs for Phase 1 of \GL.}\label{fig:scat}
\end{figure}

{\bf Phase 1:} 
The first phase lasts for at most $9n$ rounds (refer also to the examples in Figure~\ref{fig:scat}). The agents start in state {\sf Init} of Algorithm~\ref{alg:scatteredchirality}:  
each agent walks cautiously clockwise for $9n$ rounds. 
If an agent reaches a marked node, then it waits until the next node can be deemed as safe or unsafe (see state {\sf Wait}). 

If in the marked node the incident clockwise edge is present and the agent that marked the node does not return, then the next node is the black hole (the agent terminates by triggering predicate $NextUnsafe$). 

If two {\sc Anon} agents meet on the same node (predicate $meeting[${\sc Anon}$]$), they synchronise their movements such that they will never cross an edge leading to a possibly unsafe node in the same round. Specifically, the agents enter in the synchronisation state {\sf Two}, where one agent becomes {\sc Follower} (Algorithm~\ref{alg:scatteredchiralityfollower}) and the other becomes {\sc Explorer} (Algorithm~\ref{alg:scatteredchiralityexplorer}). The role of {\sc Explorer} is to visit new nodes, while {\sc Follower} just follows {\sc Explorer} when a node is safe, (this is similar to \Leader and \A in \RT). 
If the remaining {\sc  Anon} agent meets with the {\sc Follower}, it will mimic the behaviour of the {\sc Follower} (predicate $meeting$[Follower] in state {\sf Init} and state {\sf Copy}). 
Finally, if the three agents meet on the same node, Phase 1 terminates (see predicate $\#A=3$ in all states). In any case, at the end of round $9n$, this phase ends.

\noindent In Section~\ref{sec:correctness}, we will show that, if in Phase 1 all the alive agents have not localised the \BH, then either:
\begin{itemize}
\item three agents gathered: either three agents are on the same node, or two agents are on a node $v$ and the third agent is blocked on the clockwise neighbour $v'$ of  (the marked) node $v$; or
\item the counter-clockwise neighbour of the black hole has been marked, at most one agent is lost, and the two remaining agents are gathered. The two agents are either on the same node, or on two different neighbours node and one of them has marked the node where the other resides. 
\item the counter-clockwise neighbour of the black hole has been marked, at most one agent is lost. One agent correctly terminated, while the other is still looking for the \BH. 
\end{itemize}


\vspace{0.3cm}
   \begin{algorithm*}
\footnotesize

\begin{algorithmic}[1]
\State  Predicates Shorthands: $NextUnsafe =\vEtime > \vMtime[C]$
\State $NextSafe=$ the agent that marked the node returned.
\State States: \{{\sf Init}, {\sf Wait}, {\sf EndPhase1}, {\sf Terminate}, {\sf Copy}\}.
\AtState{Init}
    \State \Call{CautiousExplore}{\dRight$|$ $\vTtime = 9n \lor \#A=3$: {\sf EndPhase1}; $marked$: {\sf Wait}; $meeting$[{\sc Anon}]: {\sf Two}; $meeting$[{\sc Follower}]: {\sf Copy}}
    \AtState{Wait}
    \State \Call{Explore}{\dNil$|$ $\vTtime = 9n \lor \#A=3$: {\sf EndPhase1}; $NextUnsafe$: {\sf Terminate}; $NextSafe:$ {\sf Init}}

     \AtState{Two}
      \State assign to yourself a role in $\{${\sc Follower}, {\sc Explorer}$\}$
	\State Execute the corresponding Algorithm, that is Alg.~\ref{alg:scatteredchiralityfollower} in state {\sf WaitFollower}, or Alg.~\ref{alg:scatteredchiralityexplorer} in state {\sf Explore}.
	     \AtState{Copy}
	\State set your role to {\sc Follower} with the same state of the other agent.
	
	\AtState{EndPhase1}
	 \State take the role of {\sc Anon}
          \State starts Phase 2 by entering state {\sf InitP2} of Alg.~\ref{alg:p2chirscat}. 
     \AtState{Terminate}

     \State terminate, \BH is the next node in clockwise direction.
\END
\end{algorithmic}
\caption{\GL; Phase 1 - Algorithm for scattered agents - {\sc Anon} \label{alg:scatteredchirality}}

\end{algorithm*}

   \begin{algorithm*}[h]
\footnotesize
\begin{algorithmic}[1]
\State States: \{{\sf Explore}, {\sf Back}, {\sf MoveForward}, {\sf EndPhase1}, {\sf Terminate}\}. \Comment{  {\sf Terminate} and {\sf EndPhase1} as in Algorithm~\ref{alg:scatteredchirality}}
\AtState{Explore}
\If{current node is not marked}
	\State mark current node \label{explorer:mark}
    \State \Call{Explore}{\dRight$|$ $\vTtime = 9n \lor \#A=3$: {\sf EndPhase1}; $\vEsteps >0:$  {\sf Back}}
    \Else
        \State \Call{Explore}{\dNil$|$ $\vTtime = 9n \lor \#A=3$: {\sf EndPhase1}; $NextUnsafe$: {\sf Terminate}} \Comment{If the node is marked we have to wait to see if it is safe to move}
    
    \EndIf
    \AtState{Back}
    \State \Call{Explore}{\dLeft$|$ $\vTtime = 9n \lor \#A=3$: {\sf EndPhase1}; $\vEsteps > 0:$ {\sf MoveForward}}
        \AtState{MoveForward}
        	\State unmark current node
    \State \Call{Explore}{\dRight$|$ $\vTtime = 9n \lor \#A=3 $: {\sf EndPhase1}; $\vEsteps > 0:$ {\sf Explore}}
   
\END
\end{algorithmic}
\caption{\GL; Phase 1 - Algorithm for {\sc Explorer} \label{alg:scatteredchiralityexplorer}}
\end{algorithm*}

   \begin{algorithm*}
\footnotesize
\begin{algorithmic}[1]
\State States: \{{\sf WaitFollower}, {\sf Follow}, {\sf EndPhase1}\}. \Comment{  {\sf EndPhase1} as in Algorithm~\ref{alg:scatteredchirality}}
\AtState{WaitFollower}
    \State \Call{Explore}{\dNil$|$ $\vTtime = 9n \lor \#A=3$: {\sf EndPhase1},  $Meeting[Back]$: {\sf Follow} }
    \AtState{Follow}
    \State \Call{Explore}{\dRight$|$ $\vTtime = 9n \lor \#A=3$: {\sf EndPhase1}, $\vEsteps > 0$: {\sf WaitFollower} }
   
\END
\end{algorithmic}
\caption{\GL; Phase 1 - Algorithm for {\sc Follower} \label{alg:scatteredchiralityfollower}}
\end{algorithm*}

\bigskip

 \begin{figure}
\center
  \begin{subfigure}[b]{0.45\textwidth}
  \center
    \includegraphics[width=\textwidth]{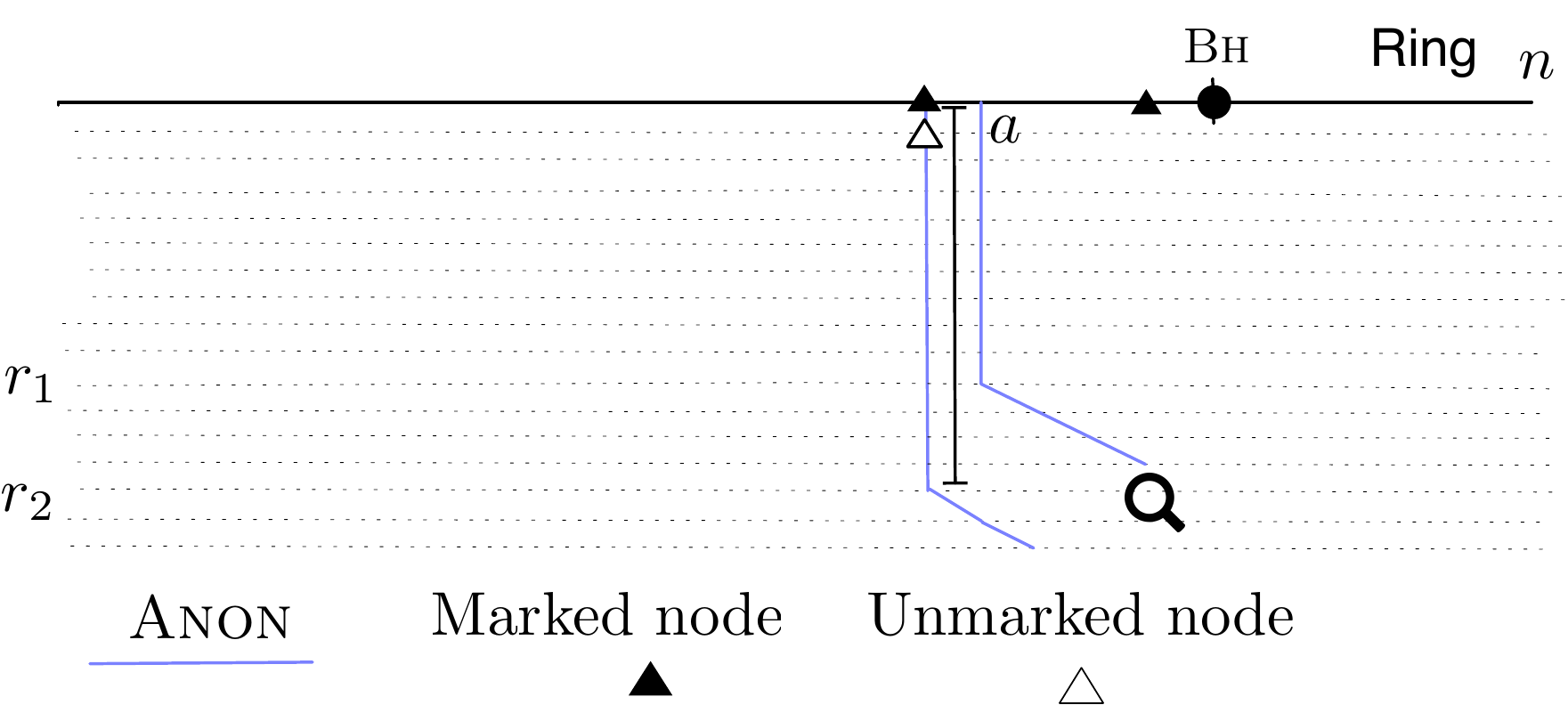}
    \caption{Phase 2: case of two agents gathered on the same edge. At round $r_1$ agent $a$ triggers the timeout and goes in state {\sf Forward} starting moving clockwise.
    At round $r_2$ it finds the marked node and terminates. Note that the other agent is either unblocked and thus reach the marked node and correctly terminate or it 
    waits forever on a missing edge.}
    \label{scat:phase2c1}
  \end{subfigure}
  \,\,
  \begin{subfigure}[b]{0.45\textwidth}
  \center
    \includegraphics[width=\textwidth]{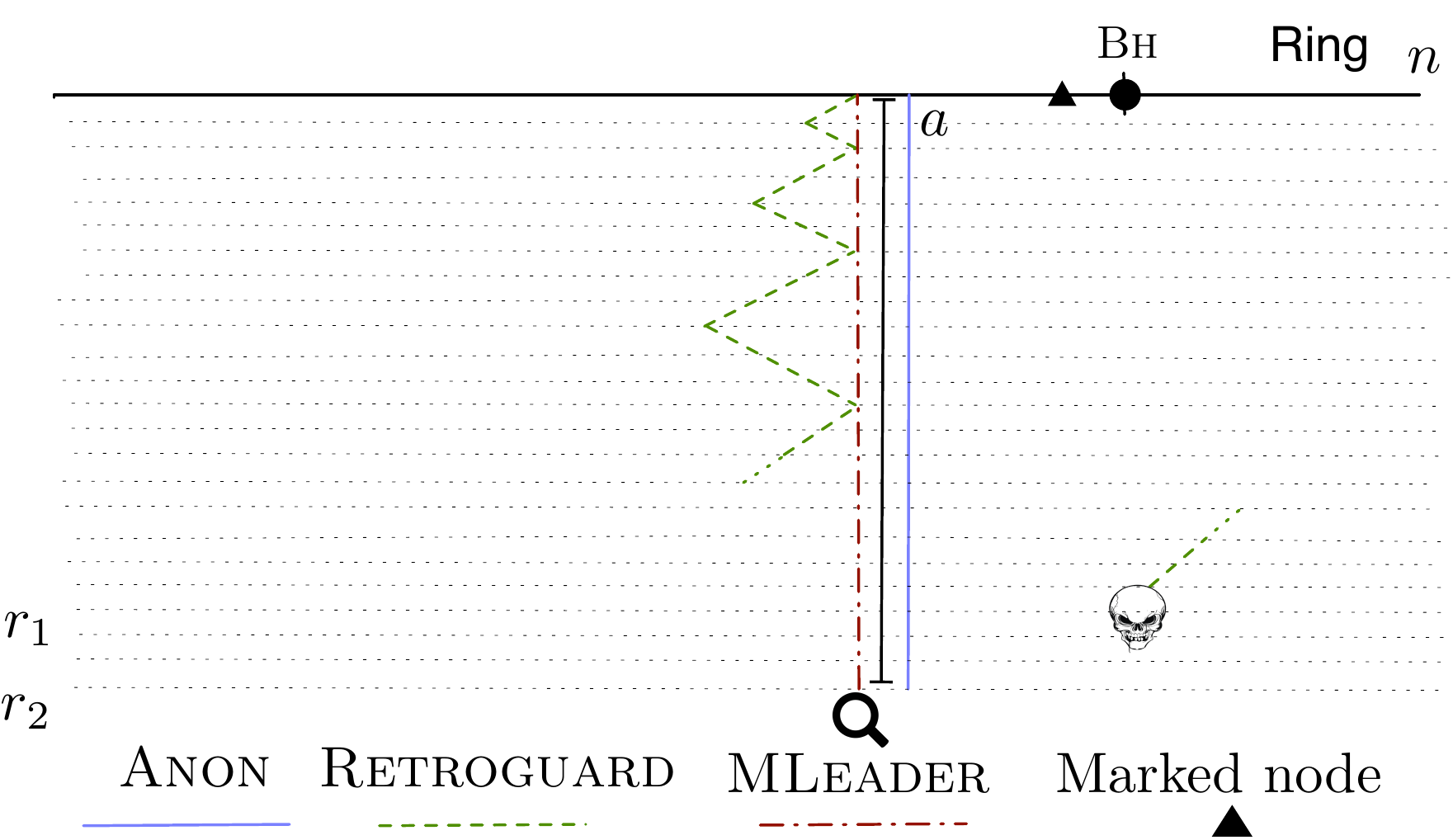}
    \caption{Phase 2: case of three agents gathered, two on the same node the other on the neighbour. At round $r_1$ the \R enters in the black hole. 
At round $r_2$ the {\sc MLeader} detects that \R failed to report an terminates. Note that this happens before $a$ timeouts.}
    \label{scat:phase2c2}
  \end{subfigure}
  \caption{Example of runs for Phase 2 of \GL.}\label{fig:p2}
\end{figure}

{\bf Phase 2:} 
The agents start in {\sf InitP2} state of Algorithm~\ref{alg:p2chirscat}: here, several checks are executed to understand how Phase 1 ended and to orchestrate the behaviour of the agents. In more details:
\begin{itemize}

\item Each agent checks if there are other agents on the same node:
in case there are two agents, they get the roles of \R and {\sc MLeader} (their behaviour is similar to \R and \Leader in \RT).
If there are three agents, they start algorithm \RT. 
\item Otherwise, the agent checks if it is missing the pebble. If this is the case, then the agent was blocked while trying to recover its pebble at the end of Phase 1: it starts moving counter-clockwise for $4 \cdot n^2$ rounds, and then it goes in the {\sf Foward} state. Intuitively, this move has the following goal: 
if there are two agents on the counter-clockwise node, then they have role \R and {\sc MLeader} and in $4 \cdot n^2$  rounds they have enough time
to find the black hole. Otherwise, if there is just one agent on the clockwise node or if there is no one, this timeout avoid that the agent is blocked forever on a missing edge. 

\item If the agent is on a marked node, then it waits there until either it meets the agent that marked the node, or $4 \cdot n^2$ have passed.
If they meet, they can break the symmetry by getting the roles of \R and {\sc MLeader}; otherwise, if the timeout triggers, the agent goes in state {\sf Forward}. 

\item If none of the above applies, the agent goes in state {\sf Forward}.
\end{itemize}

We now specify the behaviour of the agents: 
\begin{itemize}
\item Agent {\sc MLeader} and agent \R. The  {\sc MLeader} moves clockwise, while \R acts as in Algorithm \RT. 
If \R fails to report, {\sc MLeader} identifies the black hole and terminates. Finally, if {\sc MLeader} and an agent that is not \R meet, then this new agent takes the role of \A and {\sc MLeader} the role of \Leader, and they behave exactly as in Algorithm \RT (predicate $meeting[Leader] $ and state {\sf BeAvanguard} for the anonymous agent; and predicate \pMeeting[Anon] and state {\sf GoToCP} for the \Leader). The only caveat in this case, is that {\sc MLeader} keeps the value of variable $\#Meets[\R]$ when switching to \Leader.
\item Agent in state {\sf Foward}. In state {\sf Foward} an agent moves in the clockwise direction. If it reaches a marked node, then it discovered the black hole and the agent terminates.  If two agents in state {\sf Forward} meet, they break the symmetry, by getting the roles of {\sc MLeader} and \R. 

\end{itemize}

 \begin{algorithm*}
\footnotesize
\begin{algorithmic}[1]
\State  Predicates Shorthands: $NextUnsafe =\vEtime > \vMtime[C]$
\State States: \{{\sf InitP2}, {\sf BeAvanguard}, {\sf Terminate}, {\sf BreakSimmetry}\}.
\AtState{InitP2}
\If{$ \#A>1$}
\State go to state {\sf BreakSimmetry}
\ElsIf{my pebble is missing}
\State \Call{Explore}{\dLeft$|$$meeting$[{\sc Anon}]: {\sf BreakSimmetry}; $meeting$[{\sc MLeader}]: {\sf BeAvanguard};  
$\vEsteps >0$: {\sf Forward} ; \vTtime $> 4n^2$: {\sf Forward}} \label{p2:b1}
\ElsIf{the current node is marked}
\State take the pebble \label{p2:a1}
\State \Call{Explore}{\dNil$|$$meeting$[{\sc Anon}]: {\sf BreakSimmetry}; $\vTtime> 4n^2$: {\sf Forward}} \label{p2:a2}
\Else
\State go to state {\sf Forward}
\EndIf
\AtState{Forward}
    \State \Call{Explore}{\dRight$|$ $marked \land \vEsteps >0$: {\sf Terminate}; $meeting$[{\sc Anon}]: {\sf BreakSimmetry}; $meeting$[{\sc MLeader}]: {\sf BeAvanguard}}

     \AtState{BreakSimmetry}
     \State if your pebble is on the node take it.
          \If{\#A=2}
      \State take a role in \{ \R, {\sc MLeader} \}.
	\State execute the \RT if \R or Alg.~\ref{alg:p2t} in state {\sf Go} if {\sc MLeader}. 
	\Else 
	\State  take a role in \{ \R, \Leader, \A \}.
	\State start algorithm \RT.
	\EndIf
	\AtState{BeAvanguard}
	     \State if your pebble is on the node take it.
	 \State take the role of \A.
	 \State  start algorithm \RT.
     \AtState{Terminate}

     \State Terminate \BH is the next node in clockwise direction.
\END
\end{algorithmic}
\caption{\GL; Phase 2 - Algorithm for scattered agents - {\sf Anon} \label{alg:p2chirscat}}
\end{algorithm*}

 \begin{algorithm*}
\footnotesize
\begin{algorithmic}[1]
\State  Predicates Shorthands: $NextUnsafe=\vEtime > \vMtime[C]$.
\State $FailedReport[\R]= \vMtime[C]> 2 ((\#Meets[\R]+1)+\vTsteps)$.

\State States: \{{\sf Go}, {\sf Cautious}, {\sf StartCP}, {\sf Terminate},  {\sf TerminateR}\}.
\AtState{Go}

    \State \Call{Explore}{\dRight $|$  $marked$: {\sf Cautious}; \pMeeting[Anon]={\sf StartCP}; $FailedReport[\R]$: {\sf TerminateR}; }
   \AtState{Cautious}

       \State \Call{Explore}{\dNil $|$ \pMeeting[Anon]: {\sf StartCP}; $NextUnsafe:$ {\sf Terminate}; $FailedReport[\R]$: {\sf TerminateR} }
            \AtState{StartCP}
            \State start algorithm \RT with the role of \Leader keeping the value of variable $\#Meets[\R]$.
     \AtState{Terminate}
      \State Terminate, \BH is in the next node in clockwise direction. 
           \AtState{TerminateR}
      \State Terminate, \BH is in the node that is at distance $\#Meets[\R]+1$ from counter-clockwise direction from the reference node. 
\END

\end{algorithmic}
\caption{\GL; Phase 2 - Algorithm for {\sc MLeader} \label{alg:p2t}}
\end{algorithm*}

\paragraph{Correctness of \GL.\label{sec:correctness}}

\begin{definition} (Gathered configuration)
We say that a group of $k$ agents {\em gathered} if either:
\begin{itemize}
\item There are $k$ agents on the same node; or,
\item There are $k-1$ agents on node $v_i$, and one agent $a$ on node $v_{i+1}$. Moreover, agent $a$ marked node $v_i$ with a pebble and has to still unmark it.
\end{itemize}
\end{definition}

Let us first start with a technical lemma, derived from~\cite{DiLFPPSV20}, and adapted to our specific case. 

\begin{lemma}\label{lemma:variationgathering}
If  $k$ agents perform a cautious walk in the same direction for an interval $I$ of $9n$ rounds and one of the agents does not explore $n$ nodes, then the agents gathered. 
\end{lemma}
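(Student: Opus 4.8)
The plan is to use the hypothesis to fix an agent $a$ that explores fewer than $n$ distinct nodes during the interval $I$ of length $9n$, and to show that its slow progress forces every other agent to pile up behind it, ending in a gathered configuration. I would first isolate two facts. The \emph{progress rate}: one elementary step of a cautious walk (via \CExplore) is a forward move, a move back, and a forward move again, all testing the \emph{same} edge; hence, during any maximal run of rounds in which that edge is present, the walk advances its frontier by one new node every (at most) three rounds, so an agent that is never blocked explores at least $3n>n$ nodes in $9n$ rounds. The \emph{single-wall} property: the only edge that can block a cautiously-walking agent is the unexplored edge it is currently testing, and two agents not on the same node test distinct edges; since at most one edge is missing per round, in any round at most one agent is blocked, and whenever an agent is blocked every other agent is free to move. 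This is the same cut mechanism exploited in Observation~\ref{madebyus}.

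The main step is to quantify the blocking of $a$. Since $a$ explores fewer than $n$ nodes, by the progress rate it is unblocked for fewer than $3n$ rounds, hence \emph{blocked} for more than $6n$ of the $9n$ rounds. By the single-wall property, in each such round the unique missing edge is the clockwise edge incident to $a$; cutting the footprint there yields a path with $a$ at its clockwise end, and every other agent is free to advance clockwise, i.e. toward $a$. I would then bound net clockwise displacement. For any other agent $b$: during the $>6n$ rounds in which $a$ is blocked, $b$ is free and, by the progress rate, advances by more than $2n$; meanwhile $a$'s total displacement over all of $I$ is less than $n$, since it visits fewer than $n$ distinct nodes and never laps. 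Thus the clockwise gap from $b$ to $a$, initially at most $n-1$, changes by $\mathrm{disp}(a)-\mathrm{disp}(b)<n-2n=-n$, which would push it below zero — impossible unless $b$ reaches $a$'s node. Hence every other agent meets $a$.

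It remains to convert ``every agent meets $a$'' into the claimed gathered configuration, which is where the protocol's merging rules enter: the first agent reaching $a$ forms an \textsc{Explorer}/\textsc{Follower} pair, and any further agent reaching the \textsc{Follower} copies its role (predicate $meeting$[\textsc{Follower}], state {\sf Copy}), so the group stays together and keeps advancing as a single cautiously-walking unit. Iterating the catch-up argument with this slow group now playing the role of the wall, all $k$ agents end on the same node; or, if the final catch-up occurs while the \textsc{Explorer} is mid-cautious-walk (it has marked $v_i$ and sits on $v_{i+1}$), with $k-1$ agents on $v_i$ and the \textsc{Explorer} on $v_{i+1}$ — which is exactly the second alternative of the gathered configuration.

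The main obstacle I expect is the passage from ``$b$ catches up'' to an actual meeting on a common node, as opposed to the two agents crossing in lockstep (each stepping clockwise simultaneously and merely swapping across an edge). The cautious walk rules this out: when $a$ is blocked it is stationary, and the back-step of \CExplore returns a unit to its base node, so an agent arriving from behind lands on an occupied node and the meeting predicate fires. The only remaining care is bookkeeping for role changes — in particular checking that a \textsc{Follower}, which traverses already-safe nodes with plain \Explore rather than \CExplore, never becomes a second independently blocked agent and thereby violates the single-wall property during merging. This is routine but must be verified to keep the constants consistent with the $9n$ bound.
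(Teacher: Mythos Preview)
Your approach is essentially the paper's: bound from below the number of rounds in which the slow agent $a$ is blocked (at least roughly $6n$ out of $9n$), observe that in those rounds the other agents progress, and conclude that any initial gap of at most $n-1$ is closed.

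There is, however, a real oversight in your ``single-wall'' property. A cautious walk has three sub-moves per new node (forward, back, forward), so an agent can be blocked on the \emph{backward} step --- on an edge it has just crossed, not on ``the unexplored edge it is currently testing''. In that case a second agent sitting on the other endpoint of that same edge and trying to move forward is also blocked, so your claim ``in any round at most one agent is blocked'' is false as stated. The paper singles this out explicitly: it distinguishes \emph{forward blocked} from \emph{backward blocked} and observes that if two agents are simultaneously blocked then they occupy the two endpoints of the same missing edge --- which is precisely the second alternative in the definition of \emph{gathered}. With that observation in hand the paper runs the distance count directly (total increase $<n-1$ from $a$'s net progress, total decrease $\geq 2n-1$ while $a$ is blocked) and concludes the gap is zero, or the agents are on the endpoints of a missing edge, at the end of $I$; it never needs your detour through the {\sc Explorer}/{\sc Follower} merging rules. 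Your final worry about a {\sc Follower} becoming ``a second independently blocked agent'' is aimed at the wrong case: the genuine second-blocked-agent scenario is the forward/backward pair above, and it already arises among {\sc Anon} agents before any roles are assigned.
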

\begin{proof}
Let $A$ be the set of agents performing a cautious walk, say in clockwise direction, and let $a^*$ be the agent that does not explore $n$ nodes. 

Agent $a^*$ can be blocked in progressing its cautious walk in two possible ways: (i) when it is trying to explore a new node by a missing edge in its clockwise direction (we say that $a^*$ is {\em forward blocked}); (ii) when
it is returning to a previously explored node to unmark it (it is blocked by an edge missing in its counter-clockwise direction, and we say that $a^*$ is {\em backward blocked}). 
If in a round $r$ an agent is forward blocked and another one is backward blocked, then they are on two endpoints of the same missing edge.  

 If $a^*$ is not blocked for $3(n-1)$ rounds then it would have explored $n$ nodes. Therefore, $a^*$ has been blocked for at least
 $6n-3$ rounds in an interval of $9n$ rounds.
If there is a round $r'$ when $a^*$ is blocked, then every $a\in A$
that at round $r'$ is not blocked does move (note that all blocked agents are either backward or forward blocked on the same edge of $a^*$).

Thus, all agents in $A$ that are not
in the same node as $a^*$
 move towards $a^*$ of at least  $\frac{6n-3}{3} =2n-1$ steps. 
On the other hand, every time  $a^*$ moves, the other agents might be blocked; however,  by hypothesis, this can occur less than $3n$ times.
 
 Since  the initial distance between $a^*$ and  an agent  in $A$  is at most $n-1$, it follows that
 such a distance increases less than $n-1$ (due to $a^*$ movements); however, it decreases by $2n-1$ (due to $a^*$ being blocked). In conclusion,
this distance is  zero: that is, they are either at the same node or at the two endpoints of a missing edge,  by the end of $I$, and the lemma follows.
 \color{black}
\end{proof}

\begin{lemma}\label{scattered:goodmarking}
Given three agents executing Phase 1, at most one of them enters the black hole. In this case, the counter-clockwise neighbour node of the black hole is marked by a pebble. 
\end{lemma}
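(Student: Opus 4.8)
The plan is to build everything on two structural features of Phase~1: that every agent advances to a \emph{new} node only while moving clockwise, and that every forward step onto an unexplored node is preceded by a pebble mark of the current node (the cautious walk). First I would isolate the directional invariant: inspecting the states, new nodes are reached only through \dRight moves in the forward exploration steps (state {\sf Explore} for an {\sc Explorer}, and \CExplore in state {\sf Init} for an {\sc Anon} agent), whereas the \dLeft moves occurring in a cautious walk (state {\sf Back}) return only to nodes already certified safe. Hence the black hole \BH can be entered only across the single edge $e^*$ joining it to its counter-clockwise neighbour, which I denote $u$.

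Next I would focus on the first round $r^*$ at which some agent $a^*$ crosses $e^*$ into \BH. Since $r^*$ is a first visit of \BH, the step of $a^*$ at round $r^*-1$ is a forward exploration step, so $a^*$ is either an {\sc Anon} agent running \CExplore in state {\sf Init} or an {\sc Explorer} in state {\sf Explore}; I would rule out a {\sc Follower}, because a {\sc Follower} advances to a node only after its {\sc Explorer} certifies it safe by returning, which can never happen for \BH. In either admissible case the agent marks $u$ with its pebble immediately before stepping onto \BH and, being destroyed, never returns to remove it. Thus $u$ stays marked for the rest of Phase~1, which already establishes the marking part of the statement.

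For the ``at most one'' part I would show that no second agent crosses $e^*$, splitting on the putative entry round $r'$ of a second agent $b$. If $r'>r^*$, then $u$ is already (and permanently) marked when $b$ reaches it, so $b$ observes the $marked$ predicate and switches to a waiting behaviour (state {\sf Wait} for an {\sc Anon} agent, the waiting branch of {\sf Explore} for an {\sc Explorer}); there $NextSafe$ can never fire since $a^*$ is dead, so $b$ either detects $NextUnsafe$ and terminates or keeps waiting, but in no case moves onto \BH, while a {\sc Follower} never advances past $u$ at all. If instead $r'=r^*$, then $a^*$ and $b$ both sit on $u$ at round $r^*-1$: when both are {\sc Anon} they meet and synchronise into an {\sc Explorer}/{\sc Follower} pair (state {\sf Two}), after which the {\sc Follower} does not take the forward step; when they carry different roles, the fair mutual exclusion on the pebble placed at $u$ lets only one of them own the mark and move forward, forcing the other into the waiting branch.

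The hard part will be exactly this simultaneous, same-round scenario: I must argue that the meeting and synchronisation predicates, together with the mutual-exclusion semantics of pebble placement, truly serialise the two candidate agents before the move step, so that two agents standing on $u$ can never both step into \BH in the same round. I would make this precise by invoking the design property quoted for Phase~1 (synchronised agents never cross a possibly-unsafe edge in the same round) and the cautious-walk reasoning already used in Lemma~\ref{lemma:variationgathering}, and by arguing that a freshly placed pebble is observed as $marked$ by any co-located agent within the same round in which it is placed, as guaranteed by the fair mutual exclusion of the Pebble model.
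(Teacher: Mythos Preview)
Your approach is sound and close in spirit to the paper's; the paper argues role-by-role (a lone {\sc Anon} performs a cautious walk and enters {\sf Wait} at any marked node; a {\sc Follower} never advances to an uncertified node; an {\sc Explorer} checks $marked$ before stepping forward; a third {\sc Anon} catching up first meets the {\sc Follower} and copies it), while you argue from the first crossing time $r^*$ and then exclude a second crossing by a temporal case split. Both organisations are valid, and your treatment of the sequential case $r'>r^*$ and of the two-{\sc Anon} simultaneous case (via $meeting[${\sc Anon}$]$ and state {\sf Two}) is correct.

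The one genuine gap is your handling of the same-round, mixed-role case (an {\sc Anon} and an {\sc Explorer} both at $u$ at round $r^*-1$). You appeal to fair mutual exclusion on pebble placement to force one of them into the waiting branch, and you claim that a freshly placed pebble is observed as $marked$ by a co-located agent within the same round. That does not follow from the model as stated: the $marked$ predicate is evaluated from the local snapshot taken at the \emph{beginning} of the round, before the communication/move step and hence before any pebble is placed in that round; serialising the two placements does not retroactively alter either agent's snapshot or its decision to step forward. The paper closes this case differently and more robustly, by a structural observation: since the {\sc Follower} always trails (or coincides with) the {\sc Explorer}, an {\sc Anon} catching up from behind necessarily reaches the {\sc Follower}'s node first, so $meeting[${\sc Follower}$]$ fires and the agent transitions to {\sf Copy}; consequently an {\sc Anon} never shares a node with the {\sc Explorer} while still in role {\sc Anon}, and the problematic configuration cannot arise. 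Replacing your mutual-exclusion step by this ``meets {\sc Follower} first'' argument would make your proof complete.
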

\begin{proof}
If agents have not already met, then each agent performs a cautious walk, all in the same direction, marking a node and avoiding that other agents visit a possibly 
unsafe node (see state {\sf Init} in Algorithm~\ref{alg:scatteredchirality}): when the agent sees a marked node, it goes in state {\sf Waits}. In this state, the agent waits until it is sure that the next node is safe (that is, until the agent that marked the node returns to remove the pebble).  

When two agents meet, they become {\sc Follower} and {\sc Explorer}. By construction, {\sc Follower} never reaches \BH: in fact, {\sc Follower} moves a step clockwise only when it sees  {\sc Explorer} returning (see state {\sf Wait} and predicate $meeting[Explorer]$); this implies that the node where it moves is safe.
Also note that {\sc Explorer} never visits a possibly unsafe node if there is another agent on it: in fact, in state {\sf Explore}, there is a check on whether the current node is marked or not; if marked, {\sc Explorer} waits (thus, also blocking {\sc Follower}) until the next node can be deemed as safe.  

 If the third agent reaches {\sc Follower}, it will also become {\sc Follower} and it will never visit an unsafe node (recall that all agents move in the same direction, thus if {\sc Explorer} reached the black hole, any other agent has to first visit the node marked by {\sc Explorer}). Moreover, an {\sc Explorer} agent always marks a node before visiting its unexplored neighbour (see state {\sf Explore} of Algorithm~\ref{alg:scatteredchiralityexplorer}).

In conclusion, we have that at most one agent enters \BH, and the counter-clockwise neighbour node of \BH will be marked by a pebble, and the lemma follows. 
\end{proof}

\begin{observation}\label{scatter:correctterm}
If an agent terminates while executing Phase 1, then it terminates correctly
\end{observation}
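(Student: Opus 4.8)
The plan is to first pin down \emph{where} an agent can possibly terminate during Phase 1, and only then argue that every such termination reports the true location of the black hole. Inspecting the three Phase-1 routines (Algorithms~\ref{alg:scatteredchirality},~\ref{alg:scatteredchiralityexplorer}, and~\ref{alg:scatteredchiralityfollower}), the only terminating state is {\sf Terminate}, and it is entered exclusively through the predicate $NextUnsafe$: either from the {\sf Wait} state of an {\sc Anon} agent, or from the {\sf Explore} state of an {\sc Explorer} that finds itself on a node already marked by someone else. The {\sc Follower} has no terminating state at all, and no other predicate in Phase 1 leads to termination. Since in both terminating cases the agent declares the black hole to be its clockwise neighbour, the whole statement reduces to a single claim: if $NextUnsafe$ fires for an agent $w$ sitting on a marked node $v$, then the clockwise neighbour $v'$ of $v$ is the black hole.

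To prove this claim I would use the marking invariant underlying the cautious walk, which is exactly the mechanism already analysed in Lemma~\ref{scattered:goodmarking}: a node $v$ carries a pebble only while the agent $m$ that placed it is testing the safety of $v'$, and if $v'$ turns out to be safe then $m$ walks back to $v$, removes the pebble, and only then advances. Hence, as long as $v$ stays marked, $m$ has not yet certified $v'$ as safe, and $w$ (which sits still in {\sf Wait}) can register $m$'s safe return precisely through the complementary predicate $NextSafe$. The next step is to read off the meaning of the timeout: $NextUnsafe = (\vEtime > \vMtime[C])$ holds exactly when, since $w$ began waiting on $v$, the clockwise edge $(v,v')$ has been available in at least one round. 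During an available round an alive $m$ parked at $v'$ is able to traverse $(v',v)$ back to $v$; so the counters are calibrated so that, whenever the edge has been available long enough for $NextUnsafe$ to fire, a surviving $m$ would already have returned and triggered $NextSafe$. Since $w$ terminated through $NextUnsafe$ and not through $NextSafe$, agent $m$ never came back; by the marking invariant $m$ perished while testing $v'$, and as the black hole is the unique lethal node, $v'$ is \BH.

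I would then close the argument by cross-checking against the global guarantee of Lemma~\ref{scattered:goodmarking}: at most one agent is lost in Phase 1, and the only pebble that is never removed is the one on the counter-clockwise neighbour of the black hole. Thus a mark that $w$ perceives as permanent is necessarily that pebble, which again forces the clockwise neighbour $v'$ to be \BH\ and makes the report correct. The {\sc Explorer}'s terminating branch needs no separate treatment, since it arises only when the {\sc Explorer} stands on a node marked by a \emph{different} agent, which is the same situation as the {\sc Anon} agent in {\sf Wait}.

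The step I expect to be the main obstacle is the round-by-round timing argument hidden in the second paragraph, namely verifying that $NextUnsafe$ can never preempt a legitimate $NextSafe$ and thereby produce a false accusation of a safe node. Because the predicate list checks $NextUnsafe$ before $NextSafe$, and because in the synchronous model an agent that decides to cross an edge in round $r$ only materialises at the far endpoint in round $r+1$, one has to track carefully how $\vEtime$ and $\vMtime[C]$ are incremented relative to the cautious-walk moves of $m$, and confirm that the threshold is crossed strictly later than the round in which a surviving $m$ is observed back at $v$. This bookkeeping, rather than the conceptual content, is the delicate part; everything else follows directly from the cautious-walk structure and Lemma~\ref{scattered:goodmarking}.
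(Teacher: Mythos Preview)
Your proposal is correct and follows the same reasoning as the paper: termination in Phase~1 occurs only via the $NextUnsafe$ predicate on a marked node, and this predicate certifies that the clockwise edge was present while the marking agent failed to return, so that agent must have entered \BH. The paper's own proof is a single sentence stating exactly this observation, whereas you spell out the case analysis over the three roles and flag the timing issue between $NextUnsafe$ and $NextSafe$; the paper does not carry out that round-by-round bookkeeping either, so your elaboration goes strictly beyond what the authors provide rather than diverging from it.
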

\begin{proof}
The claim follows immediately by observing that the state {\sf Terminate} is always reached when an agent visits a marked node, the clockwise edge is not missing, and the agent that marked the node does not return. 
\end{proof}

\begin{lemma}\label{lemma:phase1}
Let us consider three agents executing Phase 1. If not all agents terminated locating the \BH, then Phase 1 ends by at most round $9n$ and, when it ends, we have one of the following:
\begin{itemize}
\item (1) all agents gathered; 
\item (2) at most one agent disappeared in the black hole, the counter-clockwise neighbour of the black hole is marked, and the remaining agents gathered;
\item (3) one agent terminated, the counter-clockwise neighbour of the black hole is marked, and the remaining agent has to still locate the \BH. 
\end{itemize}
\end{lemma}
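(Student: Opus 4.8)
The plan is to first dispose of the two routine claims and then run a case analysis on the number of agents the black hole destroys during Phase~1. The round bound is immediate from the pseudocode: every state of Algorithms~\ref{alg:scatteredchirality},~\ref{alg:scatteredchiralityexplorer}, and~\ref{alg:scatteredchiralityfollower} carries the guard $\vTtime = 9n$ (together with $\#A=3$) forcing the transition to {\sf EndPhase1}, so Phase~1 cannot run past round $9n$; and if it ends early through $\#A=3$ the three agents share a node, hence are gathered, which is already outcome~(1). By Lemma~\ref{scattered:goodmarking} at most one agent is lost and, whenever one is, the counter-clockwise neighbour $v$ of \BH\ is left marked; by Observation~\ref{scatter:correctterm} any agent that reaches {\sf Terminate} during Phase~1 locates \BH\ correctly. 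These facts are the scaffolding for the rest.

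I would then treat the subcase in which no agent is destroyed. Since the footprint has $n$ nodes and one of them is \BH, a living agent has visited at most $n-1$ distinct nodes, so \emph{no} surviving agent explores $n$ nodes. All agents move clockwise performing cautious walks --- either as a lone {\sc Anon}/{\sc Explorer}, or glued to an {\sc Explorer} as a {\sc Follower}/{\sc Copy} --- so the hypotheses of Lemma~\ref{lemma:variationgathering} are met and the three agents are gathered by round $9n$, giving outcome~(1).

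For the subcase of exactly one lost agent $d$, I would exploit the marked node $v$ as a permanent barrier: any survivor reaching $v$ finds it marked and transitions to {\sf Wait} (resp.\ {\sf WaitFollower}), and the successor of $v$ is \BH, which can never be declared safe because the marking agent $d$ never returns. Hence a survivor at $v$ either eventually sees the clockwise edge of $v$ present and fires $NextUnsafe$ --- in which case, by Observation~\ref{scatter:correctterm}, it terminates correctly and we are in outcome~(3) (one agent terminated, $v$ marked, the other survivor still searching) --- or this edge stays missing and the survivors pile up against the barrier. In the latter situation I would argue that the two survivors, each confined to the safe arc counter-clockwise of $v$ and each exploring fewer than $n$ nodes, end on $v$ (or on $v$ and its clockwise neighbour with $v$ still marked), i.e.\ in a gathered configuration, by reading the barrier as the maximally-blocked front of the cautious walk and invoking the distance-closing mechanism of Lemma~\ref{lemma:variationgathering}; this is outcome~(2).

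The step I expect to be the main obstacle is exactly this gathering argument in the presence of a death. Lemma~\ref{lemma:variationgathering} is stated for agents running pure cautious walks throughout the $9n$ rounds, whereas in Phase~1 agents coalesce into {\sc Explorer}--{\sc Follower}--{\sc Copy} groups and, once $d$ enters \BH\ at some round $r^{\ast}$, the trajectory of a former {\sc Follower} of $d$ degenerates into waiting at $v$. The delicate points are therefore (i) to verify that a merged group advances as a single cautious walker, so the distance-closing dynamics underlying Lemma~\ref{lemma:variationgathering} are undisturbed, and (ii) to show that a death occurring at an arbitrary $r^{\ast}\le 9n$ does not leave the lagging survivor too little time: here one uses that from round $r^{\ast}$ on the front of the walk is permanently blocked at $v$, which drives the remaining survivor onto $v$ within the $9n$-round budget. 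I would also record the boundary possibility that both survivors reach $v$ and terminate, which still solves the problem and is absorbed into outcome~(3).
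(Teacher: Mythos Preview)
Your approach mirrors the paper's proof closely: both bound Phase~1 by $9n$ via the universal $\vTtime=9n$ guard, invoke Lemma~\ref{scattered:goodmarking} and Observation~\ref{scatter:correctterm}, and then split into the same three cases (no death, a death without termination, a termination), resolving the first two by Lemma~\ref{lemma:variationgathering}. You are in fact more careful than the paper itself, which simply applies Lemma~\ref{lemma:variationgathering} to the survivors without pausing over whether merged {\sc Explorer}--{\sc Follower} pairs or a mid-phase death perturb its hypotheses; the concerns you flag in your last paragraph are real and the paper's proof leaves them implicit.

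One small slip to fix: in outcome~(2) you write that the two survivors may end ``on $v$ and its clockwise neighbour with $v$ still marked'', but the clockwise neighbour of $v$ is \BH\ itself, so no living agent can be there. The split-node gathered configuration, if it arises at round $9n$, is the {\sc Explorer}--{\sc Follower} pattern at some other pair $w,w{+}1$ (with the forward agent having marked $w$), not at $v$ and \BH. This does not affect your overall argument, which correctly defers to the distance-closing mechanism of Lemma~\ref{lemma:variationgathering}.
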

\begin{proof}
By construction, in all states the agents check predicate $\vTtime = 9n$; thus, Phase 1 ends after at most $9n$ rounds. By Lemma~\ref{scattered:goodmarking} we have that at most one agent enters in \BH leaving its counter-clockwise neighbour marked.
Now we have three cases:
\begin{itemize}
\item One agent terminates, and by Observation~\ref{scatter:correctterm} it terminates correctly solving the \BHP.   The other agent has to still locate the \BH
\item One agent enters in the \BH and no one terminates. If no alive agent terminates, then it means that no one of them has explored $n$ nodes. Therefore, at the end of Phase 1 we have $\vTtime = 9n$ and by Lemma~\ref{lemma:variationgathering}  the agents gathered, and the lemma follows.
\item No one enters in the \BH and no one terminates. In this case we have that three agents gather by the end of Phase 1.  If agents end Phase 1 by predicate $\#A=3$, then the statement immediately follows.
Otherwise, $\vTtime = 9n$, by Lemma~\ref{lemma:variationgathering}  the agents gathered, and the lemma follows.
\end{itemize}
\end{proof}

The next lemma shows that, if \BH has been marked in Phase 1, then two agents executing Algorithm~\ref{alg:p2chirscat} solve \BHP in at most ${\cal O}(n^2)$ rounds. 
\begin{lemma}\label{lemma:twophase2chir}
Let us assume that the counter-clockwise neighbour $v$ of \BH has been marked by a pebble. If two agents executes Algorithm~\ref{alg:p2chirscat}, at least one of them terminates correctly locating the \BH in ${\cal O}(n^2)$ rounds; the other agent either
 terminates correctly locating the \BH or it never terminates.  
\end{lemma}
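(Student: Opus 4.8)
The plan is to argue by case analysis on the configuration in which the two agents begin Phase~2 (state {\sf InitP2}), relying on the invariant — guaranteed by Lemma~\ref{scattered:goodmarking} together with Lemma~\ref{lemma:phase1} — that the \emph{only} persistent pebble on the ring marks the node $v$ (the counter-clockwise neighbour of \BH), and that the two surviving agents are \emph{gathered} (either on the same node, or on the two endpoints of one edge with one of them still holding the mark of the other's node). The whole analysis reduces to showing that the two agents eventually form an \R\,/\,{\sc MLeader} pair that probes \BH from its two sides, or else both fall into state {\sf Forward}; in either situation one agent either reaches the decisive marked node $v$ or forces \R into \BH within ${\cal O}(n^2)$ rounds. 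First I would record that the pebble-recovery steps of {\sf InitP2} and {\sf BreakSimmetry} remove the transient gathering mark, so that $v$ stays the unique mark and no spurious mark can trigger a false termination.

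The heart of the argument is the \R\,/\,{\sc MLeader} dynamic, which arises whenever the two agents are co-located (both entering {\sf BreakSimmetry} via $\#A>1$) or whenever they meet later during Phase~2. Here {\sc MLeader} walks monotonically clockwise (state {\sf Go} of Algorithm~\ref{alg:p2t}) while \R performs the \RT pendulum counter-clockwise, so the two agents explore the two arcs separated by their common reference node and \BH. Since at most one edge is missing per round they can never be blocked simultaneously, hence some new node is explored every ${\cal O}(n)$ rounds exactly as in Lemma~\ref{lm:rt:beneter}; within ${\cal O}(n^2)$ rounds either {\sc MLeader} reaches $v$ or \R enters \BH. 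In the first case, because the mark at $v$ was left by a dead agent and \R plants no pebble while executing \RT, the first marked node met clockwise is $v$; once the clockwise edge is present, predicate $NextUnsafe$ fires and {\sc MLeader} correctly declares \BH to be the next clockwise node. In the second case {\sc MLeader} detects $FailedReport[\R]$ and, by the counting argument of Lemma~\ref{lm:rt:corrterm}, outputs \BH at counter-clockwise distance $\#Meets[\R]+1$ from the reference node. As in Observation~\ref{ob:rt:leadernodies}, {\sc MLeader} never enters \BH, and \R has no terminating state (it either dies in \BH or oscillates forever), so the surviving {\sc MLeader} is the agent that terminates correctly while the other never terminates.

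The remaining case is when the two agents sit on the two endpoints of an edge and cannot meet because the adversary keeps that edge missing. Liveness is then ensured by the $4n^2$ timeout of {\sf InitP2}: it is ${\cal O}(n^2)$ (so the overall bound is preserved) yet strictly larger than the ${\cal O}(n^2)$ time an \R\,/\,{\sc MLeader} pair needs to locate \BH (so a genuine pair is never aborted). After the timeout both agents enter {\sf Forward} and walk clockwise toward the unique mark $v$. Invoking Observation~\ref{madebyus} and $1$-interval connectivity again, the adversary can stall only one of them at a time: if it blocks the agent closer to $v$, the trailing agent catches up, the two meet, and we are back in the pair dynamic; otherwise the leading agent reaches $v$ and terminates correctly by the $marked$ predicate of state {\sf Forward}. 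Hence at least one agent terminates correctly within ${\cal O}(n^2)$ rounds, while the other either reaches $v$ and also terminates correctly or remains blocked forever on the missing edge and never terminates. In every branch an agent that terminates does so at the mark $v$ or via $FailedReport[\R]$, so each termination is correct, establishing the statement.

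I expect the main obstacle to be the adversarial coordination: I must rule out the deadlock in which both agents wait indefinitely for one another, and simultaneously prove the ${\cal O}(n^2)$ bound against every blocking schedule. The resolution rests on two quantitative facts that need careful verification — the timeout value $4n^2$ must provably dominate the worst-case pairing/location time obtained from the \RT analysis (Lemmas~\ref{lm:rt:beneter} and~\ref{lm:rt:term}), and the ``trailing agent catches a blocked leader'' observation must hold for every relative placement of the gathered pair with respect to $v$ and \BH. Checking these two estimates, together with the bookkeeping that exactly one pebble survives into the pair dynamic, is where the real work lies.
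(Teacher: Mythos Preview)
Your proposal is correct and follows essentially the same approach as the paper: a case split on whether the two surviving agents are co-located (immediate {\sc MLeader}/\R pair, analyzed via the \RT machinery of Lemmas~\ref{lm:rt:beneter}--\ref{lm:rt:corrterm}) or on the two endpoints of a missing edge (the $4n^2$ timeout drops both into {\sf Forward}, after which either one reaches the unique mark at $v$ or they meet and reduce to the first case). Your added remarks about pebble bookkeeping and the $NextUnsafe$ trigger in state {\sf Cautious} are accurate refinements; the only slight overreach is in your final paragraph, where you worry that the $4n^2$ timeout must dominate the pair-location time ``so a genuine pair is never aborted'' --- in this two-agent lemma no pair exists yet during the timeout, so that concern actually belongs to Lemma~\ref{lemma:twophase2chir2}, not here.
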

\begin{proof}
By Lemma~\ref{lemma:phase1}, at the first round of Phase 2 we have two possible cases:
\begin{itemize}
\item The two agents are at the same node. In this case, they immediately enter in state {\sf BreakSymmetry}. Let $a$ be the agent that takes the role of {\sc MLeader} and $b$ be the one that becomes \R. 
Their movements are similar to the ones of \Leader and \R in \RT, with the only difference that {\sc MLeader} moves until it reaches a marked node. 
By Lemma~\ref{lemma:phase1}, this marked node is the counter-clockwise neighbour of \BH; thus, if {\sc MLeader} reaches it, {\sc MLeader} correctly terminates. 

If {\sc MLeader} does not visit the marked node because of a missing edge, \R is able to move. By using  a similar argument to the one used in the proof of Theorem~\ref{th:pendulum}, the black hole is located in at most  ${\cal O}(n^2)$ rounds, and the lemma follows. 
 Also note that the only agent that can go in a termination state is {\sc MLeader}, therefore \R cannot terminate incorrectly. 

\item The two agents occupy two neighbouring nodes, and the most clockwise  agent does not have the pebble.
More precisely, agent $a$ is at node $v$, agent $b$ at node $v'$; also, agent $b$ is missing its pebble, and node $v$ is marked by a pebble. In this case, agent $a$  executes lines~\ref{p2:a1}-\ref{p2:a2} of Algorithm~\ref{alg:p2chirscat}: it removes the pebble from $v$, and waits for $4 \cdot n^2$ rounds.
Agent $b$  executes line~\ref{p2:b1} of Algorithm~\ref{alg:p2chirscat}: it moves towards node $v$ for $4 \cdot n^2$ rounds.  

If edge $e=(v, v')$ appears before the timeout, then $a$ and $b$ meet, and previous case applies. Otherwise, both agents go in state {\sf Forward}. In this state they both move clockwise. If one of them reaches the marked node, it correctly terminates. Otherwise the path towards \BH is blocked by a missing edge and the agents would meet in at most ${\cal O}(n)$ rounds.
When they meet, they both go in state {\sf BreakSymmetry}, and previous case applies again. Note that the above implies that at most one of the agents in state {\sf Forward} can be blocked forever by a missing edge. 
\end{itemize}
\end{proof}


\begin{lemma}\label{lemma:twophase2chir2}
Let us assume that three agents terminate Phase 1 gathered. Then, if three agents executes Algorithm~\ref{alg:p2chirscat}, at least one of them terminates correctly locating the \BH in ${\cal O}(n^2)$ rounds; the other agents either
 terminate correctly locating the \BH or never terminate.  
\end{lemma}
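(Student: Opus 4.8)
The plan is to follow the two-case split dictated by the definition of a gathered configuration for $k=3$, reducing each case as far as possible to the already-established correctness of \RT (Theorem~\ref{th:pendulum}) and to the two-agent analysis of Lemma~\ref{lemma:twophase2chir}. By Lemma~\ref{lemma:phase1}, the hypothesis ``three agents terminate Phase 1 gathered'' is exactly its case (1), in which \emph{no} agent entered \BH; consequently the \BH is not yet specially marked, so in Phase 2 the three agents must re-launch the search essentially from scratch, and the only thing I need to argue is that the reference node and the role bookkeeping stay consistent. Concretely, either (a) the three agents share a single node $v$, or (b) two of them sit on a marked node $v_i$ while the third, $a$, sits on $v_{i+1}$ across a currently missing edge $(v_i,v_{i+1})$, with $v_i$ marked by $a$.

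First I would dispose of case (a). Each agent evaluates $\#A=3>1$ in state {\sf InitP2} and moves to {\sf BreakSimmetry}; since $\#A=3$, the three reclaim their pebbles, take the roles \R, \Leader, \A and start \RT from $v$. This is exactly the three-colocated-agent setting of Theorem~\ref{th:pendulum}, whose proof (Lemmas~\ref{lm:rt:beneter}--\ref{lm:rt:corrterm} and Observation~\ref{ob:rt:leadernodies}) uses only that the three agents begin together on a common safe node of an oriented ring, not any special property of the home-base. Hence within ${\cal O}(n^2)$ rounds \Leader locates \BH and terminates, while \R and \A never terminate.

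The substance is in case (b). The two agents on $v_i$ read $\#A=2$, enter {\sf BreakSimmetry}, and split into \R and {\sc MLeader}; \R runs its pendulum (Algorithm~\ref{alg-3r}) counter-clockwise while {\sc MLeader} runs Algorithm~\ref{alg:p2t}. The safety observation to be verified is that {\sc MLeader} can never walk into \BH: it starts on the marked node $v_i$, so in state {\sf Go} the predicate $marked$ fires at once and {\sc MLeader} passes to {\sf Cautious}, where it waits in place (direction \dNil). Thus {\sc MLeader} stays parked on $v_i$ throughout, and the lone agent $a$ on $v_{i+1}$ --- which detects its missing pebble and walks counter-clockwise toward $v_i$ via line~\ref{p2:b1} --- simply arrives at $v_i$ and meets {\sc MLeader} as soon as the adversary restores $(v_i,v_{i+1})$; crucially, no on-edge crossing can occur precisely because {\sc MLeader} does not move. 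I would then split on the adversary. If $(v_i,v_{i+1})$ is eventually restored, $a$ becomes \A (state {\sf BeAvanguard}) and {\sc MLeader} becomes \Leader (state {\sf StartCP}, preserving $\#Meets[\R]$), so the trio now executes \RT with consistent counters and Theorem~\ref{th:pendulum} applies verbatim; this mirrors the hand-off already analysed in Lemma~\ref{lemma:twophase2chir}. If instead $(v_i,v_{i+1})$ remains missing, then the unique missing edge is never on \R's counter-clockwise path, so \R sweeps the ring unblocked and reaches \BH within ${\cal O}(n^2)$ rounds as in Lemma~\ref{lm:rt:beneter}; the persistent absence of the clockwise edge drives $\vMtime[C]$ up until $FailedReport[\R]$ fires, and {\sc MLeader} terminates in {\sf TerminateR}, correctly outputting \BH at counter-clockwise distance $\#Meets[\R]+1$ exactly as in Lemma~\ref{lm:rt:corrterm}. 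Here $a$ is the agent the statement permits to remain blocked forever (or to be lost), which is harmless once {\sc MLeader} has terminated.

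Finally I would close the correctness side uniformly: only {\sc MLeader}/\Leader own terminating states, each guarded by a verified predicate ($FailedReport[\R]$, $FailedReport[\A]$ / $NextUnsafe$, or $marked$), so every termination pinpoints \BH correctly, while \R and \A have no terminating state and cannot err; the ${\cal O}(n^2)$ bound is inherited from Theorem~\ref{th:pendulum} in the rejoining scenario and from the $FailedReport[\R]$ timeout (which is ${\cal O}(n^2)$ since $\#Meets[\R]\le n$ and $\vTsteps\le n$) in the persistent-missing scenario. The hardest part, which I would write out in full, is the coupling in case (b): proving that {\sc MLeader} cannot advance clockwise into an unexplored node before $a$ reintegrates --- this rests on $v_i$ remaining marked under {\sf BreakSimmetry}, so that {\sc MLeader} parks in {\sf Cautious} --- and that, upon reintegration, the roles and the value of $\#Meets[\R]$ are consistent enough for Theorem~\ref{th:pendulum} to be invoked as a black box.
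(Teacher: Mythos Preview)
Your decomposition is the paper's: split on the two gathered configurations, dispatch case (a) by Theorem~\ref{th:pendulum}, and in case (b) observe that {\sc MLeader} parks in {\sf Cautious} on the marked node $v_i$ while \R runs its pendulum, then branch on whether the third agent $a$ reunites.

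The one point to tighten is your sub-case split in (b). You branch on whether $(v_i,v_{i+1})$ is ``eventually restored'' versus ``remains missing'', but line~\ref{p2:b1} of Algorithm~\ref{alg:p2chirscat} gives $a$ a hard timeout of $4n^2$ rounds, after which $a$ stops trying to go left and transitions to {\sf Forward}, moving \emph{clockwise}. So if the edge reappears after round $4n^2$, $a$ does \emph{not} become \A\ as you claim. The paper therefore splits on whether the edge is absent for the full $4n^2$ rounds: if it is, \R is unblocked throughout that window (exactly your argument), {\sc MLeader} terminates via $FailedReport[\R]$ within the window, and then $a$ times out into {\sf Forward}; if it is not, $a$ crosses before the timeout and the \RT reduction goes through. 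In the first branch you still owe one sentence you currently elide: $a$ in {\sf Forward} cannot terminate incorrectly, because the only marked node left is $v_i$, and moving clockwise from $v_{i+1}$ $a$ would have to traverse \BH\ to reach it; hence $a$ either enters \BH\ or is blocked forever, as the paper notes. With that correction your argument matches the paper's.
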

\begin{proof}
If three agents are on the same node, then they start \RT algorithm and the correctness follows from Theorem~\ref{th:pendulum}. 
Otherwise, we have two agents on a node $v$, with $v$ marked with a pebble, and the other agent $b$ on $v'$, with $v'$ the clockwise neighbour of $v$.
Upon the start of Phase 2, the two agents will become \R and {\sc MLeader}, respectively; {\sc MLeader} waits on the marked node, while $b$ tries to go
back to $v$. 

If edge $e=(v,v')$ is missing for $4 n^2$ rounds, then \R has enough time to reach the black hole, and {\sc MLeader} to terminate because of the fail to report of \R, hence the lemma follows. 
Note that after the termination of {\sc MLeader} the agent $b$, it goes in state {\sf Forward} it goes clockwise and, either enters the \BH or is blocked forever by a missing edge, in any case it
cannot terminate uncorrectly.

Thus, the last case to analyse is when \R is blocked by a missing edge in the first $4 n^2$ rounds. In this case, {\sc MLeader} and $b$ meet, and algorithm \RT starts. 
The lemma now follows by  Theorem~\ref{th:pendulum}.  
\end{proof}

\begin{lemma}\label{lemma:twophase2chir1}
Let us assume that a single agent $a$ starts  Phase 2. This agent executing Algorithm~\ref{alg:p2chirscat} either terminates correctly or it waits forever on a missing edge. 
\end{lemma}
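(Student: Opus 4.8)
The plan is to trace the lone agent $a$ through Algorithm~\ref{alg:p2chirscat} and show that, because it is alone, the only reachable terminating configuration is the pebble marking the counter-clockwise neighbour of the black hole, while the only alternative is to remain stuck on a permanently missing edge.

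First I would fix the invariant inherited from the earlier analysis. We are necessarily in case (3) of Lemma~\ref{lemma:phase1}: by Lemma~\ref{scattered:goodmarking} exactly one agent entered \BH\ and, before doing so, marked the counter-clockwise neighbour $v$ of \BH; one further agent has already terminated correctly; and $a$ is the remaining agent, which therefore lies strictly counter-clockwise of $v$ (every agent progressed clockwise during Phase~1 and none can pass the frontrunner that reached \BH). Consequently, reading the ring clockwise from $a$, the first marked node is $v$, and \BH\ is precisely the clockwise neighbour of $v$.

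Second I would show that $a$ enters state {\sf Forward} after finitely many rounds, whatever branch of {\sf InitP2} it takes. Since $a$ is alone, no predicate $meeting[\cdot]$ is ever satisfied, so the transitions to {\sf BreakSimmetry} and {\sf BeAvanguard} are unreachable. If $a$'s own pebble is missing it first steps counter-clockwise to reclaim it, but in that branch the move to {\sf Forward} is forced either by $\vEsteps>0$ after one step or, if $a$ is blocked, by the timeout $\vTtime>4n^2$; if $a$ sits on a marked node it removes its own pebble and the same timeout sends it to {\sf Forward}; otherwise it reaches {\sf Forward} immediately. I would note that $a$'s own pebble, if it remains on the ring, lies counter-clockwise of the node from which $a$ begins its clockwise walk, so it cannot be the first mark encountered; together with the guard $\vEsteps>0$ in {\sf Forward} (which forbids terminating on the start node) this rules out a spurious termination on $a$'s own pebble.

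Third, with $a$ in {\sf Forward} walking clockwise, I would split on the adversary. If no edge on the clockwise arc from $a$ to $v$ is permanently absent, then $a$ eventually reaches $v$; the guard $marked \land \vEsteps>0$ fires, $a$ transitions to {\sf Terminate}, and it outputs the clockwise neighbour of $v$, which by the invariant is exactly \BH --- a correct termination reached strictly before $a$ could enter \BH. Otherwise some edge on this arc is permanently missing (the adversary commits its single absent edge there, cf. Observation~\ref{madebyus}), and $a$ waits on it forever, the second admissible outcome. In neither case does $a$ terminate incorrectly, since {\sf Terminate} is reachable only from {\sf Forward} at a marked node and, by the invariant, the first such node is $v$. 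The step I expect to be the main obstacle is precisely the marking bookkeeping behind that invariant: one must guarantee that the first pebble $a$ meets going clockwise is the black-hole marker on $v$ and not a stray pebble left by an unfinished cautious walk (in particular $a$'s own). This is where I would lean on the clockwise monotonicity of Phase~1 and on the fact that a cautious walk always keeps its pebble on the node counter-clockwise of the walker, so that no relevant pebble can lie on the open clockwise arc between $a$ and $v$.
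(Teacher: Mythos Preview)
Your proof is correct and follows essentially the same route as the paper: argue that $a$ reaches state {\sf Forward} within a bounded number of rounds, then walks clockwise until it either hits the marked counter-clockwise neighbour of \BH\ (and terminates correctly) or is blocked forever on a missing edge. Your version is in fact more careful than the paper's, which simply asserts that ``$a$ cannot terminate incorrectly'' without discussing the pebble bookkeeping; you explicitly flag the stray-pebble concern and handle the case of $a$'s own pebble, which the paper omits entirely.
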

\begin{proof}
By algorithm construction after at most $4 n^2$ rounds from the beginning of Phase 2 the agent $a$ goes in state {\sf Forward} and it starts moving in clockwise direction. Being the only agent still active,  it will
never change behaviour until it reaches the marked node or another agent.

 By Lemma~\ref{lemma:phase1} we have that the counter-clockwise neighbour of the \BH is marked, and the terminated agent is located at that node. 
 If no edge is removed forever $a$ will reach the marked node, and it will terminate. 
 Otherwise, $a$ will be forever blocked on a missing edge. In either case $a$ cannot terminate incorrectly. 
\end{proof}

\begin{theorem}\label{th:scat}
Given a dynamic ring ${\cal R}$, three anonymous agents with pebbles running \GL, solve \BHS in  ${\cal O}(n^2)$ moves and ${\cal O}(n^2)$ rounds.
\end{theorem}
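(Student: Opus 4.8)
The plan is to assemble the theorem directly from the structural result on Phase~1 together with the three correctness lemmas for Phase~2, organizing the whole argument around the trichotomy established in Lemma~\ref{lemma:phase1}. First I would invoke Lemma~\ref{lemma:phase1}: either some surviving agent has already located \BH during Phase~1 (in which case, by Observation~\ref{scatter:correctterm}, that termination is correct and \BHS is solved), or Phase~1 halts by round $9n$ in exactly one of the three configurations (1)--(3). This reduces everything to a case analysis in which each branch is handled by precisely one Phase~2 lemma.

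Then I would dispatch the cases in turn. In configuration~(1), where all three agents gathered, Lemma~\ref{lemma:twophase2chir2} guarantees that at least one agent terminates correctly locating \BH within ${\cal O}(n^2)$ rounds, while the others either also terminate correctly or never terminate. In configuration~(2), where at most one agent was lost in \BH, its counter-clockwise neighbour is marked, and the survivors gathered, the hypotheses of Lemma~\ref{lemma:twophase2chir} are met and yield the same conclusion for the two remaining agents. In configuration~(3), one agent has already terminated correctly during Phase~1 (so \BHS is solved), its marker agent entered \BH, and the single remaining agent then falls under Lemma~\ref{lemma:twophase2chir1}, which ensures it either terminates correctly or blocks forever on a missing edge---crucially, never terminating with a wrong location. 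Across all branches, at least one agent survives and knows the position of \BH, and no agent ever reaches an incorrect terminating state, since in the \RT-like executions only the \Leader (or {\sc MLeader}) has a terminating state and it is always correct, whereas agents in the roles of \R and \A have none.

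For the complexity I would simply add the two phases: Phase~1 runs for $9n={\cal O}(n)$ rounds, and in every branch Phase~2 contributes ${\cal O}(n^2)$ rounds, so termination occurs within ${\cal O}(n^2)$ rounds overall. Since only three agents move and each performs at most one move per round, the total number of moves is likewise ${\cal O}(n^2)$.

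The bookkeeping I expect to be the main obstacle is verifying that the trichotomy of Lemma~\ref{lemma:phase1} is exhaustive and that in each branch the invoked Phase~2 lemma's hypotheses are literally satisfied---in particular confirming via Lemma~\ref{scattered:goodmarking} that at most one agent ever enters \BH (so a survivor always exists) and that the marking invariant ``the counter-clockwise neighbour of \BH carries a pebble'' is correctly handed from Phase~1 into Phase~2, matching the interface conditions of Lemmas~\ref{lemma:twophase2chir}, \ref{lemma:twophase2chir2}, and~\ref{lemma:twophase2chir1}. Once these interface conditions are checked, the theorem is a direct synthesis of the preceding lemmas.
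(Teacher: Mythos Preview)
Your proposal is correct and follows essentially the same approach as the paper: invoke Lemma~\ref{lemma:phase1} to obtain the case split at the end of Phase~1, then dispatch each branch with the corresponding Phase~2 lemma (Lemma~\ref{lemma:twophase2chir2} for three gathered agents, Lemma~\ref{lemma:twophase2chir} for two gathered survivors with the marked node, Lemma~\ref{lemma:twophase2chir1} for the lone remaining agent), and sum the ${\cal O}(n)$ and ${\cal O}(n^2)$ round bounds. If anything, your write-up is more explicit than the paper's about the interface conditions and about why no agent can terminate incorrectly.
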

\begin{proof}
By Lemma~\ref{lemma:phase1}, Phase 1 terminates in at most ${\cal O}(n)$ rounds. At this time, either: (1) \BHP is solved and all agents terminated, or  (2) the agents gathered, or  (3) the counter-clockwise neigbhour of \BH is marked and the remaining agents are gathered, or (4) there is still an agent active while an agent correctly terminated. 
In case (2), the proof follows by Lemma~\ref{lemma:twophase2chir2}.
In case (3), the proof follows by Lemma~\ref{lemma:twophase2chir}.
In case (4), we have just to show that the remaining agent does not terminate incorrectly. this is ensured by Lemma ~\ref{lemma:twophase2chir1}.
\end{proof}

By Th.~\ref{scat:lb} and Th.~\ref{th:scat} we have:
\begin{theorem}
Algorithm \GL is  size-optimal with optimal cost and time. 
\end{theorem}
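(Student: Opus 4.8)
The plan is to derive the statement purely by combining results already established, so the argument is a matching-bounds verification rather than a fresh construction. I would split the claim into its three components — size, cost, and time — and dispatch each by pairing a lower bound with the performance of \GL guaranteed by Theorem~\ref{th:scat}.

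For size-optimality, the key observation is that the lower bound on the team size need not be reproven for the scattered setting. Theorem~\ref{ic2impossibile} already rules out two agents in the colocated model equipped with whiteboards and distinct IDs, which is the most favourable setting for the algorithm designer. Since any impossibility in that stronger setting transfers to the weaker scattered setting — colocated agents can reproduce any initial agreement available to scattered ones, while the adversary retains at least as much power — two scattered agents cannot solve \BHS\ either. Hence three is a lower bound on the team size, and because \GL\ employs exactly three agents (Theorem~\ref{th:scat}), it meets this bound.

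For cost and time, I would invoke Theorem~\ref{scat:lb}, which shows that every three-agent scattered algorithm requires $\Omega(n^2)$ moves and $\Omega(n^2)$ rounds, a bound proven in the strongest exogenous model (whiteboards) with unique IDs. Theorem~\ref{th:scat} shows that \GL\ attains $O(n^2)$ moves and $O(n^2)$ rounds using only anonymous agents and pebbles, the weakest exogenous model. Since the lower bound is stated for the strongest model, it applies a fortiori to the pebble model in which \GL\ runs, while the matching upper bound is achieved in that weakest model; the two therefore coincide up to constants, yielding $\Theta(n^2)$ cost and time and establishing optimality uniformly across the exogenous class.

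There is no genuine obstacle, as every ingredient is already in place; the only point demanding care is that the quantifiers align. One must confirm that the $\Omega(n^2)$ lower bound is asserted specifically for size-optimal (three-agent) algorithms, so that it legitimately constrains \GL, and that the direction of the model-strength comparison is correct: the lower bound must hold in a model at least as strong as the one used by \GL, and the upper bound in a model at most as strong. This is precisely the configuration of Theorems~\ref{scat:lb} and~\ref{th:scat}, so the two bounds meet and the optimality follows.
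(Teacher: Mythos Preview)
Your proposal is correct and matches the paper's own approach, which simply cites Theorem~\ref{scat:lb} and Theorem~\ref{th:scat} to close the bounds. You are in fact slightly more explicit than the paper in justifying size-optimality via Theorem~\ref{ic2impossibile}, observing that the two-agent impossibility proved for colocated agents with whiteboards carries over to the scattered setting; the paper leaves this step implicit.
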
 
\bibliography{references}

\begin{thebibliography}{10}

\bibitem{AaKM14}
E.~Aaron, D.~Krizanc, and E.~Meyerson.
\newblock {DMVP}: Foremost waypoint coverage of time--varying graphs.
\newblock In {\em 40th International Workshop on Graph--Theoretic Concepts in
  Computer Science}, pages 29--41, 2014.

\bibitem{AbsM14}
S.~Abshoff and F.~Meyer auf~der Heide.
\newblock Continuous aggregation in dynamic ad-hoc networks.
\newblock In {\em 21st Int. Coll. on Structural Inf. and Comm. Compl.}, pages
  194--209, 2014.

\bibitem{Agarwalla2018}
A.~Agarwalla, J.~Augustine, W.~Moses, S.~Madhav, and A.K. Sridhar.
\newblock Deterministic dispersion of mobile robots in dynamic rings.
\newblock In {\em 19th International Conference on Distributed Computing and
  Networking}, pages 19:1--19:4, 2018.

\bibitem{balamohan2014exploring}
B.~Balamohan, S.~Dobrev, P.~Flocchini, and N.~Santoro.
\newblock Exploring an unknown dangerous graph with a constant number of
  tokens.
\newblock {\em Theoretical Computer Science}, 2014.

\bibitem{BournatDD16}
M.~Bournat, A.K. Datta, and S.~Dubois.
\newblock Self--stabilizing robots in highly dynamic environments.
\newblock In {\em 18th International Symposium on Stabilization, Safety, and
  Security of Distributed Systems}, pages 54--69, 2016.

\bibitem{BournatDP17}
M.~Bournat, S.~Dubois, and F.~Petit.
\newblock Computability of perpetual exploration in highly dynamic rings.
\newblock In {\em 37th {IEEE} International Conference on Distributed Computing
  Systems}, pages 794--804, 2017.

\bibitem{BournatDP18}
M.~Bournat, S.~Dubois, and F.~Petit.
\newblock Gracefully degrading gathering in dynamic rings.
\newblock In {\em 20th International Symposium on Stabilization, Safety, and
  Security of Distributed Systems}, pages 349--364, 2018.

\bibitem{CaFMS14}
A.~Casteigts, P.~Flocchini, B.~Mans, and N.~Santoro.
\newblock Measuring temporal lags in delay-tolerant networks.
\newblock {\em IEEE Transactions on Computer}, 63(2):397--410, 2014.

\bibitem{CaFQS12}
A.~Casteigts, P.~Flocchini, W.~Quattrociocchi, and N.~Santoro.
\newblock Time-varying graphs and dynamic networks.
\newblock {\em Int. J. Parallel Emergent Distributed Syst.}, 27(5):387--408,
  2012.

\bibitem{ChDLM11b}
J.~Chalopin, S.~Das, A.~Labourel, and E.~Markou.
\newblock Black hole search with finite automata scattered in a synchronous
  torus.
\newblock In {\em Proc of 25th International Symposium on Distributed
  Computing}, pages 432--446, 2011.

\bibitem{ChDLM13}
J.~Chalopin, S.~Das, A.~Labourel, and E.~Markou.
\newblock Tight bounds for black hole search with scattered agents in a
  synchronous ring.
\newblock {\em Theoretical Computer Science}, 509:70--85, 2013.

\bibitem{CzKMP06}
J.~Czyzowicz, D.~Kowalski, E.~Markou, and A.~Pelc.
\newblock Complexity of searching for a black hole.
\newblock {\em Fundamenta Informaticae}, 71:229--242, 2006.

\bibitem{CzKMP07}
J.~Czyzowicz, D.~Kowalski, E.~Markou, and A.~Pelc.
\newblock Searching for a black hole in synchronous tree networks.
\newblock {\em Combinatorial Probabilistic Computing}, 16(4):595--619, 2007.

\bibitem{DaDPP19}
S.~Das, G.A. {Di Luna}, L.~Pagli, and G.~Prencipe.
\newblock Compacting and grouping mobile agents on dynamic rings.
\newblock In {\em 15th Annual Conference on Theory and Applications of Models
  of Computation}, pages 114--133, 2019.

\bibitem{d2013exploring}
M.~D'Emidio, D.~Frigioni, and A.~Navarra.
\newblock Exploring and making safe dangerous networks using mobile entities.
\newblock In {\em Ad--hoc, Mobile, and Wireless Network}, pages 136--147.
  Springer, 2013.

\bibitem{DiL19}
G.A. {Di Luna}.
\newblock {\em Mobile Agents on Dynamic Graphs, {\em Chapter 20 of
  \cite{FlPS19}}}.
\newblock Springer, 2019.

\bibitem{briefdiluna}
G.A. {Di Luna} and R.~Baldoni.
\newblock Brief announcement: Investigating the cost of anonymity on dynamic
  networks.
\newblock In {\em 34th Symposium on Principles of Distributed Computing}, pages
  339--341, 2015.

\bibitem{DiLDFS20}
G.A. {Di~Luna}, S.~Dobrev, P.~Flocchini, and N.~Santoro.
\newblock Distributed exploration of dynamic rings.
\newblock {\em Distributed Computing}, 33:41--67, 2020.

\bibitem{DiLFPPSV20}
G.A. {Di Luna}, P.~Flocchini, L.~Pagli, G.~Prencipe, N.~Santoro, and
  G.~Viglietta.
\newblock Gathering in dynamic rings.
\newblock {\em Theoretical Computer Science}, 811:79--98, 2020.

\bibitem{dobrev2013exploring}
S.~Dobrev, P.~Flocchini, R.~Kr{\'a}lovi{\v{c}}, and N.~Santoro.
\newblock Exploring an unknown dangerous graph using tokens.
\newblock {\em Theoretical Computer Science}, 472:28--45, 2013.

\bibitem{Dobrev2002}
S.~Dobrev, P.~Flocchini, G.~Prencipe, and N.~Santoro.
\newblock Searching for a black hole in arbitrary networks: optimal mobile
  agents protocols.
\newblock {\em Distributed Computing}, 19(1):1--35, 2006.

\bibitem{DobrevFPS07}
S.~Dobrev, P.~Flocchini, G.~Prencipe, and N.~Santoro.
\newblock Mobile search for a black hole in an anonymous ring.
\newblock {\em Algorithmica}, 48(1):67--90, 2007.

\bibitem{dobrev2007locating}
S.~Dobrev, N.~Santoro, and W.~Shi.
\newblock Locating a black hole in an un--oriented ring using tokens: The case
  of scattered agents.
\newblock In {\em 13th International Euro--Par Conference European Conference
  on Parallel and Distributed Computing}, pages 608--617. Springer, 2007.

\bibitem{4228188}
S.~{Dobrev}, N.~{Santoro}, and W.~{Shi}.
\newblock Scattered black hole search in an oriented ring using tokens.
\newblock In {\em 25th IEEE International Parallel and Distributed Processing
  Symposium}, pages 1--8, 2007.

\bibitem{ErHK15}
T.~Erlebach, M.~Hoffmann, and F.~Kammer.
\newblock On temporal graph exploration.
\newblock In {\em 42nd International Colloquium on Automata, Languages, and
  Programming}, pages 444--455, 2015.

\bibitem{ErS20}
T.~Erlebach and J.T. Spooner.
\newblock A game of cops and robbers on graphs with periodic edge-connectivity.
\newblock In {\em 46th International Conference on Current Trends in Theory and
  Practice of Informatics}, pages 64--75, 2020.

\bibitem{FIS12}
P.~Flocchini, D.~Ilcinkas, and N.~Santoro.
\newblock Ping pong in dangerous graphs: optimal black hole search with
  pebbles.
\newblock {\em Algorithmica}, 62(3-4):1006--1033, 2012.

\bibitem{Flocchini2009}
P.~Flocchini, M.~Kellett, P.~Mason, and N.~Santoro.
\newblock Map construction and exploration by mobile agents scattered in a
  dangerous network.
\newblock In {\em 28th IEEE International Parallel and Distributed Processing
  Symposium}, pages 1--10, 2009.

\bibitem{FlKMS12}
P.~Flocchini, M.~Kellett, P.~Mason, and N.~Santoro.
\newblock Searching for black holes in subways.
\newblock {\em Theory of Computing Systems}, 50(1):158--184, 2012.

\bibitem{FlMS13}
P.~Flocchini, B.~Mans, and N.~Santoro.
\newblock On the exploration of time-varying networks.
\newblock {\em Theoretical Computer Science}, 469:53--68, 2013.

\bibitem{FlPS19}
P.~Flocchini, G.~Prencipe, and N.~{Santoro {\em (Eds.)}}.
\newblock {\em Distributed Computing by Mobile Entities}.
\newblock Springer, 2019.

\bibitem{GoFMS19}
T.~Gotoh, P.~Flocchini, T.~Masuzawa, and N.~Santoro.
\newblock Tight bounds on distributed exploration of temporal graphs.
\newblock In {\em 23rd International Conference on Principles of Distributed
  Systems}, pages 22:1--22:16, 2019.

\bibitem{HaeK12}
B.~Haeupler and F.~Kuhn.
\newblock Lower bounds on information dissemination in dynamic networks.
\newblock In {\em 26th Int. Symp. on Distributed Computing}, pages 166--180,
  2012.

\bibitem{IlKW14}
D.~Ilcinkas, R.~Klasing, and A.M. Wade.
\newblock Exploration of constantly connected dynamic graphs based on cactuses.
\newblock In {\em 21st International Colloquium Structural Information and
  Communication Complexity}, pages 250--262, 2014.

\bibitem{IlcinkasW18}
D.~Ilcinkas and A.M. Wade.
\newblock Exploration of the {T}--interval--connected dynamic graphs: the case
  of the ring.
\newblock {\em Theory of Computing Systems}, 62(5):1144--1160, 2018.

\bibitem{JatYG14}
R.~Jathar, V.~Yadav, and A.~Gupta.
\newblock Using periodic contacts for efficient routing in delay tolerant
  networks.
\newblock {\em Ad Hoc \& Sensor Wireless Networks}, 22(1,2):283--308, 2014.

\bibitem{KMRS07}
R.~Klasing, E.~Markou, T.~Radzik, and F.~Sarracco.
\newblock Hardness and approximation results for black hole search in arbitrary
  networks.
\newblock {\em Theor. Comput. Sci.}, 384(2-3):201--221, 2007.

\bibitem{KMRS08}
R.~Klasing, E.~Markou, T.~Radzik, and F.~Sarracco.
\newblock Approximation bounds for black hole search problems.
\newblock {\em Networks}, 52(4):216--226, 2008.

\bibitem{KuhLoO11}
F.~Kuhn, T.~Locher, and R.~Oshman.
\newblock Gradient clock synchronization in dynamic networks.
\newblock {\em Theory of Computing Systems}, 49(4):781--816, 2011.

\bibitem{KuhLyO10}
F.~Kuhn, N.~Lynch, and R.~Oshman.
\newblock Distributed computation in dynamic networks.
\newblock In {\em 42nd ACM Symposium on Theory of Computing}, pages 513--522,
  2010.

\bibitem{m2020live}
S.~Mandal, A.~Rahaman Molla, and W.K.~Moses Jr.
\newblock Live exploration with mobile robots in a dynamic ring, revisited,
  2020.
\newblock \href {http://arxiv.org/abs/2001.04525} {\path{arXiv:2001.04525}}.

\bibitem{markou2012identifying}
E.~Markou.
\newblock Identifying hostile nodes in networks using mobile agents.
\newblock {\em Bulletin of the European Association for Theoretical Computer
  Science}, 108:93--129, 2012.

\bibitem{opodis12}
E.~Markou and M.~Paquette.
\newblock Black hole search and exploration in unoriented tori with synchronous
  scattered finite automata.
\newblock In {\em 14th International Conference on Principles of Distributed
  Systems}, pages 239--253, 2012.

\bibitem{MaS19}
E.~Markou and W.~Shi.
\newblock {\em Dangerous Graphs, {\em Chapter 18 of \cite{FlPS19}}}.
\newblock Springer, 2019.

\bibitem{Michail2014}
O.~Michail and P.G. Spirakis.
\newblock Traveling salesman problems in temporal graphs.
\newblock {\em Theoretical Computer Science}, 634:1--23, 2016.

\bibitem{OdW05}
R.~O'Dell and R.Wattenhofer.
\newblock Information dissemination in highly dynamic graphs.
\newblock In {\em Joint Workshop on Foundations of Mobile Computing}, pages
  104--110, 2005.

\bibitem{shi2018Token}
M.~Peng, W.~Shi, and J.~Corriveau.
\newblock Repairing faulty nodes and locating a dynamically spawned black hole
  search using tokens.
\newblock In {\em 6th IEEE Conference on Communications and Network Security},
  pages 136--145. IEEE, 2018.

\end{thebibliography}

\end{document}